\crefname{assump}{Assumption}{Assumptions}
\crefname{hypothesis}{Hypothesis}{Hypotheses}
\newtheoremstyle{plain}
{10pt}   
{10pt}   
{\itshape}  
{0pt}       
{\bfseries} 
{.}         
{5pt plus 1pt minus 1pt} 
{}
\newtheoremstyle{definition}
{10pt}   
{10pt}   
{}  
{0pt}       
{\bfseries} 
{.}         
{5pt plus 1pt minus 1pt} 
{}
\renewenvironment{proof}[1][\proofname]{\par
  \vspace{-5pt}
  \pushQED{\qed}%
  \normalfont
  \topsep0pt \partopsep0pt 
  \trivlist
  \item[\hskip\labelsep
  \itshape
  #1\@addpunct{.}]\ignorespaces
}{%
  \popQED\endtrivlist\@endpefalse
  \addvspace{5pt plus 0pt} 
}
\def\@endtheorem{\endtrivlist}
\theoremstyle{plain}
\crefname{thm}{Theorem}{Theorems}
\Crefname{thm}{Theorem}{Theorems}
\newtheorem{lemma}{Lemma}
\crefname{lemma}{Lemma}{Lemmas}
\crefname{proposition}{Proposition}{Propositions}
\newtheorem{dummy}{Dummy}
\newtheorem{corollary}{Corollary}[dummy]
\newtheorem*{rep@theorem}{\rep@title}
\newcommand{\newreptheorem}[2]{%
  \newenvironment{rep#1}[1]{%
    \def\rep@title{#2 \ref{##1}}%
    \begin{rep@theorem}}%
    {\end{rep@theorem}}}
\newenvironment{theorem}{\thm}{\endthm}
\theoremstyle{definition}
\theoremstyle{definition}
\newcolumntype{L}[1]{>{\hsize=#1\hsize\raggedright\arraybackslash}X}%
\newcolumntype{R}[1]{>{\hsize=#1\hsize\raggedleft\arraybackslash}X}%
\newcolumntype{C}[1]{>{\hsize=#1\hsize\centering\arraybackslash}X}%
\newcommand{\ra}[1]{\renewcommand{\arraystretch}{#1}}  
\newcommand{\ketbra}[1]{\ket{#1} \!\! \bra{#1}}
\newcommand{\Ic}{\ensuremath{\mathcal{I}}}
\newcommand{\Jc}{\ensuremath{\mathcal{J}}}
\newcommand{\Hc}{\ensuremath{\mathcal{H}}}
\newcommand{\Sc}{\ensuremath{\mathcal{S}}}
\newcommand{\Ebb}{\ensuremath{\mathbb{E}}}
\newcommand{\N}{\ensuremath{\mathbb{N}}}
\newcommand{\R}{\ensuremath{\mathbb{R}}}
\newcommand{\ot}{\otimes}
\newcommand{\smax}{\ensuremath{s_{\operatorname{max}}}}
\newcommand{\smin}{\ensuremath{s_{\operatorname{min}}}}
\newcommand{\ds}{\ensuremath{\Delta s}}
\newcommand{\half}{\ensuremath{\frac12}}
\DeclareMathOperator{\Tr}{Tr}
\DeclareMathOperator{\Binom}{Binom}
\begin{document}
\title{Witnessing Entanglement in Experiments with Correlated Noise}
\date{\today}
\author{Bas Dirkse}
\affiliation{QuTech, Delft University of Technology, The Netherlands}
\affiliation{Kavli Institute of Nanoscience, Delft University of Technology, The Netherlands}
\affiliation{QuSoft, University of Amsterdam, The Netherlands}
\author{Matteo Pompili}
\affiliation{QuTech, Delft University of Technology, The Netherlands}
\affiliation{Kavli Institute of Nanoscience, Delft University of Technology, The Netherlands}
\author{Ronald Hanson}
\affiliation{QuTech, Delft University of Technology, The Netherlands}
\affiliation{Kavli Institute of Nanoscience, Delft University of Technology, The Netherlands}
\author{Michael Walter}
\affiliation{QuSoft, University of Amsterdam, The Netherlands}
\affiliation{Korteweg-de Vries Institute for Mathematics, Institute for Theoretical Physics, Institute for Logic, Language, and Computation, University of Amsterdam, The Netherlands}
\author{Stephanie Wehner}
\affiliation{QuTech, Delft University of Technology, The Netherlands}
\affiliation{Kavli Institute of Nanoscience, Delft University of Technology, The Netherlands}
\makeatletter
\hypersetup{pdftitle={\@title},pdfauthor={Bas Dirkse, Matteo Pompili, Ronald Hanson, Michael Walter, Stephanie Wehner}}
\makeatother

\begin{abstract}
The purpose of an entanglement witness experiment is to certify the creation of an entangled state from a finite number of trials.
The statistical confidence of such an experiment is typically expressed as the number of observed standard deviations of witness violations.
This method implicitly assumes that the noise is well-behaved so that the central limit theorem applies.
In this work, we propose two methods to analyze witness experiments where the states can be subject to arbitrarily correlated noise.
Our first method is a \emph{rejection experiment}, in which we certify the creation of entanglement by rejecting the hypothesis that the experiment can only produce separable states.
We quantify the statistical confidence by a~$p$-value, which can be interpreted as the likelihood that the observed data is consistent with the hypothesis that only separable states can be produced.
Hence a small~$p$-value implies large confidence in the witnessed entanglement.
The method applies to general witness experiments and can also be used to witness genuine multipartite entanglement.
Our second method is an \emph{estimation experiment}, in which we estimate and construct confidence intervals for the average witness value.
This confidence interval is statistically rigorous in the presence of correlated noise.
The method applies to general estimation problems, including fidelity estimation.
To account for systematic measurement and random setting generation errors, our model takes into account device imperfections and we show how this affects both methods of statistical analysis.
Finally, we illustrate the use of our methods with detailed examples based on a simulation of NV centers.
\end{abstract}

\maketitle

\section{Introduction}\label{sec:intro}
Entanglement is a fundamental property of quantum mechanical systems~\cite{Horodecki2009} and an important resource for many quantum information processing tasks.
In quantum computing, coherently creating entanglement between several qubits is necessary for computational speedups~\cite{Vidal2003,Nielsen2010,Horodecki2009}.
In quantum networks, remote entanglement is an essential resource for quantum cryptography~\cite{Shenoy-Hejamadi2017,Broadbent2016,Pirandola2019} and distributed computing applications~\cite{Ben-Or2005,Tani2012,Jozsa2000}.
Entanglement also plays a crucial role in quantum sensing and metrology~\cite{Bollinger1996,Gottesman2012,komar2014}, enabling more precise measurement of physical quantities.
With the rapid experimental advances in the manipulation and control of quantum systems, much progress had been made towards the generation of entangled states in various physical platforms~\cite{Bourennane2004,Filgueiras2012,Song2017,Friis2018,Zhong2018,Wang2018,Humphreys2018}.
Yet, the creation of high-quality many-body entanglement is still a significant challenge.
It is therefore important to have good tools to certify the creation of entanglement. The main tools used for this purpose are entanglement witnesses.

An \emph{entanglement witness} is an observable~$W$ on a quantum system that can certify entanglement of a state~$\rho^*$ under investigation~\cite{Guhne2009}.
By definition, the witness~$W$ satisfies
\begin{equation}\label{eq:witness_definition}
  \Tr[\rho W] \geq 0 \quad \forall \rho \in \Sc,
\end{equation}
for all separable states~$\rho \in \Sc$.
As a consequence, it can be used to certify entanglement:
If a state~$\rho^*$ has negative witness expectation value,~$\Tr[W \rho^*] < 0$, then it is necessarily entangled.
If the expectation value is non-negative, the test is inconclusive (the state can either be separable or not).
The witness method applies more generally than just for separating entangled from separable states.
If~$\Sc$ is an arbitrary convex set of states and~$\rho^*\not\in\Sc$, then there always exists a witness~$W$ such that \cref{eq:witness_definition} holds, while~$\Tr[W \rho^*] < 0$.
For example, a witness can be used to certify that states are genuinely multipartite entangled.
Geometrically, the witness~$W$ can be interpreted as a hyperplane that separates the convex set~$\Sc$ from the state~$\rho^*\notin\Sc$.
This is illustrated in \cref{fig:witnessschematicv2}.
In general, finding an appropriate witness~$W$ for a state~$\rho^*$ is a difficult problem that has been studied extensively in literature~\cite{Jungnitsch2011,Toth2009,Ryu2012}.
For the remainder of this article, we will assume that~$W$ is chosen and fixed.

\begin{figure}
	\centering
	\includegraphics[scale=1]{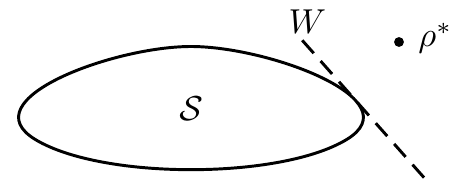}
	\caption{Geometric interpretation of a witness~$W$ as a hyperplane that separates the state~$\rho^*$ from the convex set~$\Sc$. If~$\Sc$ is the set of separable states, then~$W$ certifies that~$\rho^*$ is entangled.\label{fig:witnessschematicv2}}
\end{figure}

Often, a witness~$W$ is a non-local observable of the system for which entanglement is to be certified.
Such measurements are typically hard to perform, particularly in a network setting.
Therefore, in experiments,~$W$ is usually decomposed into a sum of locally measurable observables which are then measured individually on the constituent subsystems.
The \emph{witness expectation value}~$\Tr[W\rho^*]$ is then the sum of the expectation values of the locally measurable observables.
Each of these expectation values can then only be estimated to some finite precision, since in any experiment only a finite number~$n$ of data points can be collected.
As a consequence, the \emph{witness estimate}~$\hat{w}_n$ obtained from~$n$ measurement outcomes can differ from the true value~$\Tr[W \rho^*]$.
Therefore, it is an important question how to quantify the confidence in the experimentally determined estimate~$\hat{w}_n$.

\subsection{Prior work and motivation}\label{subsec:prior}
In many experiments, the confidence in the estimate~$\hat{w}_n$ of the true witness expectation value~$\Tr[W \rho^*]$ is expressed by the standard error~$\hat{\sigma}$ (the empirical standard deviation)~\cite{Bourennane2004,Filgueiras2012,Song2017,Friis2018,Zhong2018,Wang2018,Humphreys2018,Dai2014}.
These experiments typically claim the certification of entanglement if the estimate~$\hat{w}_n$ is a number of~$\hat{\sigma}$'s below zero.
This approach is simple and pragmatic, but may suffer from statistical and practical challenges (see~\cite{Zhang2011} for similar objections to using this method for quantifying Bell violations).
We give a concrete example in \cref{subsection:broken} (see \cref{fig:fid_hist_broken}) where this approach could potentially be problematic.

Certification of entanglement by the number of sigma's of witness violation is most easily justified under the assumption that in each round~$i$ the state~$\rho_i$ is independent and identically distributed (iid assumption).
This is equivalent to assuming each round a fixed state~$\rho^*$ is produced.
Under this assumption, the estimate~$\hat{w}_n$ is considered a realization of a Gaussian random variable~$\hat{W}_n$ with mean $\Ebb[\hat{W}_n] = \Tr[W \rho^*]$ and standard deviation~$\sigma\sim\frac{1}{\sqrt{n}}$ (for sufficiently large~$n$).
This is justified by the central limit theorem.
The empirically obtained~$\hat{w}_n$ and~$\hat{\sigma}$ are appropriate estimates of the mean~$\Tr[W \rho^*]$ and standard deviation~$\sigma$.
Thus, the reported~$\hat{w}_n \pm \hat{\sigma}$ is a complete and accurate characterization of the distribution (and hence leads to meaningful confidence intervals etc.)

However, if the states~$\rho_1, \dots, \rho_n$ produced in an~$n$ round experiment are not iid, several difficulties may arise.
This starts with what is even to be estimated.
Most natural is to estimate the \emph{average witness value}
\begin{equation}
\label{eq:avg_witness}
\braket{W}_n := \frac{1}{n} \sum_{i=1}^n \Tr[\rho_i W] = \Tr[(\frac{1}{n} \sum_{i=1}^n \rho_i) W],
\end{equation}
which can also be interpreted as the witness expectation value of the average state~$\rho^* =  \frac{1}{n} \sum_{i=1}^n \rho_i$.
There are versions of the central limit theorem that relax the iid assumption. They can be used to argue that the estimate~$\hat{w}_n$ is still an observation of a Gaussian random variable~$\hat{W}_n$ with mean~$\Ebb[\hat{W}_n] = \braket{W}_n$ for sufficiently large~$n$ (Gaussian assumption).
But in practical experiments it is not always clear when these theorems can be applied, so that the convergence of~$\hat{W}_n$ to a Gaussian with mean~$\braket{W}_n$ is not guaranteed for any~$n$.
Moreover, under non-iid states~$\rho_i$ it is unclear whether the observed standard error~$\hat{\sigma}$ is an appropriate estimator of the true standard deviation~$\sigma$.
Finally, even if the central limit theorem applies, the convergence of~$\hat{W}_n$ to a normal distribution can be extremely slow in~$n$ (especially when the witness violation is large \cite{Zhang2011}), so that too small~$n$ will still cause~$\hat{W}_n$ not to be Gaussian.
Hence, in practice it can be difficult to justify the Gaussian assumption.

When the iid assumption (or more generally the Gaussian assumption) fails, several different problems can arise.
First, it can lead to overestimation of the confidence in the witness violation based on the observed data.
This happens when~$\Pr[\hat{W}_n < 0]$ is smaller under the true distribution of~$\hat{W}_n$ than under the estimated distribution, based on the observed~$\hat{w}_n$ and standard error~$\hat{\sigma}$ (i.e., based on the Gaussian assumption).
Second, the reported numbers~$\hat{w}_n \pm \hat{\sigma}$ lack rigorous interpretation.
The empirical standard deviation~$\hat{\sigma}$ may no longer be an appropriate estimate of the true standard deviation~$\sigma$. Moreover, the mean and standard deviation do not necessarily describe the distribution of~${W}_n$ completely (if~$\hat{W}_n$ is not Gaussian, the true distribution may depend on more than 2 free parameters).
Because of these two effects, the number of~$\hat{\sigma}$'s of witness violation in relation to the estimate~$\hat{w}_n$ will depend on the way~$\hat{W}_n$ fails to be Gaussian or on how~$\hat{\sigma}$ fails to estimate~$\sigma$.
This also makes the results between different experiments and physical platforms become incomparable, because the actual distribution of~${W}_n$ may be influenced by experimental parameters, such as the distribution of~$\rho_i$, measurement settings, hardware imperfections or the choice of witness. Hence the number of~$\hat{\sigma}$'s of violation may also be influenced by these experimental details.

Finally, we note that measurement noise (systematic errors) can also lead to overestimation of the confidence in the witness violation.
This is because any measurement noise leads to the imperfect implementation~$\tilde{W}$ of the witness~$W$.
In case that~$\braket{\tilde{W}}_n < \braket{W}_n$, this again leads to an overconfidence in the witness violation.
In fact, it can even happen that~$\braket{\tilde{W}}_n < 0$ while~$\braket{W}_n \geq 0$, leading to falsely concluding entanglement~\cite{Rosset2012}. This overconfidence persists independent of the number of samples~$n$ taken, since the error is systematic.

\subsection{Our contribution}
We propose a new method of carrying out and analyzing witness experiments that addresses all of the aforementioned issues.
This method applies without any assumption on the states produced by the experiment (i.e., they may be arbitrarily correlated). Moreover, it has a simple and clear interpretation, and yields figures of merit that are easily comparable between different experiments.
Finally, our method takes into account imperfections of the measurement device and random setting generation, avoiding systematic overestimation of confidence.

In our method, we view the source of the states as a black box that produces a quantum state~$\rho_i$ on demand. The source can produce multiple states sequentially. All of these states are modeled by random variables that can be arbitrarily distributed and which may depend arbitrarily on the history of the experiment.
That is, we allow the source to have memory.
We now model the experiment as a sequential process of~$i = 1,\dots,n$ rounds. In each round, a state~$\rho_i$ and a random measurement setting (determined by the decomposition of~$W$ into locally measurable observables) are requested. The appropriate measurement is performed on the state and the outcomes are recorded for data processing. In this model of the experiment, we additionally allow for arbitrary bounded-strength noise in the measurement devices and random measurement setting generator. That is, we assume that the noise in these devices is smaller than a quantified maximum amount.
Witnessing entanglement without any assumptions on the devices, an area known as self-testing \cite{Supic2018,Supic2019}, is based on Bell-type inequalities, which are typically tighter than witness inequalities (and therefore harder to violate experimentally).
Thus, our model is very general and has minimal assumption to be as widely applicable as possible for analyzing witness experiments.

The main contribution of this work is two different types of data analysis for witness experiments. Both methods are valid under these extremely general assumptions (in particular the state assumptions). In the first method, we quantify the confidence that the source has \emph{the capability to produce an entangled state} (i.e., a state outside~$\Sc$).
This means that we rigorously determine the confidence that~$\Tr[\rho_i W] < 0$ for at least one~$i$.
We do this by applying the framework of hypothesis testing, in which a null hypothesis is to be rejected based on the observed evidence in experiment.
In witness experiments, the null hypothesis is that the source only produces separable states (i.e.~$\forall i: \, \rho_i \in \Sc$).
To reject this null hypothesis means that at least one entangled state must have been produced by the source (i.e.~$\exists i: \, \rho_i \notin \Sc$).
The figure of merit to quantify the confidence in rejecting the null hypothesis is the \emph{$p$-value}.
Intuitively, the~$p$-value is the probability of obtaining data at least as extreme as the observed data in an experiment \emph{if the experiment were governed by the null hypothesis}, i.e., if the source was only able to produce separable states.
A small~$p$-value is then considered strong evidence against the null hypothesis: the observed data is very unlikely to be explained by a model that includes the null hypothesis.
If~$p$ is smaller than some significance level~$\alpha$, the null hypothesis is rejected and we conclude that entanglement must have been produced at least once with confidence~$1-\alpha$. We shall refer to this entire procedure as a \emph{witness rejection experiment} and the data analysis as the \emph{rejection analysis}. This method is different from the standard methods, in the sense that here we can make a statement about \emph{at least one} state, whereas typically one makes a statement about the \emph{average produced states} (e.g. when estimating the average witness value~$\braket{W}_n$ as defined in~\cref{eq:avg_witness}).
We emphasize that our method and analysis applies in complete generality to arbitrary witness experiments (with an arbitrary convex set of states~$\Sc$ and witness~$W$).
For concreteness we will focus in this work on entanglement witnessing, but see \cref{subsec:applicability} for other examples.

In the second method we aim to estimate the \emph{average witness value}~$\braket{W}_n$ and we provide a confidence interval around this estimate.
The main contribution of our confidence interval method is that it is valid without any assumptions on the produced states and therefore always applies.
We will refer to this method as the witness \emph{estimation method} and the data analysis as the \emph{estimation analysis}, since the objective here is to estimate~$\braket{W}_n$.
This method is generally applicable to estimate any Hermitian observable, not just witness operators (i.e., it is not necessary that there is a set~$\Sc$ such that \cref{eq:witness_definition} holds).
Thus our estimation method even applies to fidelity estimation and other estimation experiments.

The contributions of our work are presented in the following way.
First, we formulate the round-by-round witness experiment as an entanglement witness game, expand on this description and present a formal model that governs the experiment in \cref{sec:formulation_model}. In the model description we incorporate imperfections in the measurement device and random setting generation in a quantitative way.
Based on this model, we give a step-by-step description how to set the parameters and carry out the witness experiment in \cref{subsec:experiment}.
Then, we show how to calculate a witness correction from the quantification of the imperfect experimental devices (\cref{subsec:theorem_gamma}, \cref{thm:gamma}). 
It is used in both the rejection and estimation experiments to account for systematic device errors and prevent possible overconfidence in the experiment outcomes (rejection and estimation).
Next, we provide the main result to perform the rejection analysis. This is an easy-to-compute bound on the~$p$-value (\cref{subsec:theorem_pvalue}, \cref{thm:pvalue}). The bound is simply evaluated from the measurement outcomes.
By comparing this bound to a predetermined significance level~$\alpha$, we can determine whether the experiment rejects the null hypothesis with statistical significance. This allows us to rigorously conclude that the source has the capability to produce entangled states with confidence~$1-\alpha$.
Finally, we provide the main result to perform the estimation analysis. This is a direct method to compute and estimate and confidence interval for the average witness value~$\braket{W}_n$ (\cref{subsec:theorem_CI}, \cref{thm:CI}). The estimate and this confidence interval are also directly and easily computable from the observed measurement data.
We illustrate these contributions with several detailed numerical examples in \cref{sec:example}. Two of our examples are based on the simulation of Nitrogen Vacancy centers. The focus of these examples is to detect genuine multipartite entanglement between three qubits (i.e., not a convex combination of biseparable states, states separable over some bipartition of the three subsystems). Our third example (\cref{fig:fid_hist_broken}) shows an explicit case where the Gaussian assumption fails to be applicable and where our methods are still applicable.

The technical ingredients of this work are summarized as follows. Both results are obtained by viewing the witness experiment as a game~\cite{Chen2017}, similar to Bell tests being viewed as nonlocal games. This allows us to construct (super)martingale sequences and use a concentration inequality to upper bound the tail probabilities (we use Bentkus' inequality~\cite{Bentkus2004,Bentkus2006}, which is slightly tighter than the more commonly used Hoeffding-Azuma inequality~\cite{Elkouss2016}). This method is inspired by the analysis of Bell inequalities of Ref.~\cite{Elkouss2016}. By choosing the appropriate (super)martingale sequences, we obtain the~$p$-value bound for the rejection analysis and the confidence interval for the estimation analysis.

\subsection{Relation to other work}
In this work, we model the measurement noise as general as possible via the POVM formalism and determine a witness correction from this model using analytical methods to guarantee we never overestimate the confidence. Our measurement model can be viewed as a generalization of the model studied in Ref.~\cite{Rosset2012}, where imperfect qubit measurements are modeled by Bloch vectors that are misaligned by at most some fixed angle.
In Ref.~\cite{Rosset2012} a witness correction factor is computed under this more restricted noise model. However, they compute the witness correction via numerical optimization (see \cref{subsec:gamma_other_method} for why we opt for an analytical bound and how the witness correction factor can alternatively be calculated using numerical optimization for our noise model).

The witness rejection experiment and analysis is new for entanglement witness experiments, but is inspired by the use of this technique for testing local realism through nonlocal games~\cite{Elkouss2016}. We emphasize that this rejection method aims to rigorously certify that a machine \emph{has the capability of} producing entanglement. This is different than typical witness experiments in literature where the objective is to estimate the average witness value~\cite{Bourennane2004,Filgueiras2012,Song2017,Friis2018,Zhong2018,Wang2018,Humphreys2018,Dai2014}.
The estimation method we study here also aims to estimate the average witness value. The main difference is that most works implicitly assume that the states are iid (or at least that the estimator is Gaussian by some type of central limit theorem argument), whereas our work applies in the most general case with arbitrary noise on the state. This makes our method more generally applicable.

Closely related to the confidence interval we construct here, Ref.~\cite{Ruan2020} provides a method to construct a Bayesian credible interval for an estimate of the average fidelity of experimentally prepared states to a fixed entangled state (note that is equivalent to a particular choice of witness). The model is similar in the sense that the states can be arbitrarily correlated, but the estimation objective is different: the goal of Ref.~\cite{Ruan2020} is to estimate the average fidelity of the unmeasured states from the measurement of a subset of all available states.
Similarly, Ref.~\cite{Takeuchi2019} derives an efficient method to verify the production of graph states by measuring all but one copy of the state.
In contrast, we measure all available states and only aim to make a statement about all the created states (after the fact). The work in Ref.~\cite{Walter2014} is related to this by giving general lower bounds on the size of a credible regions for quantum parameter estimation.

An alternative method to estimate a property of a quantum system, is by using quantum state tomography to collect measurement data, estimate a figure of merit (fidelity or witness value) and determine a confidence interval~\cite{Faist2016}.
However, this typically requires more measurement data than partial state characterization since the complete state is reconstructed.

Finally, we mention that there is also a way of witnessing entanglement without the need to trust the measurement devices at all (measurement-device-independent entanglement witnessing, MDI-EW)~\cite{Branciard2013,Nawareg2015}.
This, however, requires each party to hold auxiliary local quantum states in each round and perform a joint measurement between the auxiliary quantum state and the quantum state under investigation.
This method has been implemented in an experiment under the iid assumption~\cite{Nawareg2015}.

\section{Formulation and model of witness experiments} \label{sec:formulation_model}
In this section, we will discuss the formulation and modeling of witness experiments.
We will start with a brief review of entanglement witness games as known in the literature in \cref{subsec:preliminaries}.
Next, we will generalize the game formulation to handle two additional things: (1)~ multiple terms in the decomposition of the witness operator may be inferred from a single measurement; and (2)~measurements are allowed to be implemented by arbitrary POVMs. We explain how to do this and introduce notation in \cref{subsec:measurement-settings}.
Finally, in \cref{subsec:model_informal} we give a complete description of the experimental model that underpins our experiment. This includes the characterization of noisy measurement and random setting generation devices.

\subsection{Entanglement witness games}\label{subsec:preliminaries}
In this section, we will recap entanglement witness games from the literature.
We will start from the assumption that a choice of witness~$W$ has been made.
The quantum system under investigation is decomposed into~$m$~subsystems on which local measurements can be performed  (e.g.,~$m=2$ for bipartite entanglement witnessing).
The witness operator then admits a decomposition into locally measurable observables of the form
\begin{equation}
\label{eq:witness-decomposition-settings}
W = cI + \sum_x w_x M_x^{(1)} \otimes \cdots \otimes M_x^{(m)},
\end{equation}
where each~$M_x^{(j)}$ is a locally measurable observable on subsystem~$j$ and where~$x$ runs over the terms in the decomposition. Note that such a decomposition is always possible. A decomposition is minimal if the number of terms over which~$x$ runs is minimal.
In practice, the locally measurable observables~$M_x^{(j)}$ will often be Pauli observables.
The decomposition in \cref{eq:witness-decomposition-settings} is chosen such that each locally measurable observable can be easily measured in the experiment.
Measurement of~$M_x^{(j)}$ yields one of the possible outcomes labeled by~$a_j$ (in the case of Pauli observables, the outcomes are simply~$\pm 1$). We shall denote the vector of all outcomes of the~$m$ subsystems as
\begin{equation}
\mathbf{a} = (a_1,\dots,a_m).
\end{equation}

With this decomposition, an entanglement witness experiment can be formulated as a game~\cite{Chen2017}.
This is similar to how Bell experiments are often formulated as nonlocal games.
See \cref{fig:game} for an illustration of an entanglement witness game.
The game consists of~$n$~rounds.
There are~$m$~players, one for each subsystem.
At the start of each round, each player receives a subsystem of a quantum state~$\rho_i$, as well as a random measurement setting~$X_i$ (we will use the conventional notation of writing random variables as capital letters and their realizations as lowercase letters).
This random setting~$X_i$ dictates which measurements the players should perform on their local subsystems [according to the decomposition \cref{eq:witness-decomposition-POVM}].
Hence, upon receiving~$X_i = x$ in round~$i$, each player~$j$ perform the local measurement labeled by~$x$ and~$j$.
They then report their respective outcomes~$\mathbf{a}$ to a referee, who assigns a score to the round according to
\begin{equation}\label{eq:score_function}
  s(x,\mathbf{a}) = - \frac{w_x}{p_x} \prod_{j=1}^{m} a_j,
\end{equation}
where~$p_x$ is the desired probability of realizing measurement setting~$X_i = x$.
A priori, any choice of~$p_x$ defines a valid game.
However, the choice of~$p_x$ has a significant influence on finite statistics in an experiment.
We suggest a reasonable choice in \cref{eq:px_multisetting} and discuss the issue further in \cref{subsubsec:discussion_px}.
The negative sign in \cref{eq:score_function} is added conform the common convention that games aim to maximize score.
The score can be interpreted as the contribution of round~$i$ to the witness value.
Note that the score itself is a random variable~$S_i := s(X_i, \mathbf{A}_i)$, since it is a function of the random measurement setting~$X_i$ and the random measurement outcomes~$\mathbf{A}_i$.
The score is constructed in such a way that the expected value of the score (in the ideal scenario with perfect measurements and randomness) satisfies
\begin{equation}\label{eq:expected_score_ideal}
  \Tr[\rho_i W] = c - \Ebb[S_i | \rho_i],
\end{equation}
for all possible states~$\rho_i$. Thus, the witness expectation value is affinely related to the expected score of each round.

\begin{figure}
	\centering
	\includegraphics[width=0.95\linewidth]{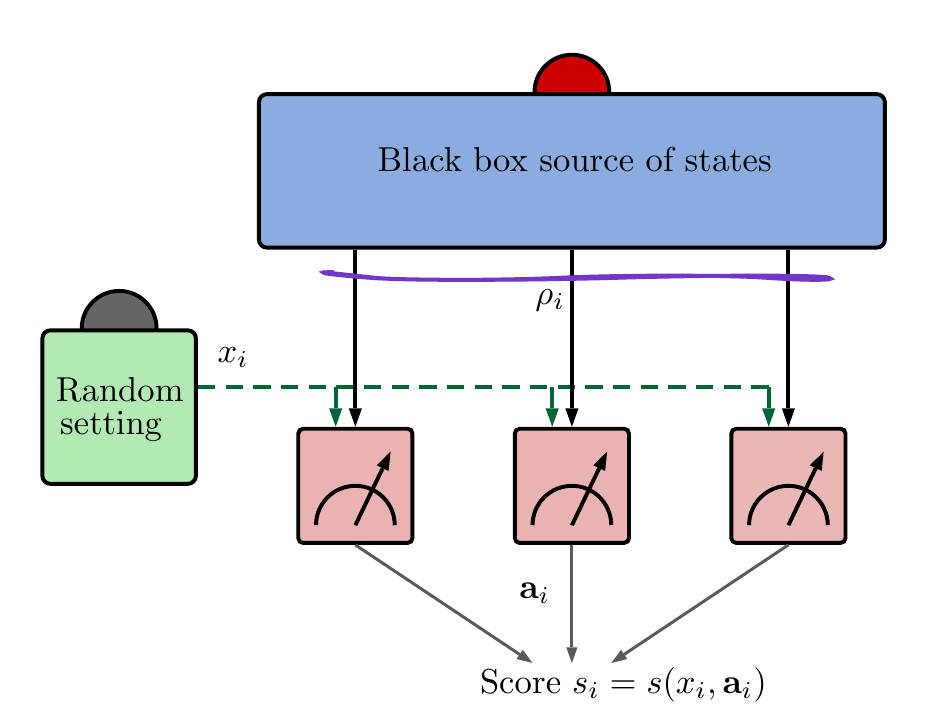}
	\caption{
  Round~$i$ of the entanglement witness game.
  By pressing the red button, a source produces a quantum state~$\rho_i$ and sends its subsystems to the players.
  We model the source as a black box, meaning it can produce a state by an arbitrary process.
  The state in round~$i$ can arbitrarily depend on everything that happened earlier in the experiment, i.e. the source is allowed to show memory effects.
  Then, by pressing the gray button, the random setting generator produces a measurement setting~$x_i$  (almost independently from~$\rho_i$).
  The players each perform a measurement according to setting~$x_i$ and send their outcomes~$\mathbf{a}_i$ to a referee, who computes the score of that round.
  Afterwards, round~$i+1$ starts.\label{fig:game}}
\end{figure}

\subsection{Generalization of the game formulation}\label{subsec:measurement-settings}
In this section, we will expand on the game formulation as discussed in the previous section. In particular, we will make two generalizations. First, we will explain and introduce notation to easily infer the expectation value of multiple terms in the witness decomposition \cref{eq:witness-decomposition-settings} from a single measurement. Doing this typically requires fewer measurements for fixed confidence so it is advantageous to do so when possible. Second, to be as general as possible in our measurement model, we shall allow every local measurement on a individual subsystem to be described by a POVM. Let us make these things more precise.

Sometimes it is not needed to measure all terms in \cref{eq:witness-decomposition-settings} separately~\cite{Bourennane2004,Toth2009}.
For example, with~$m=3$ single-qubit subsystems, Pauli-$Z$ measurements on each subsystem would allow to one infer the expectation values of all operators (omitting the tensor symbol)
\begin{equation}\label{eq:example_observables}
ZZZ, ZZI, ZIZ, IZZ, ZII, IZI, IIZ.
\end{equation}
This holds in general. Measurement of~$m$ non-identity operators on all of the subsystems, would allow one to infer~$2^m - 1$ expectation values.
We shall refer to the non-identity operators that are measured ($ZZZ$ in this example) as the \emph{measurement setting}~\cite{Bourennane2004,Toth2009} and refer to one or more of the possible~$2^m-1$ operators whose expectation value can be computed (operators from \cref{eq:example_observables} in this example) as \emph{observables}.
Throughout the rest of this work, we will denote measurement settings as~$M_x^{(1)} \otimes \cdots \otimes M_x^{(m)}$, where every~$M_x^{(j)} \neq I$ is not the identity, and index them with a subscript~$x$. We will denote observables as~$O_\xi^{(1)} \otimes \cdots \otimes O_\xi^{(m)}$ and index them with the subscript~$\xi$.
Note that~$\xi$ may run over different (more) terms that~$x$.
Using this new notation, the witness decomposition [\cref{eq:witness-decomposition-settings}] is thus written as
\begin{equation}\label{eq:witness-decomposition-observables}
W = cI + \sum_\xi w_\xi O_\xi^{(1)} \otimes \cdots \otimes O_\xi^{(m)}.
\end{equation}
To keep track of which observables (labeled by~$\xi$) are related to which measurement setting (labeled by~$x$), we define~$f(\xi) = x$ if the observable~$O_\xi^{(1)} \otimes \cdots \otimes O_\xi^{(m)}$ can be measured by the measurement setting~$M^{(1)}_x \otimes\cdots\otimes M^{(m)}_x$.
Furthermore, we define~$b(\xi) \in \{0,1\}^m$ as the bitstring of length~$m$ such that
\begin{equation}\label{eq:relation-O-M}
O_\xi^{(1)} \otimes \cdots \otimes O_\xi^{(m)} = (M_{f(\xi)}^{(1)})^{b_1(\xi)} \otimes \cdots \otimes (M_{f(\xi)}^{(m)})^{b_m(\xi)}.
\end{equation}
In this way, each term in \cref{eq:witness-decomposition-observables} is related to a measurement setting from which it can be obtained.
For example, the observables~$IZZ$,~$ZIZ$,~$ZZI$ can all be measured by the setting~$ZZZ$, and the corresponding bitstrings~$b$ are~$011$,~$101$,~$110$, respectively.
Note that if all observables require a different setting, then~$\xi = f(\xi) = x$, $b(\xi) = 11\cdots1$ and $O_\xi^{(j)} = M_x^{(j)}$, thus reducing to the simple case discussed in \cref{subsec:preliminaries}. Using this notation, we can write \cref{eq:witness-decomposition-observables} alternatively as
\begin{equation}\label{eq:witness-decomposition-POVM}
W = cI + \sum_\xi w_\xi \bigotimes_{j=1}^m \left( M_{f(\xi)}^{(j)} \right)^{b_j(\xi)}.
\end{equation}
To allow for the most general model of measurements, we will allow each~$M_x^{(j)}$ in a measurement setting to be measured by a POVM $\{\Pi^{(j),x}_a\}_{a \in \Omega^{(j)}_x}$ with outcomes labeled by~$a$ (which take values in the finite set $\Omega^{(j)}_x$). That is, we will write
\begin{equation}\label{eq:observable_decomposition}
M_x^{(j)} = \sum_{a \in \Omega^{(j)}_x} a \, \Pi^{(j),x}_{a}.
\end{equation}
For a standard measurement of the observable~$M_x^{(j)}$, this decomposition is simply given by the spectral decomposition, so that the POVM elements are the eigenprojections and the outcomes are simply the eigenvalues of $M_x^{(j)}$.
However, this is not the only option: the decomposition is not unique.
In particular, the POVM need not be a projective measurement.
This allows for the modeling of known non-unitary measurement noise.
Suppose for example that we wish to implement the measurement setting~$M_x^{(j)} = Z$, the Pauli $Z$-operator.
Its standard implementation would be by a projective measurement in the $\ket 0$, $\ket 1$-basis.
This corresponds to the decomposition $Z = \ketbra{0} - \ketbra{1}$ in \cref{eq:observable_decomposition}.
However, suppose that we can not perfectly discriminate~$\ket{0}$ from~$\ket{1}$ and you incorrectly obtain the opposite outcome in $1\%$ of the measurements.
Such a situation is modeled, e.g., by the POVM
$
\{ \begin{bsmallmatrix}
0.99 & 0 \\
0 & 0.01
\end{bsmallmatrix},
\begin{bsmallmatrix}
0.01 & 0 \\
0 & 0.99
\end{bsmallmatrix} \}.
$
Nevertheless, this POVM allows us to implement the desired measurement setting if we choose $a = \pm \frac{1}{0.98}$.
Indeed,
\begin{equation}
Z = \frac{1}{0.98}  \begin{bmatrix}
0.99 & 0 \\
0 & 0.01
\end{bmatrix} -
\frac{1}{0.98} \begin{bmatrix}
0.01 & 0 \\
0 & 0.99
\end{bmatrix}.
\end{equation}
Our results take into account the additional statistical uncertainty introduced by non-projective measurements in estimating the expectation value from finite single-shot measurements on the level of single-shot outcomes. Requiring that \cref{eq:observable_decomposition} holds, ensures the expectation value of this non-projective measurement equals the expectation value of observable to be implemented.

\begin{table}
	\caption{Summary of notation to allow multiple observables per measurement setting.
		The objects in the top half relate only to the choice of witness and its decomposition into observables (labeled by~$\xi$).
		The objects in the bottom half relate to the implementation of the witness using measurement settings (labeled by~$x$), which are implemented by POVMs.}
	\ra{1.5}
	\centering
	\begin{tabularx}{\linewidth}{L{0.44} L{0.50}  L{1}}
		\toprule
		Object & Symbol(s) & Definition / constraint 	\\
		\midrule
		Witness &$W$ &$\Tr[\rho W] \geq 0, \, \forall \rho \in \Sc$ \\
		Observable &$O_\xi^{(j)}$ & \multirow{2}{*}{$W = cI + \sum_\xi w_\xi \bigotimes_{j=1}^m O_\xi^{(j)}$}  \\
		Weight & 	$w_\xi, c$ &   \\
		\midrule
		Setting &$M_x^{(j)}, f, b$ &$O_\xi^{(j)} = (M_{f(\xi)}^{(j)})^{b_j(\xi)}$ \\
		POVM &$\{\Pi^{(j),x}_{a}\}_{a\in\Omega_x^{(j)}}$ &$M_x^{(j)} = \sum_{a\in\Omega_x^{(j)}} a \, \Pi^{(j),x}_{a}$ \\
		Distribution &$p_x$ & \cref{eq:px_multisetting} recommended \\
		\bottomrule
	\end{tabularx}
	\label{tab:notation}
\end{table}

With the generalizations just discussed, the score function in \cref{eq:score_function} needs to be generalized to
\begin{equation}\label{eq:score_function_xi}
s(x,\mathbf{a}) =  - \frac{1}{p_x} \sum_{\xi: f(\xi)=x}  w_\xi \prod_{j=1}^{m} (a_j)^{b_j(\xi)},
\end{equation}
The score now sums over all observables~$\xi$ obtained from the same setting~$x$.
The fact that the outcomes are raised to the power~$b_j(\xi)$ reflects the fact that~$O_\xi^{(j)} = I$ if~$b_j(\xi)=0$ (in which case all outcomes are $1$).
Note that the weights~$w_\xi$ are labeled by~$\xi$ as they appear in \cref{eq:witness-decomposition-observables}, whereas the probabilities~$p_x$ are labeled by the measurement setting~$x$.
This generalized score function still satisfies \cref{eq:expected_score_ideal} and is related to the witness decomposition \cref{eq:witness-decomposition-POVM} via (see \cref{app:game} for details)
\begin{equation}\label{eq:relation_W_to_score}
W = cI - \sum_x p_x \sum_{\mathbf{a}}  s(x, \mathbf{a})  (\Pi^{(1),x}_{a_1} \otimes \cdots \otimes \Pi^{(m),x}_{a_m}).
\end{equation}
We give an overview of the introduced notation in \cref{tab:notation}.

\subsection{Model of the experiment}\label{subsec:model_informal}
Above we explained that an entanglement witness experiment can naturally be interpreted as a game carried out by~$m$~players in~$n$~rounds (cf.~\cref{fig:game}).
We now summarize the key properties of our model in more detail -- see \cref{tab:model_main}.
A mathematically rigorous formulation is given in \cref{app:model}.

\Cref{assump_main:sequential} states that the experiment is performed sequentially.
Importantly, we do \emph{not} assume that the states~$\rho_i$ are independently and identically distributed (iid).
In fact, we allow that the~$i$-th round depends arbitrarily on the previous rounds.
Thus, the state~$\rho_i$ as well as the measurement setting~$X_i$ and the noisy POVMs elements of round~$i$ can be arbitrarily correlated and depend arbitrarily on the state, settings, POVMs, and outcomes of the previous rounds, as long as \cref{assump_main:randomness,assump_main:measurements} are satisfied.
This takes into account any possible systematic error in the experiment and in particular closes the detection loophole for entanglement witness experiments (the possibility of violating the witness due to classical correlations in POVM measurements)~\cite{Skwara2007}.

\Cref{assump_main:randomness,assump_main:measurements} model the devices used to perform the measurements in the experiment.
We assume that the random setting generator is characterized up to some precision~$\tau$ and that the measurement devices are characterized up to some~$\delta_j$, as defined in \cref{eq:tau_def,eq:delta_def} respectively.
In principle,~$\tau$ and the~$\delta_j$ may all depend on the round number~$i$, and the~$\delta_j$ may also depend on the measurement setting~$x$.
However, in practice, these dependencies will be small and we may safely take a maximum.
Moreover, it would be extremely impractical to characterize the devices for each round specifically.
The parameters~$\tau$ and~$\delta_j$ are later used to calculate the witness correction. With this we ensure that the confidence in the witness violation is never overestimated, even in the presence of systematic device errors.

Finally, in the rejection experiment, we also need to formalize the Null Hypothesis which we wish to reject. In the case of entanglement witnessing, the Null Hypothesis is that all states produced~$\rho_i$ are separable in every round~$i$. We formulate this more generally, by letting~$\Sc$ a convex subset of states (e.g. the separable states) such that~$W$ is a witness operator for~$\Sc$. This means that~$\Tr[W \rho] \geq 0$ for all~$\rho \in \Sc$. Then we can finally state the Null Hypothesis for~$\Sc$ mathematically as the following assumption:
\begin{enumerate}[label=($\Hc_0$), topsep=18pt, leftmargin=40pt, rightmargin=24pt]
	\item \label[hypothesis]{assump_main:states} \textbf{Null Hypothesis.} In every round~$i$, the source produces a state~$\rho_i \in \Sc$.
\end{enumerate}
This assumption is to be rejected with statistical confidence by the experiment, as we will describe in the next section.

\begin{table}
	\caption{List of Model Assumptions on the experimental devices and the nature of the experiment.
		These assumptions should plausibly hold in the real experiment. The validity of our results depends on these assumption holding.
		We give a mathematically rigorous definition of the model in \cref{app:model}.
	}   \label{tab:model_main}
	\rule{\linewidth}{2pt}
	\begin{center}
		Model Assumptions
	\end{center}\vspace{-6pt}
	\rule{\linewidth}{1pt}
	\begin{minipage}{0.85\linewidth}
		\vspace{6pt}
		\normalsize
		\begin{enumerate}[label=(\Roman*),leftmargin=9pt]
			\item \textbf{Sequentiality.}\label[assump]{assump_main:sequential}
			Rounds of the witness game are played sequentially.
			At the start of each round~$i$, each player~$j$ receives one part of a joint state~$\rho_i$ generated by the black box source, as well as a random measurement setting~$x_i$.
			Each player~$j$ performs a POVM measurement that depends on the setting~$x_i$, and reports the outcome~$a_j$ to a referee, who computes the score~$s_i$ of that round using \cref{eq:score_function_xi}.
			The next round~$i+1$ only starts after the referee received all players' measurement outcomes for round~$i$.
			The experiment is allowed to depend arbitrarily on the past.

			\item \textbf{Trusted randomness.}\label[assump]{assump_main:randomness}
			The random setting generator produces in each round~$i$ a random setting~$X_i$, whose distribution~$\tilde{p}_{i,x}$ (conditioned on the history of the experiment and the state produced) is close to the desired distribution~$p_x$:
				\begin{equation}
				\label{eq:tau_def}
				|\tilde{p}_{i,x} - p_x | \leq \tau \quad \forall i,x.
				\end{equation}
				We assume $\tau < p_x$ for all~$x$, so that each setting has nonzero probability of being realized.

			\item \textbf{Trusted measurements.}\label[assump]{assump_main:measurements}
			In each round~$i$, each player~$j$ performs a noisy POVM measurement~$\{\tilde{\Pi}^{(j)}_{i,a}\}_{a \in \Omega_{X_i}^{(j)}}$ that is close to the ideal POVM from \cref{eq:observable_decomposition}:
			\begin{equation}\label{eq:delta_def}
			\lVert \tilde{\Pi}^{(j)}_{i,a} - \Pi^{(j),X_i}_{a} \rVert_\infty \leq \delta_j \quad \forall i,x, j, a.
			\end{equation}
			The noisy measurements have the same outcomes~$a \in \Omega_{X_i}^{(j)}$ as the ideal measurements. Measurement outcomes follow Born's rule.
		\end{enumerate}
		\vspace{8pt}
	\end{minipage}
	\rule{\linewidth}{2pt}
\end{table}

\section{Results}\label{sec:test}
In this section we present the main results of our work.
In \cref{subsec:experiment} we start by giving a step-by-step description of our method and give an outline on how to apply the rejection analysis and estimation analysis.
Then, in \cref{subsec:theorem_gamma}, we compute the witness correction as a function of the model parameters that quantify the maximum device noise ($\tau$ and the ~$\delta_j$'s).
Next, in \cref{subsec:theorem_pvalue} we provide an easy-to-compute upper bound to the~$p$-value, which is used to perform the witness rejection experiment.
Finally, in \cref{subsec:theorem_CI} we state how to compute a confidence interval for the average witness value \cref{eq:avg_witness}.
These results apply in the presence of arbitrary, possibly correlated noise on the states.

\subsection{The design and analysis of an experiment }\label{subsec:experiment}
In this section, we detail all steps necessary to apply our framework -- from the design of the experiment to the analysis of the data.
In \cref{tab:summary_experiment} we summarize our method.
We now explain each step in detail.

\begin{table}
	\caption{Outline of our method for designing, performing and analyzing witness experiments. We assume that the experiments are guided by the Model Assumptions of \cref{tab:model_main} and that an appropriate operator~$W$ is fixed.  }
	\label{tab:summary_experiment}
	\rule{\linewidth}{2pt}
	\begin{center}
		Outline of experiment design and analysis
	\end{center}\vspace{-6pt}
	\rule{\linewidth}{1pt}
	\begin{minipage}{0.85\linewidth}
		\vspace{12pt}
		\normalsize
		\begin{enumerate}[leftmargin=*]
			\item \emph{Define the experiment}. Choose a
			\begin{enumerate}[label=\alph*., itemsep=0pt, topsep=0pt, leftmargin=16pt]
				\item decomposition of~$W$ of the form \cref{eq:witness-decomposition-POVM};
				\item measurement model of the form \cref{eq:observable_decomposition};
				\item probability~$p_x$ of measurement settings [e.g., using \cref{eq:px_multisetting}];
				\item number of rounds~$n$;
				\item significance level~$\alpha$.
			\end{enumerate}
			For a rejection experiment,~$W$ should be a witness for some set~$\Sc$. Choose the Null Hypothesis to be \cref{assump_main:states}.

			\item \emph{Characterize devices} w.r.t.~the~Model Assumptions.
			Determine suitable~$\tau$ and~$\delta_j$ [see \cref{eq:tau_def} and \cref{eq:delta_def}] for the hardware devices.
			From this \emph{compute the witness correction}~$\gamma$ using \cref{thm:gamma}, \cref{eq:gamma_1,eq:gamma_2}.

			\item \emph{Carry out the experiment.} In each round~$i$, record the obtained score~$s_i$ using \cref{eq:score_function_xi}.

			\item[4a.] \emph{Perform the rejection analysis.}
			From the recorded scores, compute the total normalized score~$t_n$ using \cref{eq:total_normalized_score}.
			Evaluate the upper bound~$p_\text{bound}(t_n, n, \gamma)$ using \cref{thm:pvalue}, \cref{eq:pvalue}.
			If~$p_\text{bound} \leq \alpha$, reject~\ref{assump_main:states} and conclude that the source is \emph{capable of producing states~$\rho \notin \Sc$} with confidence at least~$1-\alpha$.
			Otherwise, the test was inconclusive.

			\item[4b.] \emph{Perform the estimation analysis.} From the recorded scores, compute the witness estimate~$\hat{w}_n$ using \cref{eq:witness_estimate_main}. From the significance level~$\alpha$, compute the confidence interval radius~$\varepsilon$ using \cref{thm:CI}, \cref{eq:vareps_main}. Then, $\Ic = [\hat{w}_n - \varepsilon, \hat{w}_n + \varepsilon]$ is a~$(1-2\alpha$) two-sided confidence interval and $\Jc = (-\infty, \hat{w}_n + \varepsilon]$ is a~$(1-\alpha)$ one-sided confidence interval for the unknown quantity~$\braket{W}_n$ as defined in \cref{eq:avg_witness}.
		\end{enumerate}
		\vspace{8pt}
	\end{minipage}
	\rule{\linewidth}{2pt}
\end{table}

In this work, we assume that the specific observable~$W$ has been chosen. Of course, this choice is part of defining the entire experiment. For the rejection analysis,~$W$ should be a witness for some set~$\Sc$ [as defined by property \cref{eq:witness_definition}].
With entanglement witnessing in mind,~$\Sc$ is the convex set of separable states and~$W$ is some entanglement witness.
See \cref{subsubsec:witness_choice} for a discussion on how to choose a suitable~$W$ for witness experiments.

\paragraph*{Step 1. Define the experiment.}
With a choice of~$W$ fixed, we now choose a decomposition of the witness~$W$ as in \cref{eq:witness-decomposition-POVM} (such a decomposition is not unique). A good decomposition minimizes the number of terms, while keeping each term simple to measure.

Then we choose an ideal model for the implemented measurements by describing each measurement as a POVM~$\{\Pi^{(j),x}_{a}\}_{a\in\Omega_x^{(j)}}$ that satisfies \cref{eq:observable_decomposition}.
These POVMs should model the real implementation of the local measurements as accurately as possible (we will quantify the deviation of the real measurement devices in the second step).
Note that these POVMs can simply be projective measurements.

Next, we choose the desired probability distribution~$p_x$ of the random measurement settings in each round.
In principle, this can be chosen arbitrarily and the method will still work, but it has significant influence on the finite statistics of the experiment.
We propose to choose~$p_x$ as
\begin{equation}\label{eq:px_multisetting}
p_x = \frac{\sum_{\xi: f(\xi)=x} \lvert w_\xi\rvert}{\sum_{\xi} \lvert w_\xi\rvert}.
\end{equation}
Here~$w_\xi$ are the weights appearing in the decomposition~\cref{eq:witness-decomposition-POVM}.
This equation can be interpreted as choosing~$p_x$ proportional to the sum of absolute values of the weights~$\lvert w_\xi\rvert$ of all observables~$\xi$ that correspond to setting~$x$.
Hence, heavy-weight terms are measured more frequently to increase the precision of estimating that term.
See \cref{subsubsec:discussion_px} for a more detailed discussion on choice of~$p_x$.
The choices made so far define the score function \cref{eq:score_function_xi} that assigns a score to each round~$i$ of the game.

Finally, we fix the number of rounds~$n$ to play in the entanglement witness game, as well as a significance level~$\alpha$ (typical values are~$\alpha = 0.05, 0.01, 0.001$). In the rejection experiment, the significance level determines how small the observed~$p_\mathrm{bound}$ on the~$p$-value must be in order for us to reject \cref{assump_main:states}. In the estimation experiment, the significance level~$\alpha$ determines the confidence of the constructed confidence intervals around the estimate.
For the entanglement rejection experiment, it is important that all these parameters, especially~$\alpha$ and~$n$ are set before the experiment is carried out (see  \cref{subsec:pvalue_discussion}).

\paragraph*{Step 2. Characterize devices.}
In this step, we need to characterize the measurement devices that aim to implement the POVM elements of~\cref{eq:observable_decomposition}, and the random setting generator that aims to implement~$p_x$.
This characterization is done by determining suitable~$\tau$ and~$\delta_j$'s such that \cref{eq:tau_def,eq:delta_def} hold (ensuring that \cref{assump_main:randomness,assump_main:measurements} plausibly hold).
In practice, this process requires calibration and characterization of the real experimental devices.
From the numbers~$\tau$ and~$\delta_j$'s obtained in this characterization, we can compute the so-called \emph{witness correction}~$\gamma = \gamma_1 + \gamma_2$ using \cref{thm:gamma}, \cref{eq:gamma_1,eq:gamma_2}, in~\cref{subsec:theorem_gamma}.
When appropriate, one can use a first-order approximation for~$\gamma_2$ given in \cref{eq:gamma_2_approx}.
The witness correction~$\gamma$ is defined such that it bounds~$\| W - \bar{W}_i \|_\infty$, where $\bar{W}_i$ is the effectively implemented operator in round~$i$ and $W$ is the ideal target operator.
It is a function of the parameters $\tau$ and $\delta_j$, which quantify the device imperfections.
The witness correction~$\gamma$ is used to protect against the largest possible systematic error in the experiment under the Model Assumptions of \cref{tab:model_main}.

\paragraph*{Step 3. Carry out the experiment.}
Play~$n$ rounds of the witness game.
Each round~$i$, receive a state~$\rho_i$ from the source and measurement setting~$X_i=x$ from the random setting generator.
Then each subsystem~$j$ performs the POVM measurement~$\{ \tilde{\Pi}_{i, a}^{(j)} \}_{a \in \Omega_{x}^{(j)}}$ corresponding to setting~$x$ and obtains one of the possible outcomes labeled by~$a_j$.
Collect all the obtained outcomes~$\mathbf{A}_i = \mathbf{a}$, compute and the score~$s_i = s(x, \mathbf{a})$ using the score function in \cref{eq:score_function_xi} and record~$s_i$.
After the data collection has completed, one can do the analysis. We differentiate between the \emph{rejection analysis} and \emph{estimation analysis}. Both can be done using the same recorded data.

\paragraph*{Step 4a. The rejection analysis.}
After the data collection has completed, we can determine if the experiment successfully rejected the Null \cref{assump_main:states} with confidence~$1-\alpha$. To do so, we compute the \emph{total normalized score}~$t_n$, defined by
\begin{equation}
\label{eq:total_normalized_score}
t_n = \sum_{i=1}^{n} \frac{s_i - \smin}{\ds},
\end{equation}
where~$\ds := \smax - \smin$ and
\begin{equation}
\label{eq:score_max_min}
\smin := \min_{x, \mathbf{a}} s(x, \mathbf{a}), \quad
\smax := \max_{x, \mathbf{a}} s(x, \mathbf{a}),
\end{equation}
are the algebraic minimum and maximum value of the score function, respectively.
Note that~$t_n \in [0,n]$. This total normalized score is the our test statistic for the hypothesis test.
We can reject the Null Hypothesis if the~$p$-value is at most~$\alpha$. The~$p$-value is defined as the probability
\begin{equation}\label{eq:def p value}
p := \Pr\bigl[T_n \geq t_n \big| \Hc_0 \bigr]
\end{equation}
of obtaining a total normalized score $T_n$ under the Null \cref{assump_main:states} that is at least as large as the observed total normalized score~$t_n$.
To determine if~$p \leq \alpha$, we compute an upper bound~$p \leq p_\mathrm{bound}(t_n, n, \gamma)$ to the~$p$-value in \cref{thm:pvalue}, \cref{eq:pvalue}, and compare~$p_\mathrm{bound}$ to~$\alpha$.
If~$p_\text{bound} \leq \alpha$ then we can reject the Null \cref{assump_main:states} with confidence at least~$1-\alpha$. We can therefore conclude that at least one state~$\rho \not\in \Sc$ must have been produced and therefore the source has the \emph{capability of producing} such states. In the context where~$\Sc$ is the set of separable states, this is interpreted as concluding that the source is capable of producing entangled states.
This logical reasoning is only valid if all the Model \cref{assump_main:randomness,assump_main:measurements,assump_main:sequential} in \cref{tab:model_main} hold. If these fail then one may incorrectly reject~$\Hc_0$.

\paragraph*{Step 4b. The estimation analysis.}
From the collected data, we can also estimate the average witness value~$\braket{W}_n$ as defined in \cref{eq:avg_witness} and construct a rigorous confidence interval for around the estimate in the presence of arbitrary noise. The average witness value is estimated by the estimator
\begin{equation}\label{eq:witness_estimate_main}
\hat{w}_n = c - \frac{1}{n} \sum_{i=1}^{n} s_i.
\end{equation}
This estimator can, in the absence of noise on the measurements and random number generation, be seen as an unbiased estimator of~$\braket{W}_n$ by \cref{eq:expected_score_ideal}. In the presence of unknown noise, the bias of the estimate can only be bounded by the witness correction~$\gamma$ (see the discussion in \cref{subsec:theorem_gamma}).
Using~$\gamma$, we can compute the radius~$\varepsilon$ of the confidence interval using \cref{eq:vareps_main}.
By \cref{thm:CI}, the interval $\Ic(\hat{w}_n) = [\hat{w}_n - \varepsilon, \hat{w}_n + \varepsilon]$ is a~$(1-2\alpha)$ two-sided confidence interval and $\Jc(\hat{w}_n) = (-\infty, \hat{w}_n + \varepsilon]$ is a~$(1 - \alpha)$ one-sided confidence interval for $\braket{W}_n$.
If~$W$ is a witness for the set~$\Sc$ in the sense of \cref{eq:witness_definition} and if~$\hat{w}_n + \varepsilon < 0$, then one can conclude that $\braket{W}_n < 0$, meaning that \emph{on average} states outside~$\Sc$ must have been produced, i.e.
\begin{equation}
\rho^* = \frac{1}{n} \sum_{i=1}^{n} \rho_i   \not\in \Sc,
\end{equation}
with confidence at least~$1 - \alpha$.
We emphasize that the intervals~$\Ic, \Jc$ are corrected for systematic (measurement and random setting generation) errors within the Model Assumptions via the witness correction~$\gamma$ of \cref{thm:gamma} (since~$\varepsilon$ depends on~$\gamma$) and that it is statistically rigorous for arbitrary state noise.

\subsection{Computing the witness correction}\label{subsec:theorem_gamma}
In this section we present \cref{thm:gamma} to compute the witness correction~$\gamma$ as a function of the randomness and measurement imperfection parameters~$\tau, \delta_j$ determined in Step~2 of \cref{tab:summary_experiment}.
The imperfect implementation of the measurements and random number generator will lead to an effectively implemented operator 
\begin{equation}\label{eq:relation_Wnoise_to_score}
\bar W_i = cI - \sum_x \tilde{p}_{i,x} \sum_{\mathbf{a}}  s(x, \mathbf{a}) \,  \bar \Pi^x_{i,\mathbf a}.
\end{equation}
Here $\bar \Pi^x_{i,\mathbf a}$ is the expected implemented joint POVM in round~$i$, conditioned on the history of the experiment, the state produced, and the event that $X_i = x$ (which happens with probability~$\tilde{p}_{i,x}$).
See \cref{app:theorem1} for a precise definition.
Note that this effectively implemented operator~$\bar W_i$ is closely related by the ideal witness operator~$W$ by comparing to \cref{eq:relation_W_to_score}.
Indeed, the ideal random setting distribution $p_x$ is replaced with the implemented distribution~$\tilde{p}_{i,x}$ (which differ little by \cref{assump_main:randomness}) and the ideal POVM elements~$\Pi^{(1),x}_{a} \otimes \cdots \otimes \Pi^{(m),x}_{a}$ are replaced with the conditional expected implemented joint POVM elements~$\bar \Pi^x_{i,\mathbf a}$ (which differ little by \cref{assump_main:measurements}).
The witness correction~$\gamma$ we derive in \cref{thm:gamma} precisely captures how much $\bar{W}_i$ can deviate from $W$ within the Model Assumptions of~\cref{tab:model_main}.
\begin{theorem}\label{thm:gamma}
	Let~$W$ be a Hermitian operator [not necessarily a witness in the sense of \cref{eq:witness_definition}] with decomposition and ideal implementation given by \cref{eq:witness-decomposition-POVM,eq:observable_decomposition}.
	Suppose the experiment is modeled by the Model Assumptions in \cref{tab:model_main}.
	Define the effectively implemented operator $\bar{W}_i$ by \cref{eq:relation_Wnoise_to_score}.
	Then, in every round~$i$, 
	\begin{align}\label{eq:witness_correction}
	\| W - \bar{W}_i \|_\infty \leq \gamma,
	\end{align}
	where the \emph{witness correction}
	$\gamma := \gamma_1 + \gamma_2$
	is the sum of the \emph{random number generation correction}~$\gamma_1$ and the \emph{measurement correction}~$\gamma_2$ defined by
	\begin{align}\label{eq:gamma_1}
	\gamma_1 &:= \tau \sum_x  \max_{\mathbf{a}}\;\lvert s(x,\mathbf{a}) \rvert \\
	\label{eq:gamma_2}
	\gamma_2 &:= \sum_\xi \lvert w_\xi \rvert \sum_{j=1}^m  \Big(\prod_{k=1}^{j-1} ( \lambda_\xi^{(k)} + \epsilon^{(k)}_\xi ) \Big) \, \epsilon^{(j)}_\xi \! \prod_{k=j+1}^{m} \! \lambda_\xi^{(k)},
	\end{align}
	respectively, in terms of
	\begin{equation}\label{eq:eps_j_xi}
	\epsilon^{(j)}_\xi := b_j(\xi) \delta_j \sum_{a \in \Omega_{f(\xi)}^{(j)}} \lvert a \rvert
	\quad\text{and}\quad
	\lambda_\xi^{(j)} := \lVert O_\xi^{(j)} \rVert_\infty.
	\end{equation}
\end{theorem}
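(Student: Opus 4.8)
The plan is to bound $\lVert W - \bar W_i\rVert_\infty$ by inserting one hybrid operator between $W$ and $\bar W_i$ and applying the triangle inequality. Recall from \cref{eq:relation_W_to_score} that $W = cI - \sum_x p_x \sum_{\mathbf a} s(x,\mathbf a)\,\Pi^x_{\mathbf a}$ with $\Pi^x_{\mathbf a} := \Pi^{(1),x}_{a_1}\otimes\cdots\otimes\Pi^{(m),x}_{a_m}$, while from \cref{eq:relation_Wnoise_to_score} $\bar W_i = cI - \sum_x \tilde p_{i,x}\sum_{\mathbf a} s(x,\mathbf a)\,\bar\Pi^x_{i,\mathbf a}$. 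I would introduce the hybrid $\hat W_i := cI - \sum_x p_x \sum_{\mathbf a} s(x,\mathbf a)\,\bar\Pi^x_{i,\mathbf a}$, which uses the \emph{ideal} setting distribution $p_x$ together with the (conditionally) expected noisy POVM, so that $\hat W_i$ differs from $\bar W_i$ only through $p_x\leftrightarrow\tilde p_{i,x}$ and from $W$ only through $\Pi^x_{\mathbf a}\leftrightarrow\bar\Pi^x_{i,\mathbf a}$. The one structural subtlety worth flagging is the ordering: putting the randomness correction ``on the outside,'' i.e.\ against the noisy POVM rather than the ideal one, is what makes the prefactor $p_x\le 1$ (instead of a messier $\tilde p_{i,x}/p_x$) appear, and is essential for the clean bound $\gamma_1$.

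For the randomness part, $\lVert \hat W_i - \bar W_i\rVert_\infty = \lVert \sum_x(\tilde p_{i,x}-p_x)\sum_{\mathbf a}s(x,\mathbf a)\,\bar\Pi^x_{i,\mathbf a}\rVert_\infty$, I would first observe that $\{\bar\Pi^x_{i,\mathbf a}\}_{\mathbf a}$ is a genuine POVM: each $\bar\Pi^x_{i,\mathbf a}$ is the conditional expectation of $\tilde\Pi^{(1)}_{i,a_1}\otimes\cdots\otimes\tilde\Pi^{(m)}_{i,a_m}$, a tensor product of PSD operators, hence PSD, and $\sum_{\mathbf a}\tilde\Pi^{(1)}_{i,a_1}\otimes\cdots\otimes\tilde\Pi^{(m)}_{i,a_m} = (\sum_{a_1}\tilde\Pi^{(1)}_{i,a_1})\otimes\cdots\otimes(\sum_{a_m}\tilde\Pi^{(m)}_{i,a_m}) = I$ (same outcome sets as the ideal POVMs by \cref{assump_main:measurements}), and expectation preserves positivity and completeness. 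Therefore $\lVert\sum_{\mathbf a}s(x,\mathbf a)\,\bar\Pi^x_{i,\mathbf a}\rVert_\infty \le \max_{\mathbf a}\lvert s(x,\mathbf a)\rvert$, and combining with $\lvert\tilde p_{i,x}-p_x\rvert\le\tau$ from \cref{assump_main:randomness} yields $\lVert\hat W_i - \bar W_i\rVert_\infty\le\tau\sum_x\max_{\mathbf a}\lvert s(x,\mathbf a)\rvert = \gamma_1$.

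For the measurement part I would substitute the score function \cref{eq:score_function_xi} back in and use that for each $\xi$ with $x=f(\xi)$ one has $\sum_{\mathbf a}(\prod_j a_j^{b_j(\xi)})\,\Pi^x_{\mathbf a} = \bigotimes_j(M^{(j)}_x)^{b_j(\xi)} = O_\xi$, while with $\bar\Pi^x_{i,\mathbf a}$ in place of $\Pi^x_{\mathbf a}$ one obtains $\bar O_{\xi,i} := \Ebb[\bigotimes_j \tilde O^{(j)}_{\xi,i}\mid\cdots]$ with $\tilde O^{(j)}_{\xi,i} := \sum_a a^{b_j(\xi)}\tilde\Pi^{(j)}_{i,a}$; hence $W - \hat W_i = \sum_\xi w_\xi(O_\xi - \bar O_{\xi,i})$. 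Since $O_\xi = \bigotimes_j O^{(j)}_\xi$ is deterministic, convexity of the operator norm gives $\lVert O_\xi - \bar O_{\xi,i}\rVert_\infty \le \Ebb\lVert\bigotimes_j O^{(j)}_\xi - \bigotimes_j\tilde O^{(j)}_{\xi,i}\rVert_\infty$. For each realization I would then apply the tensor telescoping identity $\bigotimes_j A_j - \bigotimes_j B_j = \sum_j (\bigotimes_{k<j}A_k)\otimes(A_j-B_j)\otimes(\bigotimes_{k>j}B_k)$ with $A_j=\tilde O^{(j)}_{\xi,i}$, $B_j=O^{(j)}_\xi$, take operator norms (multiplicative across tensor factors), and use the three elementary estimates $\lVert O^{(j)}_\xi\rVert_\infty = \lambda^{(j)}_\xi$, $\lVert\tilde O^{(j)}_{\xi,i}-O^{(j)}_\xi\rVert_\infty\le\epsilon^{(j)}_\xi$, and $\lVert\tilde O^{(j)}_{\xi,i}\rVert_\infty\le\lambda^{(j)}_\xi+\epsilon^{(j)}_\xi$. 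The middle estimate is the only one needing a case check: if $b_j(\xi)=0$ then $\tilde O^{(j)}_{\xi,i}=O^{(j)}_\xi=I$ (difference zero, matching $\epsilon^{(j)}_\xi=0$), and if $b_j(\xi)=1$ then $\tilde O^{(j)}_{\xi,i}-O^{(j)}_\xi=\sum_a a(\tilde\Pi^{(j)}_{i,a}-\Pi^{(j),x}_a)$, whose norm is $\le\sum_a\lvert a\rvert\,\delta_j = \epsilon^{(j)}_\xi$ by \cref{assump_main:measurements}. As the resulting bound is deterministic, the expectation is harmless, and summing $\lvert w_\xi\rvert$ times this bound over $\xi$ gives $\lVert W-\hat W_i\rVert_\infty\le\gamma_2$; the triangle inequality then yields $\lVert W-\bar W_i\rVert_\infty\le\gamma_1+\gamma_2=\gamma$.

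The only genuinely delicate point is the conditional-expectation bookkeeping: making precise what $\bar\Pi^x_{i,\mathbf a}$ averages over (the device noise, conditioned on the history, the produced state, and the event $\{X_i=x\}$), verifying that the hypotheses of \cref{assump_main:randomness,assump_main:measurements} do hold under exactly that conditioning, and justifying that $\Ebb[\bigotimes_j\tilde\Pi^{(j)}_{i,a_j}\mid\cdots]$ passes to $\Ebb[\bigotimes_j\tilde O^{(j)}_{\xi,i}\mid\cdots]$ by linearity even though the per-subsystem noise may be correlated within a round. Everything after that — the telescoping and the norm estimates — is routine, so I would set up the conditional formalism by reference to \cref{app:model,app:theorem1} and keep the main argument at the level sketched here.
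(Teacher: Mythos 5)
Your proposal is correct and follows essentially the same route as the paper's proof: your hybrid operator $\hat W_i$ is exactly the paper's intermediate operator $\check W_i$, and the two halves of the argument (the POVM-based bound giving $\gamma_1$, and the tensor telescoping identity with the case analysis on $b_j(\xi)$ and the norm estimates $\lVert \tilde O^{(j)}_{i,\xi}\rVert_\infty \le \lambda^{(j)}_\xi + \epsilon^{(j)}_\xi$ giving $\gamma_2$) match the paper's steps. The conditional-expectation bookkeeping you flag is handled in the paper by pulling the operator norm inside the expectation of the indicator-weighted POVM difference and bounding it pointwise on the event $\{X_i = f(\xi)\}$, which is the rigorous form of the Jensen-type step you sketch.
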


\noindent
The proof is given in \cref{app:theorem1}.
Let us first explain why we call the quantity~$\gamma$ the witness correction.
An important consequence of \Cref{eq:witness_correction} is that
\begin{equation}
\lvert \Tr[W \rho_i] - \Tr[\bar{W}_i \rho_i] \rvert \leq \gamma
\end{equation}
for all $\rho_i$. This means the witness inequality~$\Tr[W \rho_i] \geq 0$ implies that~$\Tr[\bar{W}_i \rho_i] \geq -\gamma$. Thus, if~$W$ is a witness, then the  operator~$\bar{W}_i + \gamma I$ is also a witness. Hence,~$\gamma$ is the witness correction in the sense that any effectively implemented witness~$\bar{W}_i$ corrected by~$\gamma$ is still a witness.
We emphasize that this result does not say anything about the effects of finite statistics, but is solely about the required correction of expectation values due to imperfect devices.
That is, the factor~$\gamma$ protects against potential systematic errors in an experiment.

The witness correction~$\gamma$ has two terms,~$\gamma_1$ and~$\gamma_2$.
The term~$\gamma_1$ quantifies the correction due to imperfect random number generation.
The constant~$\gamma_2$ quantifies the correction due to measurement errors.
Thus,~$\gamma$ can be interpreted as the total correction required if the witness~$W$ is implemented with noisy measurements and with an imperfect number generator.
Note that the choice of~$p_x$ influences the correction~$\gamma_1$, as the score function~\cref{eq:score_function_xi} depends on~$p_x$.

The measurement correction~$\gamma_2$ has a simple first-order approximation under the assumption that~$\lambda_\xi^{(j)} = 1$, making it easier to compute. This assumption means that all measurement operators have eigenvalues in the interval~$[-1,1]$ and is satisfied for example by all Pauli operators.
Then a first-order approximation for~$\gamma_2$ is
\begin{equation}\label{eq:gamma_2_approx}
\gamma_2 = \sum_\xi \lvert w_\xi \rvert \sum_{j=1}^m \epsilon^{(j)}_\xi + O(\epsilon^2),
\end{equation}
where~$\epsilon$ is a constant such that~$\epsilon_\xi^{(j)} \leq \epsilon$ for all~$\xi,j$.
Hence, this is a good approximation if~$\epsilon \ll 1$. This is typically the case when~$\delta_j \ll 1$, which means that the measurement devices have been well-characterized.
In \cref{subsec:gamma_other_method}, we discuss a possible alternative method for deriving~$\gamma$.

\subsection{Bound on the \texorpdfstring{$p$}{p}-value for witness rejection experiments}\label{subsec:theorem_pvalue}
In this section, we give the main result to perform the rejection analysis in \cref{thm:pvalue}. The theorem provides an easy-to-compute upper bound on the~$p$-value under the Null \Cref{assump_main:states}. Recall that the~$p$-value is the probability of observing a total normalized score~$T_n$ under the Null \cref{assump_main:states} that is at least as large as the observed total normalized score~$t_n$ in the experiment,
$
p = \Pr[T_n \geq t_n | \Hc_0].
$
If the~$p$-value is smaller than a previously chosen significance level~$\alpha$, then we may consider the~Null \cref{assump_main:states} to be statistically unlikely to explain the observed~$t_n$, and we may reject the model at significance level~$\alpha$. To determine if~$p \leq \alpha$, we put an upper bound~$p_{\mathrm{bound}}$ on~$p$ in \Cref{thm:pvalue}, which can be compared to the significance level~$\alpha$.

\begin{theorem}\label{thm:pvalue}
  Let~$W$ be a witness operator [satisfying \cref{eq:witness_definition}] for the set~$\Sc$ with decomposition and ideal implementation given by \cref{eq:witness-decomposition-POVM,eq:observable_decomposition}.
  Suppose that the experiment is governed by the~Model Assumptions of \cref{tab:model_main} and consider the Null~\cref{assump_main:states} with respect to~$\Sc$.
  Let~$t_n$ denote the observed total normalized score after~$n$ rounds in the experiment.
  Then, the~$p$-value as defined in \cref{eq:def p value} is upper-bounded by
	\begin{equation}\label{eq:pvalue}
	 p_\text{bound} := eF^\circ_{n,\beta}(t_n),
	\end{equation}
	where
	\begin{equation}\label{eq:Fcirc}
	F^\circ_{n,\beta}(x) :=
	F_{n,\beta}(\lfloor x \rfloor)^{1-(x - \lfloor x \rfloor)} F_{n,\beta}(\lfloor x \rfloor+1)^{x - \lfloor x \rfloor}
	\end{equation}
	is the log-linear interpolation of the survival function of a binomial distribution with parameters~$n$ and~$\beta$,
	\begin{equation}\label{eq:binom}
	F_{n,\beta}(k) = \sum_{l = k}^{n} \binom{n}{l} \beta^l (1-\beta)^{n-l},
	\end{equation}
	and where
	\begin{equation}\label{eq:alpha}
	\beta = \min\Bigl(1, \frac{c + \gamma - \smin}{\ds} \Bigr).
	\end{equation}
	Finally,~$\lfloor x \rfloor$ is the largest integer less than or equal to~$x$.
\end{theorem}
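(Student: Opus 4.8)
The plan is to view the $n$ rounds of the experiment as a sequential game and to build a supermartingale from the per‑round scores, to which we then apply Bentkus' concentration inequality. Let $(\mathcal{F}_i)_{i\ge 0}$ be the natural filtration of the experiment, with $\mathcal{F}_{i-1}$ recording everything that has happened up to the beginning of round~$i$ --- in particular the state $\rho_i$, the realized setting distribution $\tilde{p}_{i,\cdot}$, and the (possibly random) noisy POVM elements of round~$i$ --- but \emph{not} the realized setting $X_i$ nor the outcome $\mathbf{A}_i$. Put $Y_i := (S_i - \smin)/\ds$, so that $Y_i \in [0,1]$ and $T_n = \sum_{i=1}^n Y_i$. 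The core of the proof is the estimate $\Ebb[Y_i \mid \mathcal{F}_{i-1}] \le \beta$ under the Null \cref{assump_main:states}, with $\beta$ as in \cref{eq:alpha}.

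To establish this, I would average the score over the randomness that remains in round~$i$ (the setting choice and the Born‑rule outcome): by the very definition of the effectively implemented operator in \cref{eq:relation_Wnoise_to_score}, $\Ebb[S_i \mid \mathcal{F}_{i-1}] = c - \Tr[\bar{W}_i \rho_i]$. \Cref{thm:gamma} gives $\lVert W - \bar{W}_i \rVert_\infty \le \gamma$, hence $\Tr[\bar{W}_i \rho_i] \ge \Tr[W \rho_i] - \gamma \ge -\gamma$, where the last step uses that $\rho_i \in \Sc$ under $\Hc_0$ and that $W$ is a witness for $\Sc$. Therefore $\Ebb[S_i \mid \mathcal{F}_{i-1}] \le c + \gamma$, so $\Ebb[Y_i \mid \mathcal{F}_{i-1}] \le (c + \gamma - \smin)/\ds$; combining with the trivial bound $\Ebb[Y_i \mid \mathcal{F}_{i-1}] \le 1$ (since $Y_i \le 1$) yields $\Ebb[Y_i \mid \mathcal{F}_{i-1}] \le \beta$.

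It then remains to turn this conditional‑mean bound into a tail bound on $T_n$. For any $\lambda \ge 0$, convexity of $t \mapsto e^{\lambda t}$ on $[0,1]$ gives $\Ebb[e^{\lambda Y_i} \mid \mathcal{F}_{i-1}] \le 1 + \beta(e^\lambda - 1)$, so that $Z_k := e^{\lambda T_k} / (1 + \beta(e^\lambda - 1))^k$ is a supermartingale with $Z_0 = 1$; this is the structural input. Invoking Bentkus' inequality~\cite{Bentkus2004,Bentkus2006} for such sequences --- nonnegative increments bounded by~$1$ with conditional means at most~$\beta$ --- yields precisely $\Pr[T_n \ge t] \le e\, F^\circ_{n,\beta}(t)$ for every real $t \in [0,n]$, where $F^\circ_{n,\beta}$ is the log‑linear interpolation \cref{eq:Fcirc} of the $\mathrm{Binom}(n,\beta)$ survival function \cref{eq:binom}. (If one uses a form of Bentkus' inequality that requires an exact conditional mean $\beta$, one first replaces $Y_i$ by some $Y_i' \ge Y_i$ with $\Ebb[Y_i' \mid \mathcal{F}_{i-1}] = \beta$, obtained by mixing in fresh independent randomness; this only enlarges the event $\{T_n \ge t_n\}$.) Taking $t = t_n$ gives $p = \Pr[T_n \ge t_n \mid \Hc_0] \le e\, F^\circ_{n,\beta}(t_n) = p_{\mathrm{bound}}$; the degenerate case $\beta = 1$ is immediate, since then $F^\circ_{n,1}(t_n) = 1$ for $t_n \in [0,n]$ and $e > 1 \ge p$.

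The main obstacle is this last step. One must pin down a version of Bentkus' inequality that is genuinely valid for supermartingale (not merely independent) sequences with a one‑sided conditional‑mean constraint, and verify that passing from the integer‑indexed binomial survival function to its log‑linear interpolation at the non‑integer threshold $t_n$ is still a valid upper bound. By contrast, the conditional‑mean estimate of the second paragraph is routine once \cref{thm:gamma} is available; the only care needed there is the precise choice of filtration, so that $\Ebb[S_i \mid \mathcal{F}_{i-1}]$ coincides exactly with the average over settings and outcomes that defines $\bar{W}_i$ in \cref{eq:relation_Wnoise_to_score}.
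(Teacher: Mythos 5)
Your proposal is correct and follows essentially the same route as the paper's proof: the conditional-mean bound $\Ebb[S_i\mid H_i]\leq c+\gamma$ obtained from \cref{thm:gamma} under $\Hc_0$, a supermartingale built from the normalized scores (your increments $Y_i-\beta$ are exactly the paper's $R_i-R_{i-1}=(S_i-c-\gamma)/\ds$), the supermartingale form of Bentkus' inequality as in Ref.~\cite{Elkouss2016}, and the separate treatment of $\beta=1$. The only point to tighten is the one you already flag: condition on $(H_i,\rho_i)$ only (not on the realized noisy POVM elements), so that averaging over settings and outcomes reproduces precisely the operator $\bar W_i$ of \cref{eq:relation_Wnoise_to_score}; the exponential-supermartingale construction is then unnecessary, since the cited Bentkus bound applies directly to bounded-difference supermartingales.
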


\noindent
We give a detailed proof of \cref{thm:pvalue} in \cref{app:proof2}.
We construct a supermartingale sequence from the total normalized scores up to each round~$i$.
We then apply Bentkus' inequality~\cite{Bentkus2004,Bentkus2006} (a concentration inequality for bounded difference supermartingale sequences, similar to, but tighter than, the Hoeffding-Azuma inequality) to obtain an upper bound for the~$p$-value.
Our proof is inspired by the approach of~\cite{Elkouss2016} to certify Bell violations.

\subsection{Confidence intervals for average witness estimation experiments}\label{subsec:theorem_CI}
In this section, we give the main result for the witness estimation analysis in \cref{thm:CI}. The theorem provides confidences interval for the average witness expectation value~$\braket{W}_n$ as defined in \cref{eq:avg_witness}. The point estimate for~$\braket{W}_n$ is given in \cref{eq:witness_estimate_main}, and is a function of the scores recorded in the experiment.
We construct a~$(1-2\alpha)$ two-sided confidence interval~$\Ic$ and a~$(1-\alpha)$ one-sided confidence interval~$\Jc$.

\begin{theorem}\label{thm:CI}
	Let~$W$ be a Hermitian operator [not necessarily a witness in the sense of \cref{eq:witness_definition}] with decomposition and ideal implementation given by \cref{eq:witness-decomposition-POVM,eq:observable_decomposition}.
	Suppose that the experiment is governed by the~Model Assumptions in \cref{tab:model_main}.
	Let~$\hat{W}_n$ denote the average witness estimate as defined in \cref{eq:witness_estimate_main}.
	Fix the significance level~$\alpha\in[0,1]$.
	If~$\alpha < e (\half)^{n}$, define~$\varepsilon = \ds$, otherwise define~$\varepsilon \in [\gamma, \gamma+\ds]$ implicitly via
	\begin{equation}
	\alpha = e F^\circ_{n,\half}\Bigl(\frac{n}{2}(1 + \frac{\varepsilon - \gamma}{\ds})\Bigr). \label{eq:vareps_main}
	\end{equation}
	Here~$\gamma$ is defined in \cref{eq:gamma_1,eq:gamma_2}, and~$F^\circ_{n,\beta}$ is defined in \cref{eq:Fcirc} (with~$\beta=\half$ and~$e \approx 2.72$).

	Then, the intervals
	\begin{align}
	\Ic(\hat{w}_n) &:= [\hat{w}_n - \varepsilon, \hat{w}_n + \varepsilon] \\
	\Jc(\hat{w}_n) &:= (-\infty , \hat{w}_n + \epsilon]
	\end{align}
	are a~$(1-2\alpha)$ two-sided and a~$(1 - \alpha)$ one-sided confidence interval respectively for the average witness value~$\braket{W}_n$ as defined in \cref{eq:avg_witness}. That is
	\begin{align}
	\Pr[\braket{W}_n \in \Ic(\hat{W}_n)] &\geq 1-2\alpha, \\
	\Pr[\braket{W}_n \in \Jc(\hat{W}_n)] &\geq 1-\alpha.
	\end{align}
\end{theorem}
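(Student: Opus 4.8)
The plan is to build a supermartingale from the sequence of partial scores and apply Bentkus' inequality, exactly in the spirit of the proof of \Cref{thm:pvalue}, but now symmetrically for both tails so as to control $|\hat W_n - \braket{W}_n|$. First I would introduce the effectively implemented operators $\bar W_i$ from \cref{eq:relation_Wnoise_to_score} and note that, by \Cref{thm:gamma}, $|\Tr[W\rho_i] - \Tr[\bar W_i \rho_i]| \le \gamma$ for every $i$. Hence
\begin{equation}
\Bigl| \braket{W}_n - \frac1n\sum_{i=1}^n \Tr[\bar W_i \rho_i] \Bigr| \le \gamma,
\end{equation}
so it suffices to construct a confidence interval of radius $\varepsilon - \gamma$ around $\hat w_n$ for the ``noisy average'' $\frac1n\sum_i \Tr[\bar W_i\rho_i]$, and then inflate it by $\gamma$. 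The key observation is that, conditioned on the history $H_{i-1}$ up to round $i-1$ (including $\rho_i$), the expected score satisfies $\Ebb[S_i \mid H_{i-1}] = c - \Tr[\bar W_i \rho_i]$, which is the noisy analogue of \cref{eq:expected_score_ideal}. Therefore the centered partial sums
\begin{equation}
D_k := \sum_{i=1}^k \bigl( S_i - \Ebb[S_i \mid H_{i-1}] \bigr)
\end{equation}
form a martingale with bounded increments, $|D_k - D_{k-1}| \le \ds$ since each score lies in $[\smin,\smax]$.

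Next I would rescale to the unit interval to match the hypotheses of Bentkus' inequality. Writing $U_i := (S_i - \smin)/\ds \in [0,1]$ with conditional mean $\mu_i := \Ebb[U_i\mid H_{i-1}] = (c - \Tr[\bar W_i\rho_i] - \smin)/\ds$, the quantity $\sum_i (U_i - \mu_i)$ is a bounded martingale, and $\frac1n\sum_i U_i = (c - \hat w_n - \smin)/\ds$ while $\frac1n\sum_i \mu_i = (c - \frac1n\sum_i\Tr[\bar W_i\rho_i] - \smin)/\ds$. Applying Bentkus' inequality to $\sum_i(U_i - \mu_i)$ and to $\sum_i(\mu_i - U_i)$ separately gives, for a deviation of $t$ in the normalized variable,
\begin{equation}
\Pr\Bigl[ \pm\textstyle\sum_i (U_i - \mu_i) \ge t \Bigr] \le e\, F^\circ_{n,\half}\Bigl(\tfrac{n}{2} + \tfrac{t}{?}\Bigr),
\end{equation}
where the precise argument of $F^\circ_{n,\half}$ is dictated by how Bentkus' bound is stated (the worst-case sub-Bernoulli comparison is with a $\mathrm{Binom}(n,\half)$, which is why $\beta = \half$ appears and the survival function is evaluated near $n/2$). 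Setting this tail probability equal to $\alpha$ and solving for the deviation is precisely what \cref{eq:vareps_main} does after translating back through the affine maps: a normalized deviation corresponding to $\varepsilon - \gamma$ in witness units makes the argument $\frac n2(1 + \frac{\varepsilon-\gamma}{\ds})$. The one-sided interval $\Jc$ uses just one of the two tail bounds (probability $\le \alpha$), while the two-sided interval $\Ic$ uses both and a union bound (probability $\le 2\alpha$), yielding the stated $(1-\alpha)$ and $(1-2\alpha)$ coverage. The degenerate case $\alpha < e(\half)^n$ is where $e\,F^\circ_{n,\half}$ cannot be brought down to $\alpha$ even at its extreme argument, so one simply takes $\varepsilon = \ds$ and uses the deterministic bound $|\hat w_n - \braket{W}_n| \le \ds + \gamma$... — actually $\varepsilon = \ds$ combined with the $\gamma$-inflation, consistent with the worst-case range of the estimator.

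The main obstacle, and the step requiring the most care, is the bookkeeping that connects the abstract martingale tail bound to the explicit expression in \cref{eq:vareps_main} — in particular getting the affine change of variables right so that ``$\varepsilon$ in witness units'' maps correctly to ``deviation of the normalized martingale'' and then to ``argument of $F^\circ_{n,\half}$'', including the factor of $n/2$ and the additive $\gamma$. A secondary subtlety is justifying that $\Ebb[S_i \mid H_{i-1}] = c - \Tr[\bar W_i \rho_i]$ under the Model Assumptions: this requires averaging Born's rule over the noisy POVM $\{\tilde\Pi^{(j)}_{i,a}\}$ and the noisy setting distribution $\tilde p_{i,x}$, taking the conditional expectation over the device randomness to replace $\tilde\Pi$ by $\bar\Pi$, and checking the identity \cref{eq:relation_Wnoise_to_score} — this is the same computation already needed in \Cref{app:theorem1,app:proof2}, so I would cite it rather than redo it. Everything else (boundedness of increments, the union bound, monotonicity of $F^\circ_{n,\half}$ needed to invert it) is routine.
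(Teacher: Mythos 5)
Your approach is the same as the paper's: center the scores by their conditional expectations to get a bounded-difference martingale, apply Bentkus to both tails, transfer from the "noisy" operator $\bar W_i$ back to $W$ at a cost of $\gamma$ via \cref{thm:gamma} (the paper packages this as a corollary, $\lvert \Tr[W\rho_i]-(c-\Ebb[S_i\mid H_i,\rho_i])\rvert\le\gamma$), and union-bound for the two-sided interval. The one step you left open --- the ``?'' in the Bentkus application --- is exactly where the only non-routine idea sits, so let me fill it in. You cannot take $\beta_i=\mu_i$ in Bentkus' inequality, because the $\beta_i$ there must be \emph{constants} while your $\mu_i=\Ebb[U_i\mid H_i,\rho_i]$ is a random variable depending on the history and the state. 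The fix (and what the paper does) is to use only the crude bound $U_i-\mu_i\in[-1,1]$ and rescale by $\half$: defining $Z_k:=\half\sum_{i\le k}(U_i-\mu_i)$ gives increments in $[-\half,\half]$, i.e.\ constant $\beta_i=\half$, whence $\Pr[\pm Z_n\ge x-\tfrac n2]\le eF^\circ_{n,\half}(x)$. Since $\braket{W}_n-\hat W_n$ differs from $\tfrac{2\ds}{n}Z_n$ by at most $\gamma$, the event $\braket{W}_n-\hat W_n\ge\varepsilon$ forces $Z_n\ge\tfrac{n(\varepsilon-\gamma)}{2\ds}$, i.e.\ $x=\tfrac n2\bigl(1+\tfrac{\varepsilon-\gamma}{\ds}\bigr)$, which is \cref{eq:vareps_main}. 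This rescaling is also why the interval is a factor of order $2$ wider than what a naive (and invalid) use of $\beta_i=\mu_i$ would suggest.

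Two smaller points. First, in the degenerate case $\alpha<e(\half)^n$ your route gives only $\lvert\braket{W}_n-\hat W_n\rvert\le\ds+\gamma$, which does not justify $\varepsilon=\ds$; the paper instead proves the clean deterministic bound $\lvert\braket{W}_n-\hat W_n\rvert\le\ds$ directly from $\smin\le S_i\le\smax$ together with $\Tr[W\rho_i]=c-\sum_{x,\mathbf a}p_x\,s(x,\mathbf a)\Tr[\rho_i\Pi^x_{\mathbf a}]$, with no $\gamma$ appearing. Second, the inversion of \cref{eq:vareps_main} is not entirely free: one must check that $F^\circ_{n,\half}$ is continuous and strictly decreasing on $[\tfrac n2,n]$ and that its range there contains $[2^{-n},\half]$ (the paper proves this as a separate lemma using the symmetry of $\Binom(n,\half)$), so that a unique $\varepsilon\in[\gamma,\gamma+\ds]$ exists precisely when $\alpha\ge e(\half)^n$.
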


\noindent
The proof of this theorem is given in \cref{app:CI}. The confidence interval is also based on the construction of a martingale sequence and the application of Bentkus' inequality. The techniques are very similar to the proof of \cref{thm:pvalue}. We chose to use Bentkus' inequality because it is tighter than the more standard Hoeffding-Azuma inequality \cite{Elkouss2016}. The radius of the interval~$\varepsilon$ is however slightly more difficult because it involves (numerically) solving \cref{eq:vareps_main}. See \cref{subsubsec:hoeffding} for a brief discussion on this.

\section{Examples and illustration}\label{sec:example}
\begin{figure*}
	\centering
	\subfloat[Singe click entanglement (SCE) between two NV~centers.]{
		\begin{minipage}{0.47\textwidth}
			\includegraphics[scale=0.55]{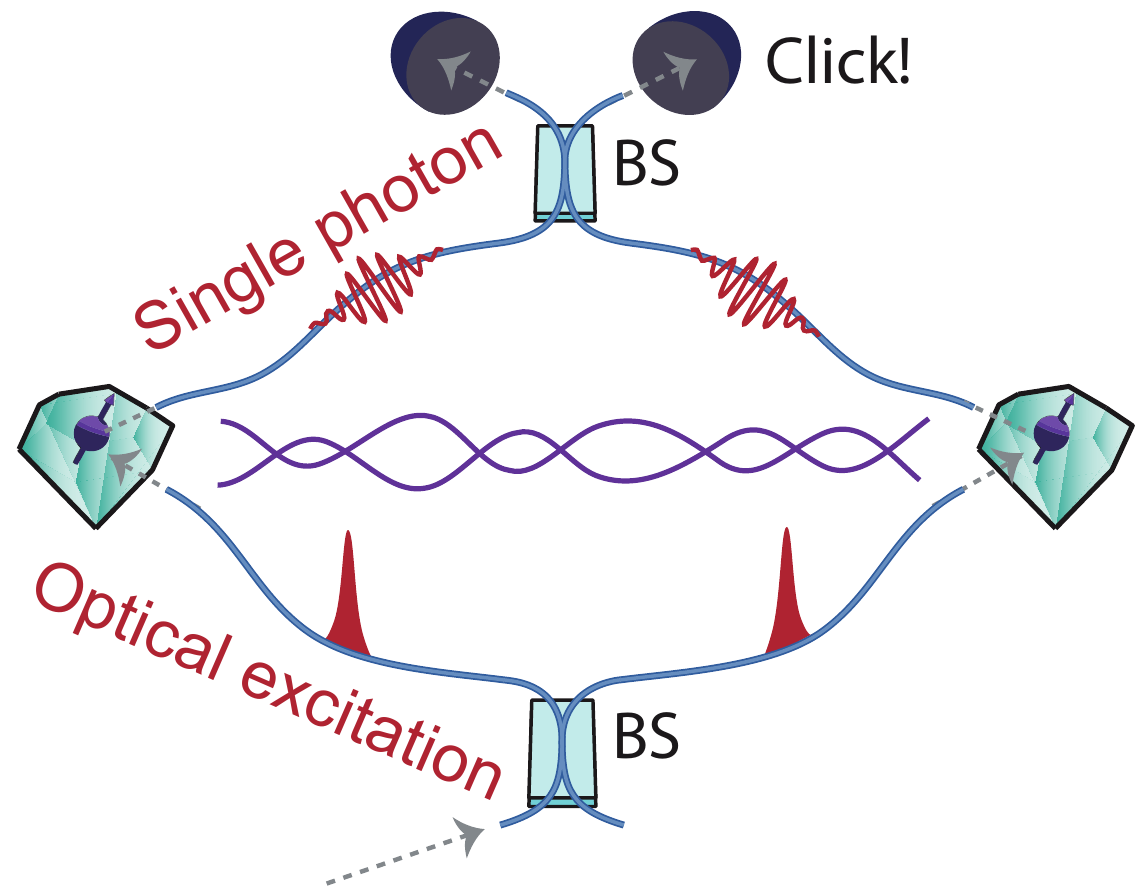}
		\end{minipage}
		\label{fig:NV_epr}
	}
	\hfill
	\subfloat[Schematic of tripartite GHZ generation.]{
		\begin{minipage}{0.47\textwidth}
			\includegraphics[scale=0.9]{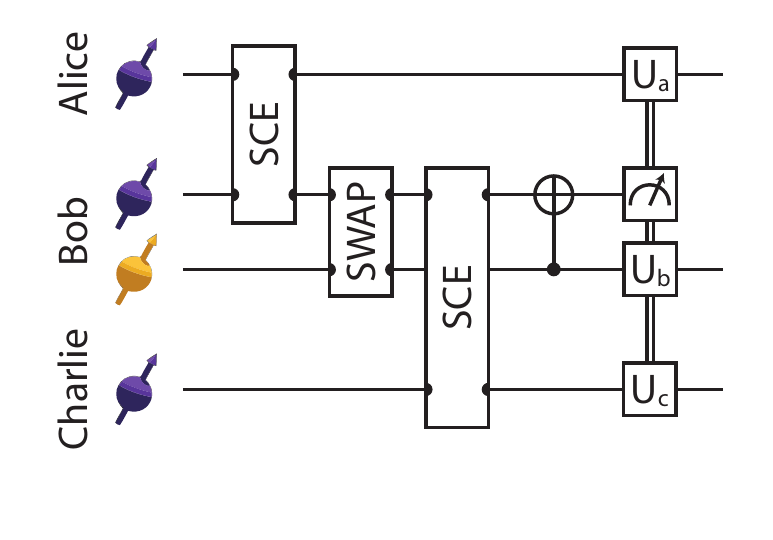}
		\end{minipage}
		\label{fig:NV_ghz}
	}
	\caption{Schematic illustration of tripartite entanglement generation using diamond Nitrogen Vacancy (NV) center systems.
		(a) The single click entanglement (SCE) scheme generates a single EPR pair between two NV centers.
		(b) Two EPR pairs are generated using the SCE scheme and combined into a GHZ state by interfering and measurement. The classically conditioned operations~$U_a$,~$U_b$ and~$U_c$ are Pauli operations.
	}
	\label{fig:NV}
\end{figure*}

In this section, we will illustrate our results with two examples based on simulations of a proposed entanglement witness experiment in Nitrogen Vacancy (NV) centers.
Moreover, we will give a concrete example in which the iid and Gaussian assumptions fail.
Finally, we shall illustrate how the function~$eF_{n,\beta}^\circ$ of \cref{eq:Fcirc} scales in its arguments and parameters. This function determines the~$p$-value bound and the confidence interval size in our results.
Before we present these examples, we briefly describe the physical system that we aim to simulate in \cref{subsec:NV}.
This section serves as a motivation for our simulation, but the examples can also be understood without knowledge of the physical system we simulate.
Then we present the two examples.
In the first example (\cref{subsec:example_steps}), we describe how to apply our method in detail, outlining all the steps in \cref{subsec:experiment}  in a concrete example.
For this example, we simulate a single experiment with identically distributed states~$\rho$.
In the second example (\cref{subsec:non-iid}), we illustrate our method for non-iid states.
To do so, we will perform a large Monte Carlo simulation of many independent experiments.
In each experiment, we use a sequence of three-qubit states~$\rho_i$ that are neither independent nor identically distributed.
Then, in \cref{subsection:broken}, we give an artificial example of non-iid states in which the Gaussian assumption fails considerably.
This example shows that a Gaussian assumption, on which the central limit theorem relies in prior work, need not always be justified (cf.~the discussion in \cref{subsec:prior}).
Our method applies regardless of the validity of a Gaussian assumption.
Finally, in the \cref{subsec:scaling} we illustrate how the function~$eF^\circ_{n, \beta}(x)$ defined in \cref{eq:Fcirc} [which directly determines the~$p$-value bound \cref{eq:pvalue}, and the confidence interval \cref{eq:vareps_main}] scales with~$n$,~$\beta$ and~$x$. Note that~$\beta$ scales linearly with the witness correction~$\gamma$ that captures device imperfections.

\subsection{Simulation details of Nitrogen Vacancy systems}\label{subsec:NV}
Both examples in \cref{subsec:example_steps,subsec:non-iid} are based on a scheme for generating tripartite GHZ~states in three physically separated Nitrogen Vacancy (NV) centers in diamond (see Ref.~\cite{Awschalom2018} for a review of this system).
In these NV~centers, the electronic spin associated with the defect can be used as qubit.
This qubit is optically accessible and can be entangled with the presence or absence of a photon, which can be used as a flying qubit.
Surrounding the NV center there are several Carbon-13 atoms (1.1\% natural abundance).
Their nuclear spins can be used as additional qubits, which can be controlled via the hyperfine interaction between the nuclear and electronic spins.

Two NV centers are entangled in the following way~\cite{Humphreys2018}:
First, each NV center produces a spin-photon entangled pair, where the qubit state is encoded in the absence/presence of a photon.
The joint state of the spin-photon pair is then given by
\begin{equation}\label{eq:z}
\sqrt{z} \ket{\uparrow}\ket{1} + \sqrt{1-z}\ket{\downarrow}\ket{0},
\end{equation}
where~$z$ is a tunable parameter.
By coupling the photons into single mode fibers and interfering them using a beam splitter, the two electronic spins can become entangled.
In essence, this amounts to detecting the presence of a photon but erasing the information about which arm the photon came from.
This \emph{single click entanglement (SCE)} scheme is illustrated in \cref{fig:NV_epr}.
The joint state between the two electronic spins is now (ideally)
\begin{equation}\label{eq:EPR_theta}
z \ketbra{\uparrow\uparrow} + (1-z)\ketbra{\Psi^+_\theta},
\end{equation}
where~$\ket{\Psi^+_\theta} = \ket{\uparrow\downarrow} + e^{i \theta} \ket{\downarrow\uparrow}$.
Here,~$\theta$ is a relative phase that needs to be characterized and controlled experimentally to create useful entanglement.

To generate a tripartite GHZ state, two EPR pairs are combined into a single GHZ state in the following way:
First create one EPR pair between two NV centers using the electronic qubits.
Then, one node swaps the state of the electronic spin with a nuclear spin qubit, so that the electronic spin becomes free again for entanglement production.
At this point a second EPR pair is produced between the now-free electronic spin of this node and a third node. 
The GHZ state is then created by coupling the nuclear spin and electronic spin in the middle node and measuring the electronic qubit.
This results into a state that is equivalent to a GHZ state under local Pauli operations (the Pauli operations can depend on the observed measurement outcome).
This procedure is sketched in \cref{fig:NV_ghz}.

When simulating this procedure, we account for several noise processes. First, we include noise in the generation of the EPR pairs. Our model for EPR generation follows the noise model for SCE generation developed in \cite{Humphreys2018}.
This model incorporates several independent noise parameters:
the single photon detection efficiency~$p_{det}$ (the probability of detecting a photon in the heralding station, conditioned on it being emitted from the NV center), the distinguishability~$V$ of the emitted photons,
the double excitation probability~$p_{2ph}$ (when more than one photon is emitted by an NV center in single entanglement attempt),
the probability of dark counts~$p_{dc}$,
as well as an uncertainty in the relative phase~$\theta$ that is  modeled by applying a Pauli-$Z$ to one of the qubits with probability~$p_{\theta}$.
We assume that the detection efficiency is the same for all three setups, and furthermore assume that in each SCE scheme symmetric values for the free parameter $z$ [see \cref{eq:z}] are used. However, this parameter may be different for the first and second EPR pair. We refer the reader to Ref. \cite{Humphreys2018} for full details on the SCE generation model.

On top of the detailed SCE noise, our model assumes dephasing noise on the first EPR pair while it is kept in memory, waiting for the successful generation of the second EPR pair. The off-diagonal terms in the density matrix of the first EPR pair are multiplied with dephasing parameter $q = 1 - \exp(-N_{\max} / \nu)$, where $N_{\max}$ is a free parameter determining the maximal number of attempts before we discard everything and start over (because the first EPR pair has decohered too much) and $\nu$ is a parameter quantifying the strength of the dephasing noise. Finally, we assume that all single- and two-qubit gates are performed with unit fidelity.
In \cref{subsec:example_steps} we instantiate this model with representative numerical values for all model parameters [see \cref{tab:simulation_params}] to produce the simulated experimental data.

\subsection{Step-by-step application of our method}\label{subsec:example_steps}
In our first example, we illustrate our method on data produced by a simulation of iid states on three qubits~($m=3$).
The aim is to witness a genuine tripartite entanglement by producing a GHZ~state:
\begin{equation}
\ket{\mathrm{GHZ}} = \frac{1}{\sqrt{2}} (\ket{000} + \ket{111}).
\end{equation}
Therefore, we let~$\Sc$ be the set of biseparable states, i.e., the set of all states~$\rho$ that are a mixture of separable states on any bipartition of the three subsystems.
To witness a state not in~$\Sc$ (and reasonable close to a GHZ state), we use is the projection witness, given by
\begin{equation}\label{eq:projection_witness}
W = \half I - \ketbra{\mathrm{GHZ}}.
\end{equation}
This factor~$\half$ is known to be optimal for the GHZ projection witness~\cite{Guhne2009}.
Note that~$c \neq \half$ in \cref{eq:witness-decomposition-observables}.
In fact, it is easily observed that $c = \half - \frac{1}{d} = \frac{3}{8}$ here (since $d = 2^m = 8$ in this example).
We will now describe all steps of \cref{subsec:experiment} to illustrate how to fully define the experiment, obtain the (simulated) data, and calculate the resulting~$p$-value and confidence interval of the experiment.

\paragraph*{Step 1a.}
The first step is choosing a decomposition of our choice of~$W$ [\cref{eq:projection_witness}] of the form \cref{eq:witness-decomposition-POVM}
This witness has a four-setting and five-setting decomposition.
We will use the five-setting decomposition into local Pauli observables:
\begin{multline}\label{eq:W_GHZ}
  W_\mathrm{GHZ} = \frac{1}{8} \Bigl( 3 III - IZZ - ZIZ - ZZI \\ - XXX + XYY + YXY + YYX \Bigr),
\end{multline}
where~$\{I,X,Y,Z\}$ are the Pauli operators (including the identity operator) and the tensor symbol is omitted for clarity.
Thus, \cref{eq:W_GHZ} is a decomposition of the form \cref{eq:witness-decomposition-observables} with~$c = \frac{3}{8}$.
We shall label the seven non-identity, traceless observables by~$\xi = 1,\dots,7$ in the order of their appearance in \cref{eq:W_GHZ}.

There are only five measurement settings needed for the decomposition in \cref{eq:W_GHZ}.
These are $\{ZZZ, XXX, XYY, YXY, YYX \}$, since the first three observables ($\xi=1,2,3$) can all be computed from the first measurement setting~$ZZZ$ ($x=1$), as discussed in \cref{subsec:measurement-settings}.
Therefore we have~$x = 1,\dots,5$, indexing the settings.
The formal mapping between observables and measurement settings is given by
\begin{equation}
f(\xi) =
\begin{cases}
1, & \mbox{if } \xi = 1,2,3, \\
\xi - 2, & \mbox{if } \xi = 4,5,6,7,
\end{cases}
\end{equation}
and
\begin{equation}
b(\xi) =
\begin{cases}
011, & \mbox{if } \xi = 1, \\
101, & \mbox{if } \xi = 2, \\
110, & \mbox{if } \xi = 3, \\
111, & \mbox{if } \xi = 4,5,6,7.
\end{cases}
\end{equation}

\paragraph*{Step 1b.}
Next, we specify a model for measuring each of the Pauli observables that occur in the chosen measurement settings. In our simulation, we shall model each measurement as systematically imperfect, meaning that~$X$,~$Y$ and~$Z$ is not implemented by the usual projective measurements.
Instead, we model all Pauli measurement by POVM elements, parameterized by two parameters~$u,v \in [0, 1]$, which characterize the efficiency of detecting the~$+1$ and~$-1$ eigenstate of the Pauli operator, respectively.
In experiments, these numbers are referred to as the readout fidelity \cite{Bultink2016,Humphreys2018}.
Concretely, for the Pauli-$Z$ measurement on each subsystem, we model (dropping the subsystem index~$j$ for notational compactness) the measurement by the POVM elements
\begin{equation}
  \Pi^Z_{a^+} = \begin{bmatrix}
  u  & 0 \\
  0 & 1 - v
  \end{bmatrix},
  \Pi^Z_{a^-} = \begin{bmatrix}
  1-u  & 0 \\
  0 & v
  \end{bmatrix}.
\end{equation}
The~$X$ and~$Y$ POVM elements are defined by
\begin{equation}
  \Pi^X_{a^\pm} = H \Pi^Z_{a^\pm} H^\dagger, \quad \Pi^X_{a^\pm} = K \Pi^Z_{a^\pm} K^\dagger,
\end{equation}
where the~$H$ and~$K$ gate are the gates that rotate the~$Z$ to the~$X$ and~$Y$ basis respectively.
The two outcomes of all Pauli measurements are
\begin{equation}
  a^\pm = \frac{v-u\pm 1}{u+v-1}.
\end{equation}
These values are chosen in such a way that
\begin{equation}
  a^+ \Pi^P_{a^+} + a^- \Pi^P_{a^-} = P,
\end{equation}
for all Pauli's~$P=X,Y,Z$, so that the measurement operators correspond to the desired observables [according to \cref{eq:observable_decomposition}].
In our model, we will set~$u = 0.95$ and~$v = 0.99$. This results in~$a^+ \approx 1.1064$ and~$a^- \approx -1.0213$.

\paragraph*{Step 1c.} Next, we choose~$p_x$ according to \cref{eq:px_multisetting}.
That is, we choose
\begin{equation}\label{eq:example_px}
p_x =
\begin{cases}
\frac{3}{7} &\mbox{if } x=1, \\
\frac{1}{7} &\mbox{if } x=2,3,4,5
\end{cases}
\end{equation}
This now fully defines the score function [per \cref{eq:score_function_xi}]:
\begin{multline}\label{eq:example_score}
s(x,\mathbf{a}) =  \begin{cases}
\frac{7}{24} (a_2 a_3 + a_1 a_3 + a_1 a_2 ), &  \mbox{if }x = 1, \\
\frac{7}{8} a_1 a_2 a_3, &  \mbox{if }x = 2, \\
-\frac{7}{8} a_1 a_2 a_3, &  \mbox{otherwise.} \\
\end{cases}
\end{multline}
Note how the observables~$IZZ$,~$ZIZ$ and~$ZZI$ are combined into one measurement setting~$ZZZ$ which directly contributes to the score.

\paragraph*{Step 1d-e.}Finally, we fix the total number of rounds to be~$n = 600$ and set~$\alpha = 0.05$.

\begin{table}
	\caption{Values of the parameters used in the simulated creation of a tripartite GHZ state in NV centers as discussed in \cref{subsec:NV}. The resulting state is described in \cref{tab:rho}.}
	\ra{1.07}
	\centering
	\begin{tabularx}{\linewidth}{p{0.22\linewidth} p{0.25\linewidth} | p{0.03\linewidth} p{0.22\linewidth} p{0.25\linewidth}}
		\toprule
		Noise Parameter & Value & & Free Parameter & Value	\\
		\midrule
		$p_{det}$ & $1.5 \cdot 10^{-3}$ & & $z_1$ & $0.016$ \\
		$V$ & $0.9$ & & $z_2$ & $0.080$ \\
		$p_{2ph}$ & $0.02$ & & $N_{\max}$ & $468$ \\
		$p_{dc}$ & $4.0 \cdot 10^{-7}$ & & & \\
		$p_{\theta}$ & $0.030$ & & & \\
		$\nu$ & $1500$ &  & & \\
		\bottomrule
	\end{tabularx}
	\label{tab:simulation_params}
\end{table}

\paragraph*{Step 2.}
We characterize our (simulated) devices to have a RNG bias~$\tau = 10^{-6}$ and measurement imperfection as compared to the model described in the previous step of~$\delta_j = \delta = 2 \cdot 10^{-3}$ for all parties~$j=1,2,3$. The value of $\delta$ is determined from the uncertainty in the measurement characterization of the NV system. The value of $\tau$ is chosen sufficiently large for any practical implementation of randomness.

With these values of~$\tau$ and~$\delta$, and the score function \cref{eq:example_score}, the witness correction~$\gamma$ can be computed from \cref{thm:gamma}.
The random number generation correction~$\gamma_1$ is computed using \cref{eq:gamma_1} to be
\begin{equation}
\gamma_1 = \tau \sum_{x=1}^5 \max_{\mathbf{a}} |s(x,\mathbf{a})| \approx 5.8 \cdot 10^{-6}.
\end{equation}
The measurement correction~$\gamma_2$ is computed using \cref{eq:gamma_2,eq:eps_j_xi}.
First, we compute~$\epsilon_\xi^{(j)}$ from \cref{eq:eps_j_xi}.
We find that
\begin{equation}
\epsilon_\xi^{(j)} = \begin{cases}
0, & \mbox{if } \xi=j=1,2,3, \\
\delta(|a^+| + |a^-|) \approx 4.26 \cdot 10^{-3}, & \mbox{otherwise.}
\end{cases}
\end{equation}
Furthermore, from \cref{eq:W_GHZ}, it is clear that $\lambda_\xi^{(j)} = 1$ for all~$\xi, j$, since all local observables $O_\xi^{(j)}$ are any of the four Pauli operators~$I,X,Y,Z$, which have operator norm~$1$.
Since~$\lambda_\xi^{(j)} = 1$ and~$\epsilon_\xi^{(j)} \ll 1$, we use the approximation \cref{eq:gamma_2_approx} instead of the full \cref{eq:gamma_2}.
We compute that
\begin{equation}
\gamma_2 \approx \sum_{\xi=1}^7  |w_\xi| \sum_{j=1}^3  \epsilon^{(j)}_\xi \approx 9.6 \cdot 10^{-3}.
\end{equation}
Hence we find that
\begin{align}\label{eq:concrete gamma}
\gamma = \gamma_1 + \gamma_2 \approx 9.6 \cdot 10^{-3} \leq 0.01.
\end{align}

\begin{table}
	\caption{Nonzero components of the state~$\rho$ used in each round of the simulation of the entanglement witness experiment described in \cref{subsec:example_steps}.
		The expected witness value of~$\rho$ is~$\Tr[W \rho] = -0.172$. }
	\ra{1.5}
	\centering
	\begin{tabularx}{\linewidth}{ C{1} C{1} | C{1} C{1} | C{1} C{1}}
		\toprule
		$M$    &~$\Tr[M \rho]$ 	&~$M$ 		&~$\Tr[M \rho]$	 		&~$M$ 		&~$\Tr[M \rho]$		 \\
		\midrule
		$III$  &~$1$   			&~$XXX$  	&~$0.782$  				&~$IIX$ 	&~$-0.077$			 \\
		$IZZ$  &~$0.787$ 		&~$XYY$ 	&~$-0.737$  			&~$XXI$ 	&~$ 0.072$			 \\
		$ZIZ$  &~$0.478$ 		&~$YXY$  	&~$-0.478$  			&~$YYI$  	&~$-0.047$  			 \\
		$ZZI$  &~$0.608$ 		&~$YYX$ 	&~$-0.507$  			&~$ZZX$ 	&~$-0.047$  			 \\
		\bottomrule
	\end{tabularx}
	\label{tab:rho}
\end{table}
\paragraph*{Step 3.}
In this step, the experiment is simulated.
We play~$n$ rounds of the entanglement witness game.
In our simulation, we take the same state~$\rho_i = \rho$ in each round, corresponding to an iid situation.
This state is computed using the model of tripartite GHZ generation in remote NV centers as discussed in \cref{subsec:NV}.
The valued of the parameters we used in the simulation are given in \cref{tab:simulation_params}.
The resulting state~$\rho$ is given in \cref{tab:rho}.
In addition to the state~$\rho_i = \rho$, we also generate a random setting~$x_i \in \{1,2,3,4,5\}$ with probability given by \cref{eq:example_px}. The randomness is generated by a standard pseudo-random number generator.
For efficiency reasons, we use the ideal POVM elements to simulate the measurement outcomes.
We nevertheless use a nonzero value of~$\delta$ to illustrate the effect of measurement noise on the witness correction.
The set of measurement outcomes~$\mathbf{a}_i$ is obtained according to Born's rule.
From this, we compute and record the score of this round~$s_i = s(x_i,\mathbf{a}_i)$ using the score function \cref{eq:example_score}.

\paragraph*{Step 4a.}
In this step, we calculate the upper bound~$p_\mathrm{bound}$ to the the~$p$-value using \cref{thm:pvalue}.
This is straightforwardly done using~$\gamma$,~$n$,~$c$,~$\smin$,~$\smax$ and the total normalized score~$t_n$ by evaluating \cref{eq:pvalue,eq:Fcirc,eq:binom,eq:alpha} of \cref{thm:pvalue}. The quantities~$\smin$ and~$\smax$ as defined in \cref{eq:score_max_min}, can be computed from the score function \cref{eq:example_score}. In our example we find
$
\smax = - \smin \approx 1.185.
$
The total normalized score~$t_n$ is computed from the recorded scores~$s_i$ for~$i=1,\dots,n$, according to \cref{eq:total_normalized_score}.
To compute~$p_\mathrm{bound}$ from concrete numbers, suppose the a single simulation of the experiment yields a total normalized score of~$t_n = 440.97$.
To compute the~$p$-value bound, we first calculate~$\beta$ using \cref{eq:alpha}. We find (using~$\gamma \leq 0.01$) that
$
\beta \approx 0.662.
$
Hence, we can evaluate our~$p$-value bound, \cref{eq:pvalue,eq:Fcirc}, by
\begin{equation}
p_{\mathrm{bound}} = e F_{n, \beta}(440)^{0.03} F_{n, \beta}(441)^{0.97} \approx 2.1 \cdot 10^{-4}.
\end{equation}
Since~$p \leq p_{\mathrm{bound}} \leq \alpha$, we have rejected the~Null \Cref{assump_main:states} at significance level~$\alpha = 5\%$.
As we trust that the Model \cref{assump_main:randomness,assump_main:measurements,assump_main:sequential} hold, we can reject \cref{assump_main:states} and conclude that our (simulated) source of quantum states is capable of producing genuinely multipartite entangled states~$\rho^* \notin \Sc$.
This is indeed the case, since~$\rho$ defined by \cref{tab:rho} is not biseparable (i.e., not in~$\Sc$, since~$\Tr[W \rho] < 0$).

\begin{figure}
	\centering
	\includegraphics[width=\linewidth]{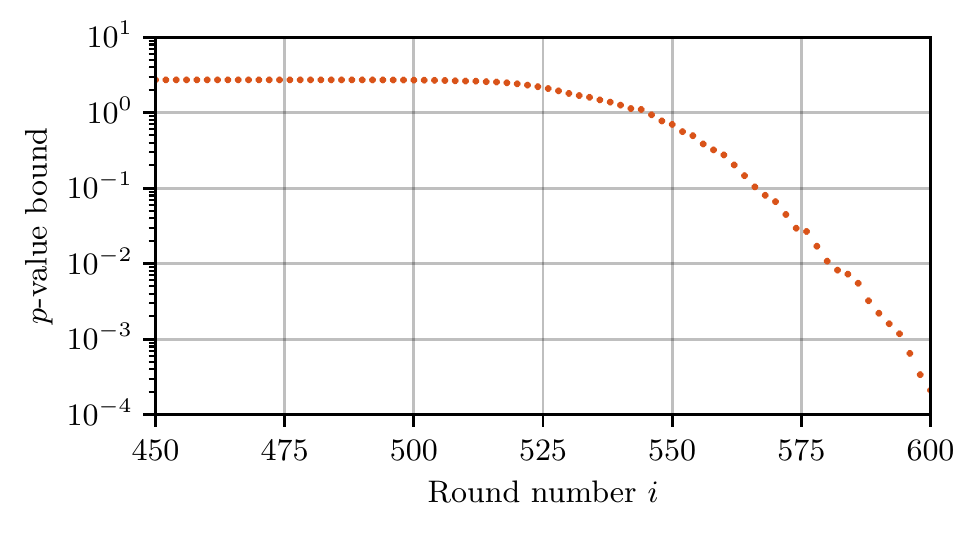}
	\caption{Running~$p$-value bound for our simulated GHZ witness experiment with parameters as defined in the main text. Up until round~$i=520$, the upper bound is the maximal value of~$e \approx 2.72$, due to the prefactor~$e$ in \cref{eq:pvalue}. The final~$p$-value bound after~$n=600$ rounds is~$2.1 \cdot 10^{-4}$.}
	\label{fig:running_pvalue}
\end{figure}

\begin{figure*}
	\centering
	\subfloat[Histogram of witness estimate]{
		\begin{minipage}{0.47\textwidth}
			\includegraphics[width=\linewidth]{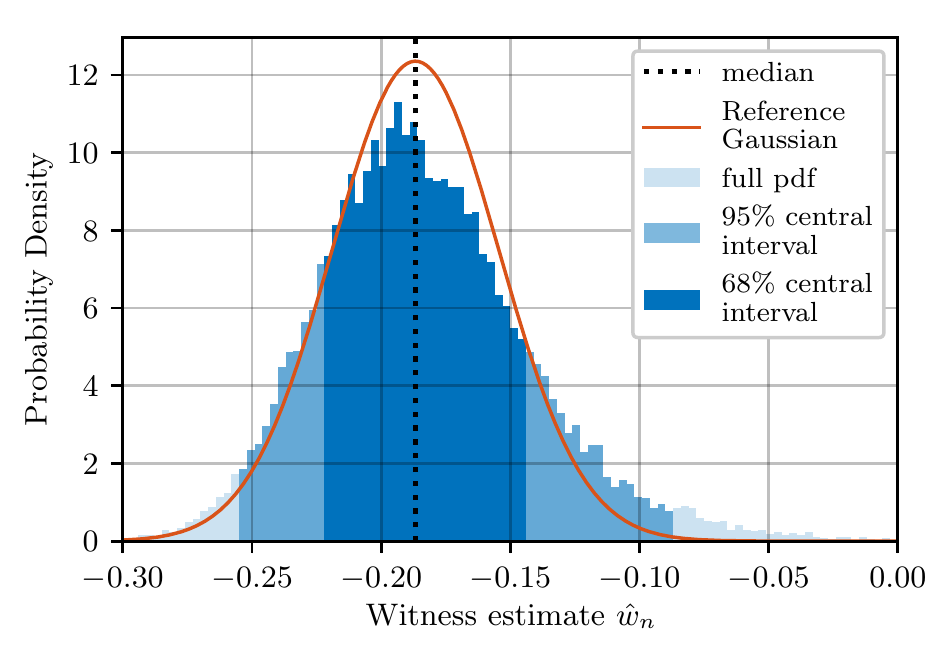}
			\vspace{-12pt}
			\label{fig:fidelity_pdf}
		\end{minipage}
		\quad
	}
	\hfill
	\subfloat[Histogram of~$p$-value bound]{
		\begin{minipage}{0.47\textwidth}
			\includegraphics[width=\linewidth]{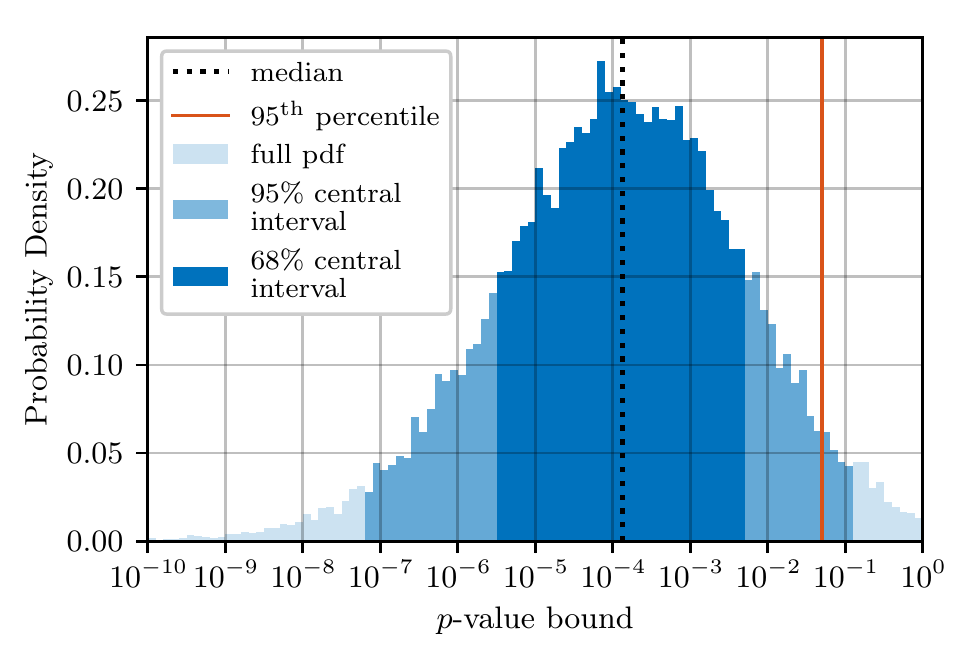}
			\vspace{-12pt}
			\label{fig:pvalue_pdf}
		\end{minipage}
	}
	\caption{Histograms for a Monte-Carlo simulation of~$N = 2 \cdot 10^4$ independent witness experiments (each with~$n = 600$ rounds).
		Here, the noise model is not iid but rather follows a stochastic process, as described in \cref{subsec:non-iid}.
		(a) Histogram of witness estimates.
		The blue histogram shows the distribution of the witness estimate~$\hat{w}_n$ (cf. \cref{eq:witness_estimate_main}).
		The distribution is non-Gaussian, as is obvious from comparison with the red Gaussian reference curve.
		Thus, the standard Gaussian prediction~$\hat{w}_n \pm \hat{\sigma}_{\hat{w}_n}$ does not accurately predict future repetitions of the experiment.
		The skew in the distribution is primarily due to skew in the true average witness value~$\braket{W}_n$ across different runs of the non-iid experiment (due to the changing sequence of~$n$ states).
		(b) Histogram of~$p$-value bounds.
		The blue histogram shows the distribution of~$p$-value upper bounds (cf. \cref{thm:pvalue}).
		The red line indicates the significance level~$\alpha = 0.05$, the number below which we reject the Null \cref{assump_main:states}.\label{fig:simulation_results}}
\end{figure*}

To illustrate how the~$p$-value evolves as more rounds are played (more data is taken), we plot in \cref{fig:running_pvalue} the running value of~$p_\mathrm{bound}$ computed with the total normalized score up to round~$i$, as a function of~$i$.
Up until round~$i \approx 520$, the bound remains constant at its maximal value~$e\approx 2.72$, due to the prefactor of~$e$ in \cref{eq:pvalue}.
In this regime there is insufficient data to draw any conclusions.
Then our~$p$-value bound starts decreasing roughly exponentially in~$i$.
The jitter can be explained by the randomness due to measurement settings and outcomes.

\paragraph*{Step 4b.}
Finally, we illustrate how to estimate~$\braket{W}_n$ as defined in \cref{eq:avg_witness} and compute the confidence intervals around this estimate using \cref{thm:CI}. The witness estimate is directly computed from the observed scores and~$c$ using \cref{eq:witness_estimate_main}. Suppose that a run of the experiment yielded scores such that~$\hat{w}_n = -0.182$ [this is consistent with~$t_n = 440.97$ by comparing \cref{eq:witness_estimate_main,eq:total_normalized_score}].
Now the radius of the confidence interval can be computed from~$\alpha=0.05$,~$n=600$,~$\gamma=0.01$ and~$\ds=2.370$ by solving \cref{eq:vareps_main} numerically. We find that~$\varepsilon \approx 0.216$ in this example. Hence, we find the~$90\%$ two-sided confidence interval and the~$95\%$ one-sided confidence interval
\begin{equation}
\Ic_{0.9} = [-0.398, 0.034], \quad \Jc_{0.95} = (-\infty, 0.034],
\end{equation}
respectively. This is just insufficient to conclude that \emph{on average}~$\rho$ was not in~$\Sc$ (i.e., genuinely multipartite entangled). However, we can compare these intervals to the true value~$\braket{W}_n = -0.172$ (see \cref{tab:rho} and recall that~$\rho_i = \rho$ in this example). Clearly~$\braket{W}_n \in \Ic_{0.9}$ and~$\braket{W}_n \in \Jc_{0.95}$. Moreover, the point estimate~$\hat{w}_n$ is not far off the true value.

\subsection{Illustration of method for correlated noise}\label{subsec:non-iid}
In the previous example we gave a step-by-step illustration of how our method is applied to data gathered in a single (simulated) experiment with identical states each round.
Since our method is applicable in general, we now illustrate its use for states~$\rho_i$ that are neither independent nor identically distributed.
As before, we will consider states~$\rho_i$ that are created from two EPR pairs, each of which has a relative phase~$\theta$ (see \cref{eq:EPR_theta} in \cref{subsec:NV}), but we now assume that the relative phase~$\theta$ drifts between subsequent rounds following a random walk.
That is, the relative phase in round~$i$ depends on its value in round~$i-1$, which creates correlation between the rounds.
This modeled phase drift is motivated by the observation in experiments that the relative phase~$\theta$ changes over time due to fluctuation in the optical path length \cite{Humphreys2018}.
The simulation uses the same parameters as in \cref{subsec:example_steps}, \cref{tab:simulation_params}, except that now~$p_\theta = 0$. Instead, in the first round, we model (both EPR pairs) with known initial phase $\theta_0 = 0^\circ$. Then, each round, the phase drifts with step size $\Delta \theta = \pm 0.98^\circ$ (for each EPR pair the drift is an independent random walk). This creates dependence and correlation between the rounds.
Furthermore, we use the same witness [\cref{eq:projection_witness}] and employ the same measurement model with witness correction~$\gamma \leq 0.01$ [\cref{eq:concrete gamma}] as in \cref{subsec:example_steps}.

Instead of performing a single simulation with this noise model, we perform a Monte Carlo simulation with~$N = 2 \cdot 10^4$ repetitions of the experiment, each with~$n=600$ rounds.
For each repetition, we calculate the witness estimate from the observed score using \cref{eq:witness_estimate_main}. We emphasize that in this simulation, each repetition also a different~$\braket{W}_n$ is realized, because the collection of~$\rho_1,\dots,\rho_n$ can be different every time.
We also compute a bound on the~$p$-value for each repetition of the experiment by using \cref{thm:pvalue}.
We thus obtain~$N = 2 \cdot 10^4$ witness estimates and bounds on the~$p$-value from our Monte-Carlo simulation.

In \cref{fig:simulation_results} we plot histograms of the witness estimates~$\hat{w}_n$ and~$p$-value bounds~$p_\mathrm{bound}$, both of which are directly computed from the \emph{observed scores}~$s_1,\dots,s_n$ (via \cref{eq:witness_estimate_main} and \cref{thm:pvalue}, respectively).
We emphasize that all~$s_i$ are realizations of a random variable~$S_i$, determined by the random measurement settings~$X_i$, the random measurement outcomes~$\mathbf{A}_i$ (following Born's rule), and the correlated random states~$\rho_i$ produced by the source.
The produced states were not necessarily biseparable  (i.e., not necessarily in~$\Sc$).

In \cref{fig:fidelity_pdf} we plot a histogram of the witness estimates~$\hat{w}_n$ and compare to a Gaussian reference curve.
We clearly see that this noise process can lead to a non-Gaussian witness distribution.
The skew in the plot is mainly due to the fact that different true average witness values~$\braket{W}_n = \frac{1}{n} \Tr[W \rho_i]$ are realized in each run of the experiment.
The figure illustrates that under non-iid noise, the standard Gaussian prediction~$\hat{w}_n \pm \hat{\sigma}_{\hat{w}_n}$ does not accurately predict future repetitions of the experiment.
We emphasize that the plot in \cref{fig:fidelity_pdf} can not be interpreted as the probability distribution of~$\hat{w}_n$ under a particular, fixed sequence of states~$\rho_1,\dots,\rho_n$ (because each Monte Carlo iterate used a different sequence of states), and hence no inference could be made from this plot about the uncertainty of~$\hat{w}_n$.

In \cref{fig:pvalue_pdf} we plot a histogram of the~$p$-value bounds.
Note that the~$p$-value is plotted logarithmically on the~$x$-axis, making the bins of unequal size.
We show the significance level~$\alpha = 0.05$ as a red line, which in this example turns out to be the~$95^\mathrm{th}$ percentile. Thus, in~$95\%$ of the simulated experiments, the~Null \cref{assump_main:states} was rejected with statistical confidence.
Note that it is merely a coincidence that~$\alpha = 0.05$ marks the~$95\%$ percentile of the distribution of~$p_\text{bound}$ over the~$N$ simulated experiments.
We emphasize that the~$p$-value itself [as defined in \cref{eq:def p value}] can not be inferred from the simulation results.
This is because the~$p$-value is a statement about the distribution of the total normalized score~$T_n$ assuming that the Null \cref{assump_main:states} holds.
But here the states produced are not necessarily in~$\Sc$, so \cref{assump_main:states} is violated. See \cref{subsec:pvalue_discussion} for more detailed discussion.

\subsection{Example where iid assumption fails}\label{subsection:broken}
\begin{figure}
	\centering
	\includegraphics[width=\linewidth]{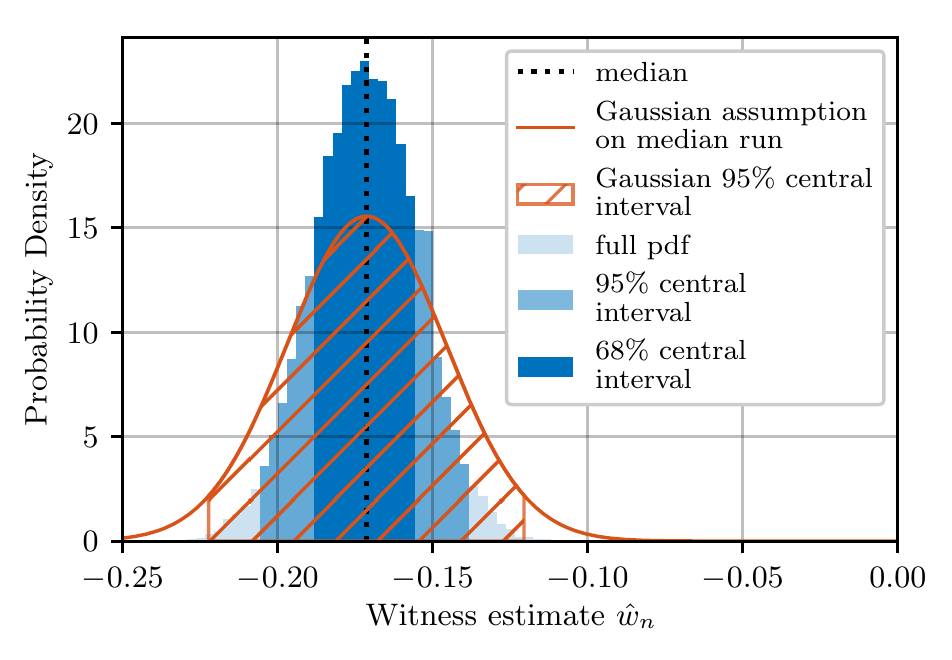}
	\caption{
		Histogram of a Monte-Carlo simulation of~$N = 2 \cdot 10^4$ witness experiments (each with~$n = 600$ rounds).
		The histogram shows the distribution of the witness estimate~$\hat{w}_n$ computed from the observed scores using \cref{eq:witness_estimate_main} in blue. This approximates the true distribution of~$\hat{w}_n$. The~$95\%$ interval is shaded in blue. The true average witness value is~$\braket{W}_n = -0.172$ in each run of the experiment, which coincides with the mean (and median)~$\hat{w}_n$.
		The red curve is shows the inferred distribution from the data (the scores) observed in the median run when the data is assumed to be iid. That is, the red curve is a Gaussian with mean~$\hat{w}_n$ and standard deviation~$\hat{\sigma}_{\hat{w}_n}$ as computed from the observed scores under the iid assumption in this particular run. The central~$95\%$ interval of this Gaussian is shaded with red lines.
		Clearly, the distribution inferred from the iid assumption does not match the true distribution of~$\hat{w}_n$.\label{fig:fid_hist_broken}
	}
\end{figure}

\begin{figure*}
	\centering
	\includegraphics[width=\linewidth]{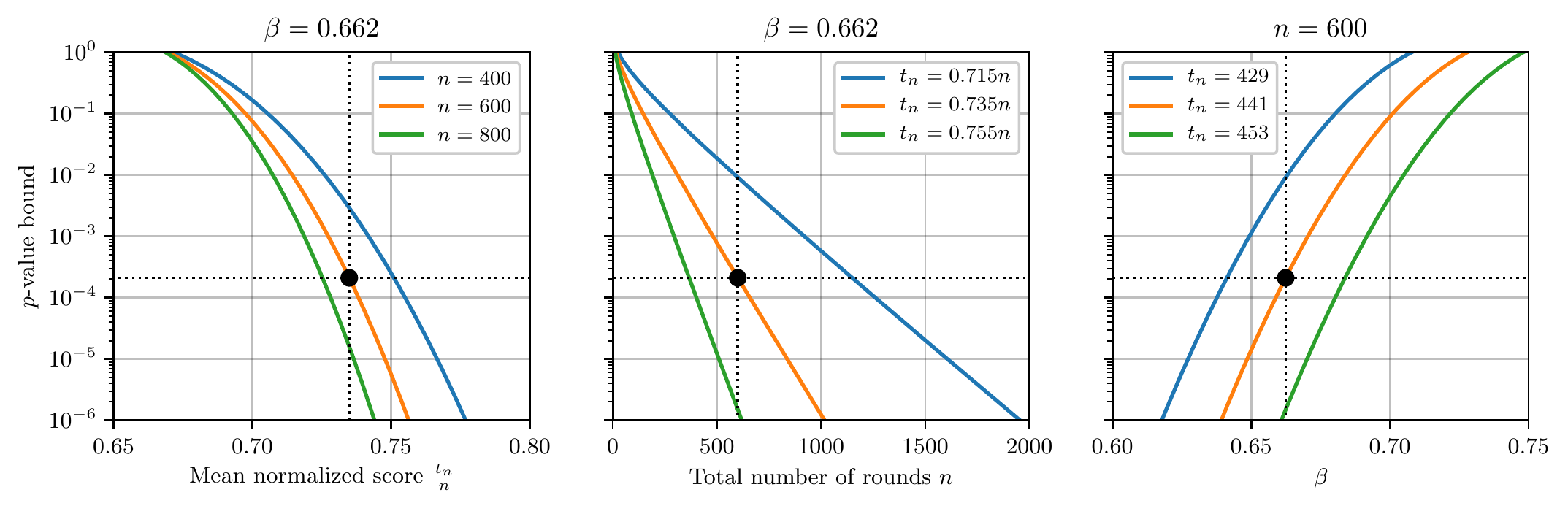}
	\caption{Scaling of the function $p_\mathrm{bound} = e F^\circ_{n, \beta}(t_n)$, which determines the bound on the $p$-value \cref{thm:pvalue} and the radius of the confidence interval \cref{thm:CI}, with the total normalized score~$t_n$~[\cref{eq:total_normalized_score}], the number of rounds~$n$ and~$\beta$~[which depends on $\gamma$, see \cref{eq:alpha}].
		The black dots correspond to the simulation discussed in \cref{subsec:example_steps}, where~$n=600$,~$\beta = 0.662$ (using~$\gamma = 0.01$), and~$t_n = 440.97$, yielding~$p_\mathrm{bound} = 2.1 \cdot 10^{-4}$. \label{fig:pvalue_scaling}}
\end{figure*}

In the previous section we saw an example where the states within each experiment were generated by a non-iid noise process, and we discussed the corresponding distribution of the witness estimate and~$p$-value bound across different runs of the non-iid experiment.
While the witness estimate~$\hat{w}_n$ was non-Gaussian, this was largely explained by skew in true average witness value~$\braket{W}_n$.
In this section we give an explicit example where the iid assumption (more generally the Gaussian assumption) is inappropriate even when the average witness  value~$\braket{W}_n$ is fixed. This example is not based on simulation of NV centers, but is based on a noise model that is designed to clearly exhibit non-iid behavior.

In this example, we generate the states according to a noise process in which the source of states produces a perfect GHZ state in a fraction~$F$ of the~$n$ rounds and produces an orthogonal state in the remaining~$(1-F)n$ rounds.
Importantly, the fraction of good states~$F$ is held constant in this model.
This is the only source of correlations between the states in a single run of the experiment.
This model is an extreme case of the more realistic scenario in which the source intermittently produces good and bad quality states. These bad quality states can for example be produced when a heralding signal (a signal that indicates entanglement was created) is falsely triggered.
Note that the fraction~$F$ of good states equals the true average fidelity to the GHZ state,
\begin{equation}
F = \frac{1}{n} \sum_{i=1}^n \Tr\bigl[\rho_i \ketbra{\mathrm{GHZ}}\bigr].
\end{equation}
and with our choice of witness operator~$W$ [according to \cref{eq:projection_witness}], the true witness value is
\begin{equation}
\braket{W}_n = \frac{1}{n} \sum_{i=1}^{n} \Tr[(\half I - \ketbra{\mathrm{GHZ}}) \rho_i] = \half - F.
\end{equation}
In this example, we fix~$F = 0.672$, which yields a true average witness value is~$\braket{W}_n = -0.172$.

In \cref{fig:fid_hist_broken} we plot a histogram of the witness estimate~$\hat{w}_n$, computed using \cref{eq:witness_estimate_main}, under the described noise model.
The histogram represents~$N = 2 \cdot 10^4$ independent simulated experiments (each with~$n = 600$ rounds).
We also plot the Gaussian distribution that would be obtained under the iid assumption on the median run.
For this run, the mean and standard deviation~$\hat{w}_n \pm \hat{\sigma}_{\hat{w}_n}$ are computed from the observed scores~$s_1,\dots,s_n$ in that run. In particular, for the red curve we observed~$\hat{w}_n \pm \hat{\sigma}_{\hat{w}_n} = -0.171 \pm 0.026$.
The~$95\%$ central interval is shown as the shaded region under the red curve. This region is $[\hat{w}_n - 2\hat{\sigma}, \hat{w}_n+2\hat{\sigma}] = [-0.223, -0.120]$.
However, the~$95\%$ central interval found from the Monte Carlo simulation is only $[-0.205, -0.138]$.
Hence, the~$95\%$ central interval estimated under iid assumption (red shaded region) is roughly~$1.5$ times larger than the~$95\%$ central interval found from the Monte Carlo simulation of~$\hat{w}_n$ (the medium blue region).
This example clearly shows that in this scenario the iid assumption fails.

\subsection{Scaling of the~\texorpdfstring{$p$}{p}-value bound}\label{subsec:scaling}
Finally, we analyze how our bound~$p_{\mathrm{bound}} = e F^\circ_{n, \beta}(t_n)$ [see \cref{thm:pvalue}, \cref{eq:pvalue}] scales with its three free parameters: the total normalized score~$t_n$, the number~$\beta$ (which depends on the witness correction~$\gamma$, see \cref{eq:alpha}) and the number of rounds~$n$.
Note that this also directly illustrates how the size of the confidence interval~$\epsilon$ scales with the significance level $\alpha$ according to $
\alpha = e F^\circ_{n,\half}(\frac{n}{2}(1 + \frac{\varepsilon - \gamma}{\ds}))$ [see \cref{thm:CI}, \cref{eq:vareps_main}].
We discuss the scaling here from the perspective of the rejection analysis.
Then it is qualitatively clear that a larger observed score~$t_n$ and smaller~$\gamma$ (and hence smaller~$\beta$) should lead to statistically more significant results (meaning a smaller $p_{\mathrm{bound}}$).
Similarly, a larger number of rounds~$n$ should lead to a smaller uncertainty and hence make it less likely to observe extreme data under the Null \cref{assump_main:states}.
We show that this is indeed the case in \cref{fig:pvalue_scaling}.
To compute~$p_{\mathrm{bound}}$ numerically, we used~$c = \frac{3}{8}$ and~$\smax = -\smin = 1.185$, just as in the example of \cref{subsec:example_steps} as the fixed values.
The black dots in the figure correspond to~$n=600$,~$\beta = 0.662$ and~$t_n = 440.97$ as used for illustration in \cref{subsec:example_steps}. The corresponding bound is~$p_\mathrm{bound} = 2.1 \cdot 10^{-4}$.

In the left plot of \cref{fig:pvalue_scaling}, we see that it appears that~$p_\mathrm{bound} \propto e^{-t_n^2}$ for fixed~$n$ in the regime $t_n \geq n\beta$.
Even if this is not exact,~$p_\mathrm{bound}$ at least appears to decrease (super)exponentially in~$t_n$.
The $p$-value bound is trivial in the regime $t_n \leq n \beta$. This is understood to mean that $n\beta$ is a total normalized score that is likely to be achieved under the null hypothesis.
In the middle plot, we see that~$p_\mathrm{bound}$ seems to decrease exponentially with~$n$ as well and that the asymptotic decay rate depends on the mean normalized score~$\frac{t_n}{n}$.
This is also expected behavior, which for example also holds in the iid case.
In the right plot, we focus on the effects of increasing~$\beta$.
The~$p$-value bound increases (sub)exponentially with~$\beta$.
It also seems that an increment of~$\beta$ (which is due to increased~$\gamma$ from noisy devices) can be directly compensated by observing an increased~$t_n$, since the three curves seem to be identical but displaced in the $x$-axis.


\section{Discussion}

\subsection{Applicability of our work} \label{subsec:applicability}
The methods we have discussed throughout this work were frequently motivated by and illustrated with entanglement witness experiments. Here we elaborate how generally applicable all of our results are.
First, the estimation experiment can in principle be done for any Hermitian observable~$W$. In \cref{thm:gamma,thm:CI} we do not require that the operator is a witness; we only need to write~$W$ in the form of \cref{eq:witness-decomposition-POVM,eq:observable_decomposition}. Note that this is in principle always possible (although for general~$W$, the number of terms labeled by~$\xi$ might grow exponentially with the total system dimension). This allows one to perform the experiment and do the estimation analysis of Step 4b. in \cref{tab:summary_experiment}. This yields a point estimate and confidence interval for estimating the average value~$\braket{W}_n$ as defined in \cref{eq:avg_witness}, which is valid without any assumption about the sequence of states~$\rho_1,\dots\rho_n$ produced sequentially in the experiment.

Second, the rejection analysis of the experiment is valid for any witness~$W$ that satisfies \cref{eq:witness_definition} for some convex subset of states~$\Sc$. The Null Hypothesis we aim to reject is always the hypothesis that the source can only produce states~$\rho \in \Sc$, conform \cref{assump_main:states}. Therefore, the experiment can witness the creation of any state~$\rho^* \not\in \Sc$, given suitable choices of~$W$ and~$\Sc$.
Examples include witnessing different types of entanglement, defined by non-membership of a particular separability class~$\Sc$.
A frequently encountered separability class is the set~$\Sc_k$ of~$k$-separable states.
These states are a convex combination of pure states that are separable over some~$k$-partition of the subsystems~\cite{Walter2016}.
Our examples in \cref{sec:example} focused on~$k = 2$, the class of biseparable states.
In certain applications it might also be important to certify entanglement across a particular partition of subsystems, which leads to a different definition of~$\Sc$.
In this case, the set~$\Sc$ describes separability between any specific partitioning of the system.
Corresponding witness operators for graph states have been identified in~\cite{Zhou2019}.
Finally,~$\Sc$ can also be chosen to include specific classes of entangled states, so that non-membership signals that the entanglement is not of that particular class. As an example of this, $\Sc$ can be chosen as the convex hull of all biseparable states and~$W$-class states~\cite{Walter2016}.
All of this can be tested with our method by accordingly defining~$\Sc$ in \cref{assump_main:states}.
Of course, this also requires one to find a suitable witness operator~$W$ that separates~$\Sc$ from some particular state~$\rho^*$ that one aims to witness [according to~\cref{eq:witness_definition}].

\subsection{Choosing the free parameters of the experiment}
\subsubsection{The witness operator~\texorpdfstring{$W$}{W}} \label{subsubsec:witness_choice}
The choice of witness operator is another integral part of the design of the experiment.
It affects the number of trials~$n$ needed in an experiment to attain a certain~$p$-value given some target state~$\rho^*$.
It affects the (expected) $p$-value bound or interval size $\varepsilon$ one would observe in an experiment given finite~$n$.
There are two different mechanisms behind this.
First, the choice of $W$ influences the decomposition \cref{eq:witness-decomposition-observables}, and in particular the number of terms that appear in this decomposition.
A smaller number of terms directly implies more statistically significant results.
This is intuitively understood since the budget of $n$ data points can now be used to measure less observables in the decomposition, meaning each can be estimated more accurately.
Second, for witness rejection experiments where some (average) $\rho^*$ is assumed to be produced, one would ideally like to minimize the $p_\mathrm{bound}$ over all witness operators~$W$ for fixed~$\rho^*$ and~$n$.
This task is extremely difficult and impractical.
Instead, one often employs the heuristic of minimizing the witness violation.
That is, one seeks to find the minimizer~$W$ of the following optimization problem
\begin{equation}
\min_{\substack{W : \|W\|_\infty \leq 1, \\ \Tr[W\rho]\geq 0 \; \forall \rho \in \Sc}} \Tr[W \rho^*].
\end{equation}
This witness will then give large contributions to the witness violation, hence making it statistically less likely to observe such a violation with finite number of states from~$\Sc$.
The downside of this is that it only optimizes for the magnitude of the violation in case of infinite statistics.
In particular, it does not take into account the number of terms in the decomposition.

Many different methods for entanglement witness design have been discussed extensively in literature~\cite{Guhne2009,Ryu2012,Toth2009}.
Most methods rely on some form of numerical optimization to find a witness with particular attractive properties such as efficient decomposition, maximal violation with a given state or large noise tolerance.
For pure states, the most widely used witness operator for genuine multipartite entanglement of states close to~$\ket{\psi}$ is the projection witness~$W = \lambda^2 I - \ketbra{\psi}$, where~$\lambda$ is the maximum of the Schmidt coefficients of~$\ket{\psi}$ when all bipartitions are considered~\cite{Toth2009,Guhne2009}.
This is also the choice that we made in \cref{sec:example} for~$\ket{\psi} = \ket{\mathrm{GHZ}}$ [see \cref{eq:projection_witness,eq:W_GHZ}].

\subsubsection{The number of rounds~\texorpdfstring{$n$}{n}}
An important design parameter of a witness experiment is the number~$n$ of rounds to be played.
This is often delicate, as too small~$n$ renders the entire experiment yielding no conclusion (see \cref{fig:running_pvalue}) but too large~$n$ can be overly costly in experiment resources.
To get a good estimate of~$n$ that produces small enough~$p_{\mathrm{bound}}$ without being unnecessarily expensive, the experimenter must have a good understanding of the quantum states being produced and the measurements being performed (e.g., from a theoretical model of the experiment).
With this, the experiment can be simulated, so that one can get an estimate of the distribution of the total normalized score~$T_n$.
Then pick~$n$ such that you are very likely (e.g., in~$90\%$ of the experiments) to realize a~$t_n$ that evaluates with known or estimated~$\gamma$ to a~$p_{\mathrm{bound}}$ less than~$\alpha$ (for the rejection experiment) or that evaluates given~$\alpha$ to a confidence interval that is sufficiently small (e.g. such that the entire interval is smaller than zero).

\subsubsection{The target distribution of measurement settings~\texorpdfstring{$p_x$}{px}} \label{subsubsec:discussion_px}
In principle, the choice of probability distribution over the measurement settings~$p_x$ is left free in this work.
Each choice leads to a valid witness experiment from which the conclusion can be drawn rigorously.
However, bad choices of~$p_x$ will lead to suboptimal bounds on the~$p$-value or suboptimal confidence intervals for equal experimental cost.
An interesting question is what choice of~$p_x$ yields the smallest~$p$-value bound (resp. confidence interval) given fixed~$n$ and~$\gamma$.
This question is hard to address, since~$p_x$ influences both the distribution of~$T_n$ and the value of~$\ds$ (which enters in \cref{thm:pvalue} via~$\beta$ and \cref{thm:CI} via \cref{eq:vareps_main}).
Based on the underlying concentration inequality (Bentkus' inequality, see \cref{lemma:bentkus} in \cref{app:proof2} for details), we conjecture that~$p_x$ should optimally be chosen to minimize~$\ds$.
This conjecture holds in the simulation of our experiments.
Intuitively, this requirement ensures that~$S_i - S_{i-1}$ is often large compared to~$\ds$.
However, finding a distribution~$p_x$ that minimizes~$\ds$ is not a simple problem.
In small instances, it can be solved by brute force search.
If all possible outcome sets are~$\Omega_x^{(j)} = \{1, -1\}$, then the~$p_x$ that minimizes~$\ds$ is given by our recommended choice in \cref{eq:px_multisetting}.

\subsection{Possible modifications of our method}
\subsubsection{Model modifications}
It is not hard to incorporate different ways of characterization the hardware devices [modifying \cref{assump_main:randomness,assump_main:measurements}].
For example, the bias in random measurement setting generation can be quantified by different~$\ell^q$-norms than the~$\ell^\infty$-norm considered here.
Similarly, there are many other ways of characterizing noisy measurement devices.
For example, one may model and characterize noisy measurements by a small misalignment angle, defined as the angle between the Bloch vectors of the ideal and the noisy measurements~\cite{Rosset2012}.
Any modification of the way hardware devices are characterized, requires that \cref{thm:gamma} is modified accordingly to calculate an appropriate correction~$\gamma$.
We emphasize that this is the only part that needs modification.
\Cref{thm:pvalue,thm:CI} can directly be applied with the new bound~$\gamma$.

\subsubsection{Alternative method for computing~\texorpdfstring{$\gamma$}{gamma}} \label{subsec:gamma_other_method}
In the proof of \cref{thm:gamma}, we applied an analytical method to find the witness correction~$\gamma$.
This bound is not necessarily tight in all cases.
If the correction~$\gamma$ is too large for a practical experiment, one can see if tigher bounds can be found numerically.
The optimization problem for gamma under \cref{assump_main:states,assump_main:randomness,assump_main:measurements} of the model takes the form
\begin{equation}
\gamma = - \min_{\tilde{p}_x} \min_{\bar{W}} \min_{\rho \in \Sc} \Tr[\bar W \rho],
\end{equation}
where the minimization over~$\tilde{p}_x$ is constrained by \cref{assump_main:randomness} and the minimization over~$\bar{W}$ is constraint by \cref{assump_main:measurements}.
This optimization is not always easy to carry out.
For example, optimizing over~$\Sc$ is hard in general (e.g., when~$\Sc$ is the set of separable states), although in low dimensions one can often resort to using PPT relaxations~\cite{Peres1996,Horodecki1996}.
Furthermore, the set of feasible~$\bar{W}$ is not convex, and the objective function~$\Tr[\tilde{W \rho}]$ is bilinear in the variables.
For non-convex problems, general nonlinear optimization methods usually only find local optima which may result in witness corrections~$\gamma$ that are smaller than justified.

\subsubsection{Using Hoeffding-Azuma instead of Bentkus} \label{subsubsec:hoeffding}
In this work, we have chosen to use Bentkus' inequality to bound the relevant tail probabilities in \cref{thm:pvalue,thm:CI}, because this is often slightly tighter than the more common Hoeffding-Azuma inequality. We have stated Bentkus' inequality in \cref{lemma:bentkus} in \cref{app:proof2}. Whenever this lemma applies, the Hoeffding-Azuma inequality also applies. So if $R_0,\dots,R_n$ is a supermartingale with $R_0 = 0$ and $-\beta \leq R_i - R_{i-1} \leq 1 - \beta$, then for all $x\geq 0$ Hoeffding-Azuma states that $\Pr[R_n \geq x] \leq e^{-\frac{2x^2}{n}}$. By replacing this in the right places in the proofs of \cref{app:proof2,app:CI}, we can modify the statements of \cref{thm:pvalue,thm:CI}. In particular, \cref{eq:pvalue} can be modified to
\begin{equation}
\Pr[T_n \geq t_n | \Hc_0] \leq e^{-2\frac{\max(t_n - n \beta, 0)^2}{n}},
\end{equation}
with~$\beta$ still defined as in \cref{eq:alpha}, and \cref{eq:vareps_main} can be modified to
\begin{equation}
\alpha = e^{- \frac{n}{2} (\frac{\varepsilon - \gamma}{\ds})^2}.
\end{equation}
Especially this latter modification can be useful, since this makes it easier to compute the confidence interval in \cref{thm:CI}. This is because the above equation can be explicitly solved for~$\varepsilon$ to find
\begin{equation}
\varepsilon = \gamma + \ds\sqrt{\frac{2}{n} \ln(\alpha^{-1})}.
\end{equation}
One may choose to use these equation instead of the theorems as stated, but generally this could come at a slight cost in the tightness of the bounds.

\subsection{General remarks regarding hypothesis testing and the \texorpdfstring{$p$}{p}-value} \label{subsec:pvalue_discussion}
Our rejection analysis of entanglement witness experiments fits into the general framework for hypothesis testing in statistics.
As always, it is important that all relevant parameters are fixed prior to the experiment being carried out.
In our case, this concerns the quantities defined in steps~1.~and~2.~of \cref{subsec:experiment}.
Most notably, this includes the number of trials~$n$ and the significance level~$\alpha$, but also the witness operator, its decomposition, and all other parameters of the~Model Assumptions.
This also implies that one may not simply combine data from multiple experiments (e.g., extending experiments by further trials until a desirable outcome is achieved).
Instead,~$p$-values of different experiments can be combined using known statistical methods~\cite{Elkouss2016,Loughin2004}, or one can carry out a larger, independent experiment instead.

In the framework of hypothesis testing, the~$p$-value is defined as the probability of observing a test statistic~$T$ (in our case, the total normalized score~$T_n$) under the null hypothesis (in our case \cref{assump_main:states}) that is at least as large as the observed value~$t$:
\begin{equation}
p = \Pr[T \geq t | \Hc_0].
\end{equation}
If the~$p$-value is smaller than a previously chosen significance level~$\alpha$, then the null hypothesis is considered to be statistically unlikely to explain the observed~$t$, and we reject the null hypothesis at significance level~$\alpha$.
This means that the~$p$-value is a statement about the observed data in relation to a hypothesis about the distribution of~$T$.
In particular, it does not give any information about the distribution of~$T$ if the null hypothesis does \emph{not} hold.
Neither should the~$p$-value be interpreted as `the probability that the null hypothesis was valid' or as `when the experiment is repeated many times, the null hypothesis is rejected in a~$1-p$ fraction of experiments'.
The~$p$-value is only quantifies the likelihood of a particular assumption about the distribution of~$T$, namely that it is governed by the null hypothesis.
Therefore, it is this a hypothesis that we may plausibly reject upon observing a small~$p$-value.
Finally, we emphasize that the~$p$-value itself can be seen as a random variable, as it is a function of the realization of a randomly observed test statistic~$t$ (whether or not governed by the null hypothesis).
Each experiment is simply a realization of a particular~$p$-value.

\section{Conclusions}
In this work, we proposed two new methods to analyze witness experiments. Both methods are applicable to any source of quantum states that produces a sequence of states on demand.
These states can be arbitrary distributed and correlated -- unlike previous works, which often implicitly assume that the noise is well-behaved so that the central limit theorem applies.
With our rejection analysis, the experimenter can rigorously certify that a device has the capability of producing entangled quantum states.
The statistical confidence is quantified by a bound on the~$p$-value, the probability that an experiment yields data as extreme as the observed data under the assumption that the source produces only separable states.
Hence a small~$p$-value is directly interpreted as strong evidence for the production of entangled states.
The rejection method applies more generally to any witness experiment that is defined by a set of states and a corresponding witness operator.
In particular, it can be used to witness genuine multipartite entanglement.
With our estimation analysis, the experimenter can estimate the average witness value and construct a statistically rigorous confidence interval around this estimate. This method allows the experimenter to conclude whether entanglement was produced on average. This confidence is valid regardless of the type of noise present. The estimation method applies to any general estimation problem. This means that it is not necessary that~$W$ is a witness in the sense of \cref{eq:witness_definition}. Fidelity estimation is therefore also covered by this method.

Both methods we derived are simple to use.
We provide a step-by-step recipe to choose all relevant parameters, collect the necessary experimental data and compute both a bound on the~$p$-value and a average witness estimate with confidence intervals from this collected data.
Our method requires no modification compared to prior methods of performing witness experiments, except possibly the requirement of measuring different settings in random order instead of a fixed, predetermined order.
This requirement is however inevitable if one wishes to deal with arbitrarily correlated noise.
Both of our methods yields a figure of merit -- a bound on the~$p$-value or a point estimate with a confidence interval -- that has a concrete interpretation and is comparable between experiments.
We illustrated our methods with simulations of an experiment in Nitrogen Vacancy centers.

In this work we chose to treat the witness operator as a fixed object that has been chosen with a state to be witnessed in mind.
However, other methods exist in which the witness is modified adaptively based on the measurement data collected so far~\cite{Dai2014}.
The objective of adaptive witnessing is to tune the witness operator (perhaps within a parameterized family) to have maximal detection efficiency, based on the partial knowledge of the noise obtained from prior measurement data.
In particular, one can attempt to identify and counteract coherent components of the noise.
Indeed, coherent and local noise preserve entanglement, and adaptive protocols have been shown to enhance entanglement detection~\cite{Zhou2020}.
An interesting open question is whether the statistical methods we develop in this work can be combined with adaptive witness methods.
This would open ways to increase the detection efficiency of the witness method, while maintaining the robustness against correlated noise.

\begin{acknowledgments}
The authors thank J\'er\'emy Ribeiro for very informative discussions on concentration inequalities and martingale sequences and Liangzhong Ruan for inspiring discussion on estimation problems with non-iid states.
The authors would also like to thank Sophie Hermans and Simon Baier for helpful feedback on earlier versions of this manuscript, and Jonas Helsen for proofreading the manuscript.

BD, MW, and SW acknowledge the NWA Startimpuls.
BD and SW are supported by the Netherlands Organisation for Scientific Research (NWO) through a VIDI grant and by the European Research Council (ERC) trough an ERC Starting Grant QINTERNET.
BD, SW and RH are also supported by the NWO Zwaartekracht Grant Quantum Software Consortium.
RH furthermore acknowledges financial support from NWO through a VICI grant and from the ERC through an ERC Consolidator Grant.
MW is supported by an NWO VENI grant (no.~680-47-459).
MP acknowledges the Marie Sk\l{}odowska-Curie Action `Nanoscale solid-state spin systems in emerging quantum technologies' Spin-NANO (grant agreement no.~676108).
Finally, BD, MP, RH and SW acknowledge the EU Flagship on Quantum Technologies and the Quantum Internet Alliance (funding from EU Horizon 2020 research and innovation programme, Grant No. 820445).
\end{acknowledgments}

\bibliography{NV_project}

\onecolumngrid
\appendix
\section{Game formulation}\label{app:game}
In this first appendix, we expand on \cref{subsec:preliminaries,subsec:measurement-settings}.
We will redefine all objects introduced in the main text, while being slightly more formal and precise.
We will also show that \cref{eq:relation_W_to_score} holds, which relates the definition of the score function to the witness decomposition.
Let us start of with recapturing the notion of a \emph{witness operator} (or simply \emph{witness}).
A witness is defined as a Hermitian observable~$W$ that, for some convex subset of states~$\Sc$, satisfies
\begin{equation}\label{eq:witness_definition_appendix}
\Tr[\rho W] \geq 0 \qquad \forall \rho \in \Sc.
\end{equation}
As described in the main text [\cref{eq:witness-decomposition-observables}], we assume the witness admits a decomposition into the sum of local observables:
\begin{align}
W &= cI + \sum_\xi w_\xi O_\xi^{(1)} \otimes \cdots \otimes O_\xi^{(m)}.   \label{eq:witness-decomposition-app}
\end{align}
Often, it is possible to compute the expectation value of multiple observables $\{O_\xi^{(1)} \otimes \cdots \otimes O_\xi^{(m)}\}_\xi$ from a single measurement setting $M_x^{(1)} \otimes \cdots \otimes M_x^{(m)}$ (we shall label measurement settings by a label~$x$).
To keep track of which observables (labeled by~$\xi$) are related to which measurement setting (labeled by~$x$), we define~$f(\xi) = x$ if the observable~$O_\xi^{(1)} \otimes \cdots \otimes O_\xi^{(m)}$ can be measured by the measurement setting~$M^{(1)}_x \otimes\cdots\otimes M^{(m)}_x$.
This means that there exists a bitstring~$b(\xi) \in \{0,1\}^m$ of length~$m$ such that
\begin{equation}\label{eq:relation-O-M_app}
O_\xi^{(1)} \otimes \cdots \otimes O_\xi^{(m)} = (M_{f(\xi)}^{(1)})^{b_1(\xi)} \otimes \cdots \otimes (M_{f(\xi)}^{(m)})^{b_m(\xi)}.
\end{equation}
To allow for the most general model of measurements, we allow each~$M_x^{(j)}$ in a measurement setting to be measured by a POVM $\{\Pi^{(j),x}_{a}\}_{a \in \Omega_x^{(j)}}$ with outcomes labeled by elements~$a$ in some finite set~$\Omega_x^{(j)}$. That is, we write
\begin{equation}\label{eq:setting_decomposition_app}
M_x^{(j)} = \sum_{a \in \Omega_x^{(j)}} a \, \Pi^{(j),x}_{a}.
\end{equation}
The tensor product POVM for setting $x$ has outcomes $\mathbf{a} = (a_1,...,a_m)$ in $\Omega_x := \Omega_x^{(1)} \times \dots \times \Omega_x^{(m)}$. We denote it by
\begin{align}\label{eq:ideal POVM tensor}
  \Pi^{x}_{\mathbf{a}} := {\Pi}^{(1),x}_{a_1} \otimes \cdots \otimes {\Pi}^{(m),x}_{a_m}.
\end{align}
Sometimes we also need $\Omega^{(j)} := \bigcup_x \Omega_x^{(j)}$ and $\Omega:=\Omega^{(1)}\times\dots\times\Omega^{(m)}$.
We formally define $\Pi^{(j),x}_a := 0$ for $a\in\Omega^{(j)}\setminus\Omega_x^{(j)}$ and $\Pi^{x}_{\mathbf a} := 0$ for $\mathbf a\in\Omega\setminus\Omega_x$.
This is useful in case we want to sum over~$a \in \Omega^{(j)}$ or $\mathbf{a} \in \Omega$ without knowing~$x$.

Using \cref{eq:setting_decomposition_app}, we can write each local observable~$O_\xi^{(j)}$ as
\begin{align}\label{eq:observable_decomposition_app}
O_\xi^{(j)} = \left( M_{f(\xi)}^{(j)} \right)^{b_j(\xi)} = \Bigl( \sum_{a \in \Omega^{(j)}} a \, \Pi^{(j),{f(\xi)}}_a \Bigr)^{b_j(\xi)}
= \sum_{a \in \Omega^{(j)}} a^{b_j(\xi)} \, \Pi^{(j),{f(\xi)}}_a,
\end{align}
where the last equality follows from the facts that~$b_j(\xi) \in \{0,1\}$ and POVM elements sum to the identity~$I$, and we use the convention that~$a^0=1$,~$Q^0 = I$ for any number~$a$ and operator~$Q$. Hence by plugging \cref{eq:observable_decomposition_app} into \cref{eq:witness-decomposition-app}, the witness operator can be further decomposed as
\begin{align}
W
= cI + \sum_\xi w_\xi  \bigotimes_{j=1}^m \sum_{a \in \Omega^{(j)}} a^{b_j(\xi)} \, \Pi^{(j),{f(\xi)}}_a
= cI + \sum_{\mathbf{a}\in\Omega} \sum_x  \Bigl( \sum_{\xi: f(\xi)=x} w_\xi \prod_{j=1}^m a_j^{b_j(\xi)} \Bigr) \Pi^{x}_{\mathbf{a}}. \label{eq:B6}
\end{align}
The last equality follows from rearrangement and \cref{eq:ideal POVM tensor}.
\Cref{tab:notation} in the main text summarizes all relevant objects.

As discussed in \cref{subsec:preliminaries,subsec:measurement-settings}, the witness experiment now proceeds in each round by selecting a measurement setting~$x$ at random, measuring the corresponding POVMs on each subsystem, and, upon obtaining outcomes~$\mathbf a \in \Omega_x$, assigning a score according to \cref{eq:score_function_xi}:
\begin{align}\label{eq:score_function_xi_appendix}
s(x,\mathbf a) = -\frac1{p_x} \sum_{\xi: f(\xi)=x} w_\xi \prod_{j=1}^m a_j^{b_j(\xi)}.
\end{align}
It is useful to define $s(x,\mathbf a) := 0$ if $\mathbf a \not\in\Omega_x$.
Now it becomes apparent why we had written~$W$ in the form \cref{eq:B6}: it allows us to plug in the definition of the score function \cref{eq:score_function_xi_appendix}. As a consequence, we find that
\begin{equation}\label{eq:relation_W_to_score_app}
W = cI - \sum_{x, \mathbf a} p_x \,  s(x, \mathbf{a}) \, \Pi^{x}_{\mathbf{a}},
\end{equation}
showing that \cref{eq:relation_W_to_score} holds.
Finally, we define the algebraic minimum and maximum score, and their difference, as
\begin{equation}
\label{eq:score_max_min_app}
\smin := \min_{x, \mathbf{a}} s(x, \mathbf{a}), \qquad
\smax := \max_{x, \mathbf{a}} s(x, \mathbf{a}), \qquad
\ds := \smax - \smin.
\end{equation}

\section{Formal definition of the Model Assumptions and Null Hypothesis}\label{app:model}
\begin{table}
	\caption{Summary of the random variables present in the model and analysis}
	\ra{1.5}
	\centering
	\begin{tabularx}{\linewidth}{L{0.3} L{0.2}  L{0.2} L{0.2}  L{0.4}}
		\toprule
		Object & Random variable & Type & Restrictions \\
		\midrule
		History &~$H_i$ & arbitrary & \Cref{assumption:sequential} \\
		States &~$\rho_i$ & arbitrary & \Cref{assumption:states} \\
		Measurement setting &~$X_i$ & discrete & \Cref{assumption:randomness} \\
		POVM elements &~$\tilde{\Pi}^{(j)}_{i,a}$ & arbitrary & \Cref{assumption:measurements} \\
		Measurement outcomes &~$\mathbf{A}_i$ & discrete & Born's rule [\cref{eq:born again}] \\
		\midrule
		Score &~$S_i$ & discrete & \Cref{eq:score_random_variable} \\
		Total normalized score &~$T_i$ & discrete & \Cref{eq:app total normalized score} \\
		\bottomrule
	\end{tabularx}
	\label{tab:RV}
\end{table}

In this appendix, we expand on the model as formulated in \cref{subsec:model_informal} and formalize further the Model Assumptions and Null Hypothesis that enter our description of the experiment. This allows us to rigorously draw conclusions from the witness experiment.
The \emph{Model Assumptions} are the collection of the following three assumptions:

\begin{enumerate}[label=(\Roman*),leftmargin=*]
\item \textbf{Sequentiality.}\label[assump]{assumption:sequential}
The experiment is performed in rounds~$i=1,2,\dots,n$.
For each round, there are random variables that model the state~$\rho_i$, measurement setting~$X_i$, POVM~$\{ \tilde \Pi^{(j)}_{i,a} \}_a$ and measurement outcome~$\mathbf A_i$ (defined in the remaining assumptions and summarized in \cref{tab:RV}).
Apart from the restrictions implied by the remaining assumptions, all these random variables in round~$i$ can depend arbitrarily on all previous rounds.
We formalize this by random variables~$H_i$ that models the history of the experiment up until, but excluding, round~$i$.
We assume that~$H_i$ contains at least
\begin{enumerate}
\item\label[assump]{assumption:inclusion}
the history of the preceding round, i.e, the~$\sigma$-algebra of~$H_{i-1}$ is contained in the~$\sigma$-algebra of~$H_i$:
\begin{align}
	\sigma(H_{i-1}) \subseteq \sigma(H_i) \qquad \forall i=2,\dots,n.
\end{align}
\item\label[assump]{assumption:history}
the state, measurement setting and outcome of round $i-1$, i.e.,
\begin{equation}
	\sigma(X_{i-1}, \mathbf A_{i-1}, \rho_{i-1}) \subseteq \sigma(H_i)
	 \qquad \forall i=2,\dots,n,
\end{equation}
By \cref{assumption:inclusion}, this implies that $X_1,\mathbf{A}_1,\rho_1,...,X_{i-1},\mathbf{A}_{i-1},\rho_{i-1}$ are all contained in the history~$H_i$.
\end{enumerate}

\item \textbf{Trusted randomness.}\label[assump]{assumption:randomness}
The measurement setting for the~$i$-th round of the experiment is modeled by a random variable~$X_i$.
We assume that the distribution of $X_i$ given the history~$H_i$ and the state~$\rho_i$ (we denoted this informally as~$\tilde p_{i,x}$ in the main text) is close to a known distribution~$p_x$, i.e., there is~$\tau>0$ such that
\begin{align}
  \left\lvert \Pr[X_i=x | H_i,\rho_i] - p_x \right\rvert \leq \tau \qquad \forall i, x
\end{align}
holds almost surely.
We further assume that
\begin{align}\label{eq:tau bound}
  \tau < \min_x p_x.
\end{align}

\item \textbf{Trusted measurements.}\label[assump]{assumption:measurements}
The measurement in the~$i$-th round of the experiment is modeled by random POVMs $\{ \tilde\Pi^{(j)}_{i,a} \}_a$ that are close to known POVMs~$\{ \Pi^{(j),x}_a \}_{a\in\Omega^{(j)}_x}$, where we recall that $x$~labels the measurement setting, $j=1,\dots,m$ the subsystem, and $\Omega^{(j)}_x$ is the set of possible outcomes
(we formally set $\Pi^{(j),x}_a := 0$ for $a \not\in\Omega^{(j)}_x$).
We model this by assuming that for each of these POVMs $\{ \tilde\Pi^{(j)}_{i,a} \}_a$ there exists a constant~$\delta_j>0$ such that for each~$j$, almost surely,
\begin{align}\label{eq:trusted meas app}
  \bigl\lVert \tilde\Pi^{(j)}_{i,a} - \Pi^{(j),X_i}_a \bigr\rVert_\infty \leq \delta_j \qquad \forall i, j, a.
\end{align}
We further assume that
\begin{align}\label{eq:same outcomes}
  \tilde\Pi^{(j)}_{i,a} = 0 \qquad \forall a \not\in \Omega^{(j)}_{X_i} \qquad \forall i, j, a,
\end{align}
which means the noisy measurements have the same sets of possible outcome as the ideal measurements.
Finally, we assume that the measurement outcomes~$\mathbf A_i$ follow Born's rule, so we demand that
\begin{align}\label{eq:born again}
	\Pr\bigl[\mathbf A_i=\mathbf a\big | H_i,\rho_i,X_i,\{ \tilde\Pi^{(j)}_{i,a} \} \bigr]
= \Tr\bigl[	\rho_i \bigl( \tilde\Pi^{(1)}_{i,a_1} \ot \cdots \ot \tilde\Pi^{(m)}_{i,a_m} \bigr) \bigr]
 \qquad \forall i, \mathbf a.
\end{align}
\end{enumerate}

\noindent
In terms of the above data, the score of the~$i$-th round is given by the following random variable:
\begin{align}\label{eq:score_random_variable}
S_i := s(X_i, \mathbf A_i),
\end{align}
where~$s(x, \mathbf{a})$ is the score function as defined in \cref{eq:score_function_xi_appendix}.

The Model \Cref{assumption:sequential,assumption:measurements,assumption:randomness} are sufficient for analyzing a witness estimation experiment. To analyze a witness rejection experiment, we will also need to define the \emph{Null Hypothesis}. We formally define this as follows (an informal definition was already given in \cref{subsec:model_informal}):

\begin{enumerate}[label=($\Hc_0$), leftmargin=50pt, topsep=12pt]
	\item \textbf{Null Hypothesis: states in~$\Sc$.}\label[hypothesis]{assumption:states}
	The quantum states $\rho_i$ 
  almost surely take values in the set~$\Sc$.
\end{enumerate}
\noindent The Model Assumptions together with this Null Hypothesis are sufficient for performing the witness rejection experiment, with the objective of rejecting \cref{assumption:states}.

\section{Proof of Theorem~\ref{thm:gamma} -- correcting for imperfect randomness and measurements}\label{app:theorem1}
In this appendix, we discuss how the witness inequality must be corrected in the case of imperfect measurements (noisy POVM elements) and imperfect randomness.
To do so, we first introduce shorthand notation for the noisy POVM elements analogously to \cref{eq:ideal POVM tensor}:
\begin{align}
  \tilde\Pi_{i,\mathbf{a}} &:= \tilde{\Pi}^{(1)}_{i,a_1} \otimes \cdots \otimes \tilde{\Pi}^{(m)}_{i,a_m}
\end{align}
Moreover, we define the \emph{effectively implemented operator} $\bar W_i$ for each round~$i$ as [cf.~\cref{eq:relation_Wnoise_to_score,eq:relation_W_to_score_app}]
\begin{equation}\label{eq:W eff app}
  \bar W_i
= cI - \sum_{x, \mathbf a} \Pr[X_i=x | H_i,\rho_i]  s(x,\mathbf a) \, \bar\Pi^x_{i,\mathbf a},
\end{equation}
in terms of the following random variables
\begin{equation}\label{eq:bar pi x app}
  \bar\Pi^x_{i,\mathbf a} := \frac {\Ebb\bigl[ \tilde\Pi_{i,\mathbf a} \mathbf 1_{[X_i=x]} | H_i, \rho_i \bigr]} {\Pr[X_i=x | H_i, \rho_i]}.
\end{equation}
We note that the denominator is nonzero almost surely by \cref{eq:tau bound}, so that $\{ \bar\Pi^x_{i,\mathbf a} \}$ is almost surely well-defined and a POVM for each~$i$ and $x$.
These POVM elements can be interpreted as the expected implemented POVM elements in round~$i$, conditioned on the event that setting $X_i = x$ is realized.
We now bound the deviation of the effectively implemented operator $\bar{W}_i$ from the target operator~$W$ as a result of imperfect measurements and randomness.

\begin{reptheorem}{thm:gamma}
  Let~$W$ be a Hermitian operator of the form of \cref{eq:B6} [not necessarily a witness in the sense of \cref{eq:witness_definition_appendix}].
  Suppose the experiment is modeled by the Model \cref{assumption:randomness,assumption:measurements,assumption:sequential} in \cref{app:model}.
  Define the score function as in \cref{eq:score_function_xi_appendix}.
  Denote by $\bar{W}_i$ the effectively implemented operators as in \cref{eq:W eff app}.
  Then, in every round~$i$,
  \begin{align}\label{eq:thm1_statement_app}
    \lVert \bar W_i - W \rVert_\infty \leq \gamma
  \end{align}
  holds almost surely, where the \emph{witness correction}
  \begin{align}\label{eq:app gamma}
    \gamma := \gamma_1 + \gamma_2
  \end{align}
  is the sum of the \emph{random number generation correction} and the \emph{measurement correction} defined by
  \begin{align}
    \gamma_1 := \tau \sum_x  \max_{\mathbf a}\;\lvert s(x,\mathbf a) \rvert,
    \qquad \mbox{and} \qquad
    \gamma_2 := \sum_\xi \lvert w_\xi \rvert \gamma_2(\xi),
  \end{align}
  respectively, in terms of
  \begin{equation}\label{eq:gamma shorthand}
  \gamma_2(\xi) := \sum_{j=1}^m  \Big(\prod_{k=1}^{j-1} ( \lambda_\xi^{(k)} + \epsilon^{(k)}_\xi ) \Big) \epsilon^{(j)}_\xi \Big(\prod_{k=j+1}^{m} \lambda_\xi^{(k)}\Big), \qquad \epsilon^{(j)}_\xi := b_j(\xi) \delta_j \sum_{a \in \Omega^{(j)}_{f(\xi)}} \lvert a \rvert
  \qquad\text{and}\qquad
  \lambda_\xi^{(j)} := \lVert O_\xi^{(j)} \rVert_\infty.
  \end{equation}
\end{reptheorem}
\begin{proof}
Define
\begin{align}\label{eq:W intermediate app}
  \check W_i := cI - \sum_x p_x \sum_{\mathbf a} s(x,\mathbf a) \, \bar\Pi^x_{i,\mathbf a}.
\end{align}
We will show that, almost surely, $\lVert \check W_i - \bar W_i \rVert_\infty \leq \gamma_1$ and $\lVert W - \check W_i \rVert_\infty \leq \gamma_2$, which together imply the theorem by the triangle inequality.
For the former, we find that by comparing the definitions of $\bar{W}_i$ and $\check{W}_i$, \cref{eq:W eff app,eq:W intermediate app} respectively,  that
\begin{align}
  \lVert \check W_i - \bar W_i \rVert_\infty
&= \lVert \sum_x \left( \Pr[X_i=x | H_i,\rho_i] - p_x \right) \sum_{\mathbf a} s(x,\mathbf a) \bar\Pi^x_{i,\mathbf a} \rVert_\infty \\
&\leq \tau \sum_x \lVert \sum_{\mathbf a} s(x,\mathbf a) \bar\Pi^x_{i,\mathbf a} \rVert_\infty \\
&\leq \tau \sum_x \max_{\mathbf a}\; \lvert s(x,\mathbf a) \rvert = \gamma_1,
\end{align}
using \cref{assumption:randomness} in the first inequality and
using in the second inequality that the operators $\{\bar\Pi^x_{i,\mathbf a}\}$ form a POVM for each~$i$ and~$x$.

To show that $\lVert W - \check W_i \rVert_\infty \leq \gamma_2$, we start by comparing their respective definitions~\cref{eq:W intermediate app,eq:relation_W_to_score_app}, which gives
\begin{align}
  W - \check W_i
= \sum_x p_x \sum_{\mathbf a} s(x,\mathbf a) \left( \bar\Pi^x_{i,\mathbf a} - \Pi^x_{\mathbf a} \right)
= \sum_\xi w_\xi \sum_{\mathbf a} \bigl( \prod_{j=1}^m a_j^{b_j(\xi)} \bigr) \left( \Pi^{f(\xi)}_{\mathbf a} - \bar\Pi^{f(\xi)}_{i,\mathbf a} \right),
\end{align}
where we inserted the definition of the score function~\cref{eq:score_function_xi_appendix} in the second step.
Thus,
\begin{align}
  \lVert W - \check W_i \rVert_\infty
\leq \sum_\xi \lvert w_\xi \rvert \, \lVert \sum_{\mathbf a} \bigl( \prod_{j=1}^m a_j^{b_j(\xi)} \bigr) \left( \Pi^{f(\xi)}_{\mathbf a} - \bar\Pi^{f(\xi)}_{i,\mathbf a} \right) \rVert_\infty,
\end{align}
so it remains to prove that
\begin{align}\label{eq:gamma 2 xi target}
  \lVert \sum_{\mathbf a} \bigl( \prod_{j=1}^m a_j^{b_j(\xi)} \bigr) \left( \Pi^{f(\xi)}_{\mathbf a} - \bar\Pi^{f(\xi)}_{i,\mathbf a} \right) \rVert_\infty
\leq \gamma_2(\xi),
\end{align}
for all~$\xi$ to conclude the proof.
Using the definition of~$\bar\Pi^{f(\xi)}_{i,\mathbf a}$, \cref{eq:bar pi x app}, we find that
\begin{align}
\sum_{\mathbf a} \bigl( \prod_{j=1}^m a_j^{b_j(\xi)} \bigr) \left( \Pi^{f(\xi)}_{\mathbf a} - \bar\Pi^{f(\xi)}_{i,\mathbf a} \right)
&= \sum_{\mathbf a} \bigl( \prod_{j=1}^m a_j^{b_j(\xi)} \bigr) \left( \Pi^{f(\xi)}_{\mathbf a} - \frac {\Ebb\bigl[ \tilde\Pi_{i,\mathbf a} \mathbf 1_{[X_i=f(\xi)]} | H_i, \rho_i \bigr]} {\Pr[X_i=f(\xi) | H_i, \rho_i]} \right) \\
&= \sum_{\mathbf a} \bigl( \prod_{j=1}^m a_j^{b_j(\xi)} \bigr) \frac {\Ebb\bigl[ (\Pi^{f(\xi)}_{\mathbf a} - \tilde\Pi_{i,\mathbf a}) \mathbf 1_{[X_i=f(\xi)]} | H_i, \rho_i \bigr]} {\Pr[X_i=f(\xi) | H_i, \rho_i]} \\
&= \frac {\Ebb\bigl[ \sum_{\mathbf a} ( \prod_{j=1}^m a_j^{b_j(\xi)} )  (\Pi^{f(\xi)}_{\mathbf a} - \tilde\Pi_{i,\mathbf a}) \mathbf 1_{[X_i=f(\xi)]} | H_i, \rho_i \bigr]} {\Pr[X_i=f(\xi) | H_i, \rho_i]}.
\end{align}
Therefore, we find that
\begin{align}\label{eq:xi norm bound app}
  \lVert \sum_{\mathbf a} \bigl( \prod_{j=1}^m a_j^{b_j(\xi)} \bigr) \left( \Pi^{f(\xi)}_{\mathbf a} - \bar\Pi^{f(\xi)}_{i,\mathbf a} \right) \rVert_\infty
\leq \frac {\Ebb\bigl[ \lVert \sum_{\mathbf a} \bigl( \prod_{j=1}^m a_j^{b_j(\xi)} \bigr)  (\Pi^{f(\xi)}_{\mathbf a} - \tilde\Pi_{i,\mathbf a}) \rVert_\infty \mathbf 1_{[X_i=f(\xi)]} | H_i, \rho_i \bigr]} {\Pr[X_i=f(\xi) | H_i, \rho_i]}.
\end{align}
To established the desired \cref{eq:gamma 2 xi target}, it therefore remains to show that, almost surely,
\begin{align}\label{eq:last eq standing}
\Big\lVert \sum_{\mathbf a} \bigl( \prod_{j=1}^m a_j^{b_j(\xi)} \bigr) (\Pi^{f(\xi)}_{\mathbf a} - \tilde\Pi_{i,\mathbf a}) \Big\rVert_\infty \mathbf 1_{[X_i=f(\xi)]}
\;\leq\; \gamma_2(\xi) \, \mathbf 1_{[X_i=f(\xi)]}.
\end{align}
We will show \cref{eq:last eq standing} in the remainder of the proof.
We expand~$\tilde\Pi_{i,\mathbf{a}} - \Pi^{x}_{\mathbf{a}}$ using a telescoping sum of the form
\begin{equation}
C_1 C_2 C_3 - B_1 B_2 B_3 = (C_1 - B_1) B_2 B_3 + C_1 (C_2 - B_2) B_3 + C_1 C_2 (C_3 - B_3).
\end{equation}
In this case, with the tensor product and~$m$ terms, we get
\begin{equation}
  \tilde\Pi_{i,\mathbf{a}} - \Pi^{x}_{\mathbf{a}}
= \sum_{j=1}^m \Bigl( \bigotimes_{k=1}^{j-1} \tilde{\Pi}^{(k)}_{i,a_k} \Bigr) \otimes \left( \tilde{\Pi}^{(j)}_{i,a_j} - \Pi^{(j),x}_{a_j} \right) \otimes \Bigl( \bigotimes_{k=j+1}^{m} \Pi^{(k),x}_{a_k} \Bigr).
\end{equation}
Using this, and by distributing the product and sum in the below expression over the tensor factors, we find that
\begin{align}
\sum_{\mathbf a} \Bigl( \prod_{j=1}^m a_j^{b_j(\xi)} \Bigr) \Bigl( \tilde{\Pi}_{i,\mathbf a} - \Pi^{f(\xi)}_{\mathbf{a}} \Bigr)
= \sum_{j=1}^m \Bigl( \bigotimes_{k=1}^{j-1} \tilde O^{(k)}_{i,\xi} \Bigr) \otimes \left( \tilde O^{(j)}_{i,\xi} - O^{(j)}_\xi \right) \otimes \Big( \bigotimes_{k=j+1}^{m} O^{(k)}_\xi \Big),
\end{align}
where
\begin{align}
\tilde O^{(j)}_{i,\xi} &:= \Bigl( \sum_{a \in \Omega^{(j)}} a \, \tilde{\Pi}^{(j)}_{i,a} \Bigr)^{b_j(\xi)}
=\sum_{a \in \Omega^{(j)}} a^{b_j(\xi)} \tilde{\Pi}^{(j)}_{i,a} ,
\end{align}
by the fact that~$b_j(\xi) \in \{0,1\}$ and POVM elements sum to the identity, in analogy to their ideal counterparts~$O_\xi^{(j)}$ given in \cref{eq:observable_decomposition_app}.
Therefore, using the fact that the operator norm~$\lVert\cdot\rVert_\infty$ is multiplicative with respect to tensor products (i.e.~$\lVert C \otimes B\rVert_\infty = \lVert C\rVert_\infty \lVert B\rVert_\infty$), we find that
\begin{align}
\Big\lVert \sum_{\mathbf{a}} \Big(\prod_{j=1}^{m} a_j^{b_j(\xi)}\Big) (\tilde{\Pi}_{i,\mathbf{a}} - \Pi^{f(\xi)}_{\mathbf{a}} ) \Big\rVert_\infty
&\leq \sum_{j=1}^m  \Big(\prod_{k=1}^{j-1} \Big\lVert \tilde{O}^{(k)}_{i,\xi}  \Big\rVert_\infty\Big)  \Big\lVert \tilde{O}^{(j)}_{i,\xi} - {O}^{(j)}_\xi \Big\rVert_\infty   \Big(\prod_{k=j+1}^{m} \Big\lVert {O}^{(k)}_\xi  \Big\rVert_\infty\Big). \label{eq:B38}
\end{align}
We continue by bounding~$\lVert \tilde{O}^{(j)}_{i,\xi} - {O}^{(j)}_\xi \rVert_\infty$ and~$\lVert \tilde{O}^{(j)}_{i,\xi} \rVert_\infty$.
We start with bounding the first, by considering the two cases~$b_j(\xi) = 0$ and~$b_j(\xi)=1$ separately.
If~$b_j(\xi) = 0$, then
\begin{equation}
  \tilde{O}_{i,\xi}^{(j)} - O_{\xi}^{(j)}
= \sum_{a \in \Omega^{(j)}} a^{b_j(\xi)} \left( \tilde{\Pi} ^{(j)}_{i,a}-{\Pi}^{(j),{f(\xi)}}_{a} \right)
= \sum_{a \in \Omega^{(j)}} \left( \tilde{\Pi}^{(j)}_{i,a}-{\Pi}^{(j),{f(\xi)}}_{a} \right)
= I - I = 0,
\end{equation}
so then~$\lVert \tilde{O}_{i,\xi}^{(j)} - O_\xi^{(j)} \rVert_\infty = 0$.
On the other hand, if~$b_j(\xi) = 1$, then, almost surely,
\begin{align}
\Big\lVert \tilde O_{i,\xi}^{(j)} - O_\xi^{(j)} \Big\rVert_\infty \mathbf 1_{[X_i=f(\xi)]}
&=  \Big\lVert  \sum_{a \in \Omega^{(j)}} a (\tilde{\Pi}^{(j)}_{i,a}-{\Pi}^{(j),{f(\xi)}}_{a}) \Big\rVert_\infty \mathbf 1_{[X_i=f(\xi)]} \\
& \leq \sum_{a \in \Omega^{(j)}} \lvert a \rvert \Big\lVert \tilde{\Pi}^{(j)}_{i,a}-{\Pi}^{(j),{f(\xi)}}_{a} \Big\rVert_\infty \mathbf 1_{[X_i=f(\xi)]} \\
& = \sum_{a \in \Omega^{(j)}_{f(\xi)}} \lvert a \rvert \Big\lVert \tilde{\Pi}^{(j)}_{i,a}-{\Pi}^{(j),X_i}_{a} \Big\rVert_\infty \mathbf 1_{[X_i=f(\xi)]} \\
&\leq \delta_j \sum_{a \in \Omega^{(j)}_{f(\xi)}} \lvert a\rvert \, \mathbf 1_{[X_i=f(\xi)]}. \label{eq:C24}
\end{align}
For the equality, note that we can assume that $X_i = f(\xi)$ by the indicator function; since then both POVMs have outcomes in~$\Omega^{(j)}_{f(\xi)}$ we can also restrict the summation [cf.~\cref{eq:same outcomes}].
The last step holds by \cref{eq:trusted meas app} in \cref{assumption:measurements}.
Both cases are neatly summarized by
\begin{equation}\label{eq:B32}
\left\lVert \tilde{O}_{i,\xi}^{(j)} - O_\xi^{(j)} \right\rVert_\infty \mathbf 1_{[X_i=f(\xi)]}
\leq \epsilon_\xi^{(j)} \mathbf 1_{[X_i=f(\xi)]},
\end{equation}
with~$\epsilon_\xi^{(j)}$ defined as in \cref{eq:gamma shorthand}.
By the triangle inequality, this in turn implies that
\begin{equation}
\left\lVert \tilde{O}_{i,\xi}^{(j)} \right\rVert_\infty \mathbf 1_{[X_i=f(\xi)]}
\leq \left\lVert O_\xi^{(j)} \right\rVert_\infty \mathbf 1_{[X_i=f(\xi)]} + \left\lVert \tilde{O}_{i,\xi}^{(j)} - {O}_\xi^{(j)} \right\rVert_\infty \mathbf 1_{[X_i=f(\xi)]}
\leq \left( \lambda_\xi^{(j)} + \epsilon^{(j)}_\xi \right) \mathbf 1_{[X_i=f(\xi)]}
\label{eq:B33}
\end{equation}
with~$\lambda_\xi^{(j)} = \lVert O_\xi^{(j)} \rVert_\infty$ defined as in \cref{eq:gamma shorthand}.
Thus, plugging \cref{eq:B33,eq:B32} into \eqref{eq:B38}, we obtain the desired bound \cref{eq:last eq standing}, completing the proof.
\end{proof}

\noindent
As a consequence of this theorem, we obtain two corollaries that related the expected score $\Ebb[S_i|H_i, \rho_i]$ to the witness value $\Tr[W\rho_i]$. These corollaries are used in \cref{thm:pvalue,thm:CI}. We state and proof both below.

\begin{corollary}\label{cor:two-sided}
Let~$W$ be a Hermitian operator of the form of \cref{eq:B6} [not necessarily a witness in the sense of \cref{eq:witness_definition_appendix}].
Suppose that the experiment is modeled by the Model \cref{assumption:randomness,assumption:measurements,assumption:sequential} in \cref{app:model}.
Let the score function be defined as in \cref{eq:score_function_xi_appendix}.
Then, in every round~$i$,
\begin{equation}
\lvert \Tr[W\rho_i] - (c - \Ebb[S_i|H_i, \rho_i]) \rvert \leq \gamma
\end{equation}
holds almost surely, where~$c$ and~$\gamma$ are defined in \cref{eq:B6,eq:app gamma}.
\end{corollary}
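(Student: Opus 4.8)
The plan is to identify the quantity $c - \Ebb[S_i \mid H_i, \rho_i]$ with the expectation value $\Tr[\bar W_i \rho_i]$ of the effectively implemented operator $\bar W_i$, and then to invoke \cref{thm:gamma} together with the elementary estimate $\lvert \Tr[A\rho]\rvert \le \lVert A\rVert_\infty$, valid for any density operator $\rho$ and Hermitian operator $A$. Once the identity $c - \Ebb[S_i \mid H_i, \rho_i] = \Tr[\bar W_i \rho_i]$ is established, the corollary is immediate.

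First I would expand the conditional expectation of the score. Since $S_i = s(X_i,\mathbf A_i)$ takes finitely many values, $\Ebb[S_i \mid H_i,\rho_i] = \sum_{x,\mathbf a} s(x,\mathbf a)\, \Pr[X_i=x,\mathbf A_i=\mathbf a \mid H_i,\rho_i]$. To evaluate the joint probability I would condition further, via the tower property, on the $\sigma$-algebra generated additionally by $X_i$ and the realized noisy POVM $\{\tilde\Pi^{(j)}_{i,a}\}$, so that Born's rule \cref{eq:born again} applies and the innermost probability equals $\Tr[\rho_i\tilde\Pi_{i,\mathbf a}]$. By linearity of trace and of conditional expectation this gives $\Pr[X_i=x,\mathbf A_i=\mathbf a\mid H_i,\rho_i] = \Tr\bigl[\rho_i\, \Ebb[\tilde\Pi_{i,\mathbf a}\,\mathbf 1_{[X_i=x]} \mid H_i,\rho_i]\bigr]$, which by the very definition \cref{eq:bar pi x app} of $\bar\Pi^x_{i,\mathbf a}$ equals $\Pr[X_i=x\mid H_i,\rho_i]\,\Tr[\rho_i\bar\Pi^x_{i,\mathbf a}]$ (the quotient being almost surely well defined thanks to \cref{eq:tau bound}).

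Substituting this back into the expansion of $\Ebb[S_i \mid H_i,\rho_i]$ and comparing with the definition \cref{eq:W eff app} of $\bar W_i$, one reads off $c - \Ebb[S_i\mid H_i,\rho_i] = \Tr[\bar W_i\rho_i]$ almost surely. The corollary then follows in a single line: $\lvert \Tr[W\rho_i] - (c-\Ebb[S_i\mid H_i,\rho_i])\rvert = \lvert \Tr[(W-\bar W_i)\rho_i]\rvert \le \lVert W - \bar W_i\rVert_\infty \le \gamma$, where the first inequality is trace-norm/operator-norm duality applied to the density operator $\rho_i$, and the second is precisely the conclusion \cref{eq:thm1_statement_app} of \cref{thm:gamma}.

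I expect the only real obstacle to be care with the conditioning rather than any genuine difficulty: at each stage one must check that $H_i$ and $\rho_i$ are measurable with respect to the enlarged $\sigma$-algebras (so that they may be pulled out of the inner expectations), that $X_i$ is discrete so the finite sum over $x$ is legitimate, and that $\Pr[X_i=x\mid H_i,\rho_i]>0$ almost surely so that $\bar\Pi^x_{i,\mathbf a}$ is defined — all of which is guaranteed by \cref{assumption:sequential,assumption:randomness} and \cref{eq:tau bound}. Beyond this bookkeeping the argument is entirely mechanical.
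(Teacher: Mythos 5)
Your proposal is correct and follows essentially the same route as the paper's proof: both identify $c - \Ebb[S_i\mid H_i,\rho_i]$ with $\Tr[\bar W_i\rho_i]$ by applying Born's rule under the enlarged conditioning and then recognizing the definition of $\bar\Pi^x_{i,\mathbf a}$, after which the bound follows from \cref{thm:gamma} and $\lvert\Tr[A\rho]\rvert\le\lVert A\rVert_\infty$. The only cosmetic difference is that the paper first computes $\Ebb[S_i\mid H_i,\rho_i,X_i,\{\tilde\Pi^{(j)}_{i,a}\}]$ and then takes the outer expectation, whereas you expand via the joint probability $\Pr[X_i=x,\mathbf A_i=\mathbf a\mid H_i,\rho_i]$; these are the same computation.
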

\begin{proof}
We first compute
\begin{align}
  \Ebb\bigl[S_i | H_i,\rho_i,X_i,\{ \tilde\Pi^{(j)}_{i,a} \} \bigr]
&= \Ebb\bigl[s(X_i,\mathbf A_i) | H_i,\rho_i,X_i,\{ \tilde\Pi^{(j)}_{i,a} \} \bigr] \\
&= \sum_{\mathbf a} \Ebb\bigl[s(X_i,\mathbf a) \mathbf 1_{[\mathbf A_i=\mathbf a]} | H_i,\rho_i,X_i,\{ \tilde\Pi^{(j)}_{i,a} \} \bigr] \\
&= \sum_{\mathbf a} s(X_i,\mathbf a) \Ebb\bigl[\mathbf 1_{[\mathbf A_i=\mathbf a]} | H_i,\rho_i,X_i,\{ \tilde\Pi^{(j)}_{i,a} \} \bigr] \\
&= \sum_{\mathbf a} s(X_i,\mathbf a) \Pr\bigl[\mathbf A_i=\mathbf a \big| H_i,\rho_i,X_i,\{ \tilde\Pi^{(j)}_{i,a} \} \bigr] \\
&= \sum_{\mathbf a} s(X_i,\mathbf a) \Tr[ \rho_i \tilde\Pi_{i,\mathbf a}],
\end{align}
using Born's rule \eqref{eq:born again} in the last equation.
This implies that
\begin{align}
  \Ebb\bigl[S_i | H_i,\rho_i \bigr]
&= \sum_{\mathbf a} \Ebb\bigl[s(X_i,\mathbf a) \Tr[ \rho_i \tilde\Pi_{i,\mathbf a} ] \big| H_i, \rho_i \bigr] \\
&= \sum_{x,\mathbf a} s(x,\mathbf a) \Ebb\bigl[\mathbf 1_{[\mathbf X_i = x]} \Tr[ \rho_i \tilde\Pi_{i,\mathbf a} ] \big| H_i, \rho_i \bigr] \\
&= \Tr\bigl[ \rho_i \sum_{x,\mathbf a} s(x,\mathbf a) \Ebb\bigl[\mathbf 1_{[\mathbf X_i = x]} \tilde\Pi_{i,\mathbf a} \big| H_i, \rho_i \bigr] \bigr] \label{eq:C40} \\
&= c - \Tr[ \bar W_i \rho_i], \label{eq:C41}
\end{align}
where in \cref{eq:C40} we pull can pull $\rho_i$ out of the conditional expectation value since it is conditioned on $\rho_i$ and in \cref{eq:C41} we use the definition of $\bar W_i$ (given in \cref{eq:W eff app,eq:bar pi x app}).
From this, it follows that
\begin{align}
  \lvert \Tr[W\rho_i] - (c - \Ebb[S_i|H_i, \rho_i]) \rvert
= \lvert \Tr[(W - \bar W_i) \rho_i] \rvert
\leq \lVert W - \bar W_i \rVert_\infty \leq \gamma,
\end{align}
where the last inequality follows from the Theorem, \cref{eq:thm1_statement_app}.
\end{proof}

\begin{corollary}\label{cor:one-sided}
Let~$W$ be an operator of the form of \cref{eq:B6} that is a witness for the set~$\Sc$ [i.e.,~$W$ satisfies \cref{eq:witness_definition_appendix}].
Suppose that the experiment is modeled by the Model \Cref{assumption:randomness,assumption:measurements,assumption:sequential} in \cref{app:model}.
Furthermore, assume that the \cref{assumption:states} holds with respect to this~$\Sc$.
Then, almost surely, $\Ebb[S_i | H_i] \leq c + \gamma$ for all rounds~$i=1,\dots,n$, where~$c$ and~$\gamma$ are defined in \cref{eq:B6,eq:app gamma}.
\end{corollary}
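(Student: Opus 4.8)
The plan is to derive \cref{cor:one-sided} directly from \cref{cor:two-sided} by invoking the Null Hypothesis and the witness property, and then to pass from conditioning on $(H_i,\rho_i)$ to conditioning on $H_i$ alone via the tower property of conditional expectation. First I would rewrite the two-sided bound of \cref{cor:two-sided}: it gives, almost surely,
\begin{equation*}
\Ebb[S_i \,|\, H_i, \rho_i] \leq c - \Tr[W\rho_i] + \gamma .
\end{equation*}
Now I would use the assumptions. Under \cref{assumption:states}, the state $\rho_i$ almost surely takes values in $\Sc$, and since $W$ is by hypothesis a witness for $\Sc$ in the sense of \cref{eq:witness_definition_appendix}, we get $\Tr[W\rho_i] \geq 0$ almost surely. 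Plugging this into the displayed inequality yields $\Ebb[S_i \,|\, H_i, \rho_i] \leq c + \gamma$ almost surely.

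To finish, I would apply the law of total expectation. Because conditioning on the pair $(H_i,\rho_i)$ refines the information in $H_i$ — formally $\sigma(H_i) \subseteq \sigma(H_i,\rho_i)$ — the tower property gives
\begin{equation*}
\Ebb[S_i \,|\, H_i] = \Ebb\bigl[\, \Ebb[S_i \,|\, H_i, \rho_i] \;\big|\; H_i \,\bigr] \leq \Ebb[\,c + \gamma \,|\, H_i\,] = c + \gamma
\end{equation*}
almost surely, where the inequality uses monotonicity of conditional expectation together with the bound just established, and the last equality holds because $c+\gamma$ is a constant. This is precisely the claimed statement, valid for every round $i = 1,\dots,n$.

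I do not expect a genuine obstacle here; this is a short consequence of the already-proved \cref{cor:two-sided}. The only point deserving a word of care is that the intermediate bound $\Ebb[S_i \,|\, H_i,\rho_i] \leq c+\gamma$ holds only on an event of probability one, so one should note that taking the outer conditional expectation of an almost-surely-bounded random variable preserves the bound almost surely (the exceptional null set cannot enlarge under $\Ebb[\,\cdot\,|\,H_i]$). With that remark the argument is complete.
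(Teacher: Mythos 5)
Your proof is correct and follows exactly the paper's own argument: apply \cref{cor:two-sided}, use the Null Hypothesis together with the witness property to drop the $-\Tr[W\rho_i]$ term, and then remove the conditioning on $\rho_i$ via the tower property. Nothing further is needed.
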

\begin{proof}
From \cref{cor:two-sided}, we directly obtain the one-sided inequality
$
\Ebb[S_i | H_i, \rho_i]    \leq   \gamma + c - \Tr[\rho_i W].
$
Now, \cref{assumption:states} and \cref{eq:witness_definition_appendix} imply that~$\Tr[\rho_i W] \geq 0$ holds almost surely, so that~$\Ebb[S_i | H_i, \rho_i]    \leq   \gamma + c - \Tr[\rho_i W] \leq c + \gamma$.
Thus, $\Ebb[S_i | H_i] = \Ebb[\Ebb[S_i | H_i, \rho_i] | H_i] \leq \Ebb[c + \gamma | H_i] = c + \gamma$.
\end{proof}

\section{Proof of Theorem~\ref{thm:pvalue} -- bounding the~\texorpdfstring{$p$}{p}-value}\label{app:proof2}
The goal of this section is to prove \cref{thm:pvalue}. In this theorem, we derive a bound on the~$p$-value.
The main ingredient for the proof is a concentration inequality that bounds the tail probabilities of (super)martingales.
Here, we use a concentration inequality due to Bentkus~\cite{Bentkus2004,Bentkus2006}, which was used in Ref.~\cite{Elkouss2016} for the analysis of Bell violation experiments.
We state Bentkus' inequality in the form of Theorem~1 in Ref.~\cite{Elkouss2016}.

\begin{lemma}[Bentkus' inequality~\cite{Bentkus2004,Bentkus2006}]\label{lemma:bentkus}
Let~$R_0$,~$R_1$, \dots,~$R_n$ be a supermartingale (with respect to an arbitrary filtration) with~$R_0=0$ and differences bounded as
$-\beta_i \leq R_i - R_{i-1} \leq 1-\beta_i$ almost surely for constants~$\beta_i \in [0,1]$.
Let~$\beta := (\beta_1+\dots+\beta_n)/n$.
Then, for all~$x\in\R$,
\begin{align}
  \Pr[R_n \geq x - n\beta] \leq e F^\circ_{n,\beta}(x),
\end{align}
where
\begin{align}\label{eq:Fc}
  F^\circ_{n,\beta}(x) = F_{n,\beta}(\lfloor x\rfloor)^{1 - (x - \lfloor x\rfloor)} F_{n,\beta}(\lfloor x\rfloor+1)^{x - \lfloor x\rfloor}
\end{align}
is the log-linear interpolation of the survival function of a binomial distribution with parameters~$n$ and~$\beta$,
\begin{align}
F_{n,\beta}(k) &= \Pr[X \geq k | X \sim \Binom(n,\beta)] \\
&= \sum_{l = k}^{n} \binom{n}{l} \beta^l (1-\beta)^{n-l}.
\end{align}
In \cref{eq:Fc}, we define~$0^0 := 1$.
\end{lemma}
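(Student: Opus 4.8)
The statement is \emph{Bentkus' inequality}, a known martingale concentration bound due to Bentkus~\cite{Bentkus2004,Bentkus2006} and quoted here in the form of Theorem~1 of Ref.~\cite{Elkouss2016}, so the honest route is to cite it; below I outline the strategy I would follow to establish it from scratch. First I would recast the increments into $[0,1]$-form. Setting $Y_i := R_i - R_{i-1} + \beta_i \in [0,1]$ and $S_n := \sum_{i=1}^n Y_i = R_n + n\beta$, the supermartingale hypothesis becomes $\Ebb[Y_i \mid \mathcal{F}_{i-1}] \leq \beta_i$ almost surely, and the claim is exactly
\[
\Pr[S_n \geq x] \leq e\, F^\circ_{n,\beta}(x),
\]
i.e.\ the upper tail of $S_n$ is dominated, up to the universal constant $e$, by that of a binomial variable $B_n \sim \Binom(n,\beta)$ with matching total mean $n\beta$.

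The reduction steps are routine. \emph{First}, for the \emph{upper} tail it suffices to treat the matched-mean case $\Ebb[Y_i \mid \mathcal{F}_{i-1}] = \beta_i$: if the conditional mean is strictly smaller, one couples $Y_i$ to a larger variable $Y_i' \geq Y_i$ taking values in $[0,1]$ with conditional mean exactly $\beta_i$ (pushing the surplus mass up to $1$), which only increases $S_n$ pathwise and hence increases $\Pr[S_n \geq x]$. \emph{Second}, the binomial corresponds to independent Bernoulli increments, i.e.\ the two-point law on $\{0,1\}$, and these are extremal for the relevant functionals: for any convex test function $\phi$ the chord bound $\phi(y) \leq (1-y)\phi(0) + y\,\phi(1)$ on $[0,1]$ gives $\Ebb[\phi(Y_i)\mid\mathcal{F}_{i-1}] \leq (1-\beta_i)\phi(0) + \beta_i\phi(1)$, which is $\Ebb[\phi(\mathrm{Bernoulli}(\beta_i))]$. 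Iterating across rounds, together with a Schur-concavity/symmetrization argument reducing the heterogeneous $\beta_i$ to their common average $\beta$, dominates every relevant moment of $S_n$ by the corresponding moment of $B_n$.

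The genuine difficulty is to turn this moment domination into the \emph{true} binomial survival probability $F_{n,\beta}$ at the cost of only the constant $e$. Instantiating the previous step with the exponential $\phi(y) = e^{\lambda y}$ and applying Markov yields merely the \emph{Chernoff} bound for the binomial, which is strictly weaker than $F_{n,\beta}$ and only reproduces Hoeffding-type estimates. Bentkus' improvement replaces the exponential by carefully chosen \emph{polynomial} majorants of the indicator $\mathbf{1}[\,\cdot \geq x]$ (built from truncated powers / falling factorials adapted to the binomial), and proves by a backward induction over the rounds that $\Ebb[f(S_n)] \leq e\,\Ebb[f(B_n)]$ with $f$ selected so that $\Ebb[f(B_n)] = F_{n,\beta}(x)$; Markov's inequality $\Pr[S_n \geq x] \leq \Ebb[f(S_n)]$ then closes the argument, the factor $e$ emerging from optimizing the polynomial degree (via $(1+s/k)^k \leq e^s$). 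This polynomial-test-function analysis is the main obstacle and is precisely the content one quotes from Bentkus; I would not expect to reproduce it without essentially re-deriving his argument.

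Finally I would dispatch non-integer thresholds through the log-linear interpolation $F^\circ_{n,\beta}$ of \cref{eq:Fc}: the bound is naturally produced at integer thresholds, and because the underlying estimates combine multiplicatively in the Chernoff/log-MGF regime, the value at an intermediate $x$ is the weighted geometric mean of the two neighboring integer bounds, with the convention $0^0 := 1$ covering the endpoint. In summary, \emph{Step~1} (coupling) and \emph{Step~2} (convexity and symmetrization) are bookkeeping; the real work lies entirely in the polynomial-majorant induction yielding the sharp binomial tail with the universal factor $e$.
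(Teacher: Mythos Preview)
Your instinct is exactly right and matches the paper: the paper does \emph{not} prove this lemma at all. It is stated as a cited result, attributed to Bentkus~\cite{Bentkus2004,Bentkus2006} and quoted in the precise form of Theorem~1 of Ref.~\cite{Elkouss2016}, with no accompanying proof. So your opening sentence --- ``the honest route is to cite it'' --- is precisely what the paper does, and that already suffices.

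Everything you wrote after that is a bonus proof sketch that goes beyond the paper. As a sketch it is broadly accurate: the recentering $Y_i = R_i - R_{i-1} + \beta_i \in [0,1]$, the coupling to the matched-mean case, and the convexity reduction to Bernoulli increments are the standard preliminary moves; the Schur-concavity/symmetrization step replacing heterogeneous $\beta_i$ by their average $\beta$ is the Hoeffding extremal-binomial fact; and you correctly identify that the genuine content is Bentkus' polynomial-majorant argument yielding the sharp factor $e$ rather than the weaker Chernoff/Hoeffding constant. Your explanation of the log-linear interpolation for non-integer thresholds is also on point. None of this is needed to match the paper, but if you wanted to include a sketch, this one is a fair summary of where the work lies.
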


\noindent
With this main ingredient in hand, we now state and prove a bound on the~$p$-value~$p := \Pr[T_n \geq t_n | \Hc_0 ]$, where
\begin{align}\label{eq:app total normalized score}
  T_n := \sum_{i=1}^n \frac{S_i - \smin}{\ds}
\end{align}
is the total normalized score after~$n$ rounds (see \cref{eq:score_function_xi_appendix,eq:score_random_variable} for the definition of the score).
We recall that~$\smin$ and~$\smax$ are defined in \cref{eq:score_max_min_app} as the minimum and maximum value, respectively, that can be attained by the score function.

\begin{reptheorem}{thm:pvalue}
	Let~$W$ be an operator of the form of \cref{eq:B6} that is a witness for the set~$\Sc$ [i.e., which satisfies \cref{eq:witness_definition_appendix}].
	Assume that the experiment is modeled by the Model \cref{assumption:randomness,assumption:measurements,assumption:sequential} in \cref{app:model} and consider the Null \cref{assumption:states} with respect to~$\Sc$.
	Then, for all~$t_n\in\R$,
	\begin{align}
	\Pr[T_n \geq t_n | \Hc_0] \leq e F^\circ_{n,\beta}(t_n), \qquad\text{where}\qquad \beta := \min\Bigl(1, \frac{c + \gamma - \smin}{\ds} \Bigr),  \label{eq:pvalue with alpha}
	\end{align}
	and where~$c$,~$\gamma$, and~$F^\circ_{n,\beta}$ are defined in \cref{eq:B6,eq:app gamma,eq:Fc}, respectively.
\end{reptheorem}
\begin{proof}
We first verify that~$\beta\geq0$, so that~$F^\circ_{n,\beta}$ is well-defined and we can later apply \cref{lemma:bentkus}.
Now, \cref{assumption:states} and \cref{eq:witness_definition_appendix} imply that~$\Tr[\rho_i W] \geq 0$.
Therefore by the decomposition of $W$ from~\cref{eq:relation_W_to_score_app}, we find that
\begin{align}\label{eq:valid omega bound}
  0
\leq \Tr[W\rho_i]
= c - \sum_{x, \mathbf{a}} p_x \Tr\Bigl[\rho_i \Pi_{\mathbf{a}}^{x} \Bigr] s(x, \mathbf{a})
\leq c - s_{\min}.
\end{align}
Since~$\gamma\geq0$, it follows that indeed~$\beta\geq0$.

Next, we verify that the bound in \cref{eq:pvalue with alpha} holds for~$\beta=1$.
Indeed,~$\Pr[T_n \geq t_n]>0$ only if~$t_n\leq n$, since~$T_n\in[0,n]$.
But in this case,~$F^\circ_{n,1}(t_n) = 1$, so the bound reads~$\Pr[T_n \geq t]\leq e$, which holds trivially.

Now assume that~$\beta\in[0,1)$, so that
\begin{align}\label{eq:happy alpha}
  \beta = \frac{c + \gamma - \smin}{\ds}.
\end{align}
The proof comes down to constructing a suitable supermartingale and applying \cref{lemma:bentkus}.
For~$i=0,1,\dots,n$, define
\begin{equation}\label{eq:B1}
R_i := \sum_{l=1}^{i} \frac{S_l - c  - \gamma}{\ds}.
\end{equation}
By convention of the empty sum, this means~$R_0 = 0$. Note that~$R_i$ is affinely related to~$T_i$, the total normalized score after round~$i$.

We first show that~$R_0$,~$R_1$, \dots,~$R_n$ is a supermartingale with respect to~$H_1$, \dots,~$H_n$, i.e.,
\begin{equation}\label{eq:supermartingale}
  \Ebb[R_i | H_1,\dots,H_i] \leq R_{i-1} \qquad \forall i=1,\dots,n.
\end{equation}
Note that, for any~$i=1,\dots,n$, we have the recursion formula
\begin{align}\label{eq:R recursion}
  R_i = R_{i-1} + \frac{S_i - c - \gamma}{\ds}.
\end{align}
In view of \cref{eq:score_random_variable,eq:B1},~$R_{i-1}$ depends only on the measurement settings and outcomes of the first~$i-1$ rounds.
Thus,
$\Ebb[R_{i-1} | H_1,\dots,H_i] = R_{i-1}$ by \cref{assumption:history}.
Moreover,~$\Ebb[S_i | H_1,\dots, H_i] = \Ebb[S_i|H_i]$ by \cref{assumption:inclusion}.
Therefore,
\begin{align}
  \Ebb[R_i | H_1,\dots,H_i]
= R_{i-1} + \frac{\Ebb[S_i|H_i] - c - \gamma}{\ds}
\leq R_{i-1}.
\end{align}
where the last inequality holds by \cref{cor:one-sided}, which asserts that~$\Ebb[S_i|H_i]  \leq c + \gamma$ almost surely. Thus,~$R_0,R_1,\dots,R_n$ is a supermarginale as claimed.

Next we bound the differences.
Since~$\smin \leq S_i \leq \smax$, we find using \cref{eq:R recursion} that
\begin{align}
\frac{\smin - c - \gamma}{\ds}
\leq R_i - R_{i-1}
= \frac{S_i - c - \gamma}{\ds}
\leq \frac{\smax - c - \gamma}{\ds}.
\end{align}
Thus, it holds that~$-\beta \leq R_i - R_{i-1} \leq 1-\beta$ since~$\beta$ is given by \cref{eq:happy alpha}.
Indeed,
\begin{align}
  -\beta
= \frac{\smin - c - \gamma}{\ds}, \qquad
  1-\beta
= \frac{\smax - c - \gamma}{\ds}.
\end{align}
Thus we can apply \cref{lemma:bentkus}, which gives
\begin{equation}\label{eq:bentkus in proof}
  \Pr[R_n \geq x - n\beta] \leq e F^\circ_{n,\beta}(x) \qquad\forall x\in\R.
\end{equation}
To complete the proof, we need to relate~$T_n$ to~$R_n$ and evaluate at the appropriate value of~$x$.
By comparing \cref{eq:B1,eq:app total normalized score}, we find that
\begin{align}
  R_n
= \sum_{l=1}^n \frac{S_l - \smin - (c + \gamma - \smin)}{\ds}
= T_n - n \beta.
\end{align}
If we combine this with \cref{eq:bentkus in proof}, we arrive at
\begin{align}
  \Pr[T_n \geq t_n | \Hc_0]
= \Pr[R_n + n \beta \geq t_n | \Hc_0]
= \Pr[R_n \geq t_n - n\beta | \Hc_0]
\leq e F^\circ_{n,\beta}(t_n).
\end{align}
This completes the proof.
\end{proof}

\section{Proof of Theorem~\ref{thm:CI} -- confidence intervals for the witness estimate}\label{app:CI}
In this appendix we will prove \cref{thm:CI} in the main text.
This theorem establishes a~$(1-2\alpha)$ two-sided and a~$(1-\alpha)$ one-sided  confidence interval around an estimate~$\hat{w}_n$ of the average witness value~$\braket{W}_n$. Recall [from \cref{eq:witness_estimate_main}] that the witness estimate is computed from the score as
\begin{equation}
\label{eq:witness_estimate}
\hat{W}_n := c - \frac{1}{n} \sum_{i=1}^{n} S_i.
\end{equation}
It is a point estimate of the average realized witness value~$\braket{W}_n$, which is defined as [recall \cref{eq:avg_witness}]
\begin{equation}
\label{eq:witness_def_app}
\braket{W}_n := \frac{1}{n} \sum_{i=1}^n \Tr[\rho_i W].
\end{equation}

We start with a lemma that establishes the some properties of the function~$F^\circ_{n,\half}$ from Bentkus' inequality, so that it has a well-defined inverse.
\begin{lemma}\label{lemma:Fc-props}
	Let~$F^\circ_{n,\half}(x)$ be defined as in \cref{eq:Fc} for any~$n\in\N$ (with~$\beta=\half$). Then for any~$n\in\N$ the function~$x \mapsto F^\circ_{n,\half}(x)$ has the following properties:
	\begin{enumerate}[label=(\alph*)]
		\item~$F^\circ_{n,\half}(x)$ is a strictly decreasing and continuous in~$x$ on the interval~$[\frac{n}{2}, n]$; and
		\item~$F^\circ_{n,\half}([\frac{n}{2}, n]) \supseteq [\frac{1}{2^n}, \half]$.
	\end{enumerate}
	Hence, for any~$y \in [\frac{1}{2^n}, \half]$, there is a unique~$x \in [\frac{n}{2}, n]$ such that~$F^\circ_{n,\half}(x) = y$.
\end{lemma}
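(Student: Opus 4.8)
The plan is to analyze the two factors appearing in the definition
\[
F^\circ_{n,\tfrac12}(x) = F_{n,\tfrac12}(\lfloor x\rfloor)^{\,1-(x-\lfloor x\rfloor)}\;F_{n,\tfrac12}(\lfloor x\rfloor+1)^{\,x-\lfloor x\rfloor}
\]
and show that the log-linear interpolation inherits strict monotonicity and continuity from the underlying binomial survival function $F_{n,\tfrac12}(k)$. First I would record the elementary facts about the integer-valued function $k\mapsto F_{n,\tfrac12}(k)=\sum_{l=k}^n\binom{n}{l}2^{-n}$: it is strictly decreasing on $k\in\{0,1,\dots,n+1\}$ (each step drops by $\binom{n}{k}2^{-n}>0$ for $0\le k\le n$, and by convention $F_{n,\tfrac12}(n+1)=0$ — this is the one place to be careful, since $\log 0=-\infty$, but it only matters at $x=n$), and that $F_{n,\tfrac12}(\tfrac n2)$ need a small remark: for $x$ in the half-open interval $[k,k+1)$ with $k\ge n/2$ we have $F_{n,\tfrac12}(k)>F_{n,\tfrac12}(k+1)>0$ strictly (strictness of the second needs $k+1\le n$, i.e. $x<n$), so $g(x):=\log F^\circ_{n,\tfrac12}(x)$ is on that interval an affine function of $x$ with strictly negative slope $\log F_{n,\tfrac12}(k+1)-\log F_{n,\tfrac12}(k)$.

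Next I would glue these pieces together. Part (a): $g$ is continuous because at each integer $k$ the left limit of the interpolation equals $F_{n,\tfrac12}(k)$ (plug $x\to k^-$ into the formula for the interval $[k-1,k]$) and the right value also equals $F_{n,\tfrac12}(k)$ (plug $x=k$ into $[k,k+1]$), and on each open interval it is manifestly continuous; hence $F^\circ_{n,\tfrac12}=e^{g}$ is continuous on $[\tfrac n2,n]$. Strict monotonicity follows because $g$ is strictly decreasing on each sub-interval $[k,k+1]$ (slope $<0$) and continuous across the gluing points, so it is strictly decreasing on the whole interval $[\tfrac n2,n]$, and exponentiation preserves this. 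The only subtlety at the right endpoint is that $F_{n,\tfrac12}(n+1)=0$ forces $F^\circ_{n,\tfrac12}(n)=F_{n,\tfrac12}(n)^1\cdot 0^0=F_{n,\tfrac12}(n)=2^{-n}$ by the stated convention $0^0:=1$; I would note this explicitly so the endpoint value is $2^{-n}>0$ and monotonicity/continuity are not spoiled.

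Part (b) is then immediate from (a) together with the two endpoint evaluations: $F^\circ_{n,\tfrac12}(\tfrac n2)=F_{n,\tfrac12}(\lfloor n/2\rfloor)^{1-\{n/2\}}F_{n,\tfrac12}(\lfloor n/2\rfloor+1)^{\{n/2\}}$, and I claim this is $\ge\tfrac12$; for even $n$ it equals $F_{n,\tfrac12}(n/2)=\Pr[X\ge n/2]\ge\tfrac12$ by symmetry of $\mathrm{Binom}(n,\tfrac12)$ about $n/2$, and for odd $n$ the geometric-mean/interpolation value at $x=n/2$ still exceeds $\tfrac12$ by a short symmetry argument (or one can simply note $F^\circ$ is decreasing and check the value at the nearest integer). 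Combined with $F^\circ_{n,\tfrac12}(n)=2^{-n}$, the intermediate value theorem (valid by continuity from (a)) gives $F^\circ_{n,\tfrac12}([\tfrac n2,n])\supseteq[2^{-n},\tfrac12]$, and strict monotonicity yields the uniqueness of the preimage $x$ for each $y\in[2^{-n},\tfrac12]$. The main obstacle I anticipate is purely bookkeeping around the endpoint $x=n$ and the convention $0^0=1$, together with handling the odd-$n$ case of the lower bound $F^\circ_{n,\tfrac12}(\tfrac n2)\ge\tfrac12$ cleanly; neither is deep, but both must be stated carefully for the inverse to be genuinely well-defined on the closed interval $[2^{-n},\tfrac12]$.
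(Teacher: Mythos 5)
Your proposal is correct and follows essentially the same route as the paper: establish piecewise log-affine strict decrease and continuity of the interpolation from the strict positivity and strict monotonicity of $F_{n,\half}(k)$, evaluate the endpoints $F^\circ_{n,\half}(n)=2^{-n}$ (via the $0^0:=1$ convention) and $F^\circ_{n,\half}(\tfrac n2)\geq\tfrac12$ by binomial symmetry (with the odd-$n$ case handled exactly as in your parenthetical, by stepping to the integer $\tfrac{n+1}{2}$ where the survival function equals $\tfrac12$), and conclude by the intermediate value theorem. Your bookkeeping around $x=n$ and the $0^0$ convention is in fact slightly more explicit than the paper's.
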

\begin{proof}
	Recall from \cref{eq:Fc} that~$F^\circ_{n,\half}$ interpolates the survival function~$F_{n,\half}(k) = \Pr[X \geq k]$ of a binomial random variables~$X \sim \Binom(n,\half)$ at non-integer points by the log-linear function
	\begin{align}\label{eq:log linear restated}
	F^\circ_{n,\half}(x) = F_{n,\half}(\lfloor x\rfloor)^{1 - (x - \lfloor x\rfloor)} F_{n,\half}(\lfloor x\rfloor+1)^{x - \lfloor x\rfloor}.
	\end{align}

	\noindent (a) Since~$F_{n,\half}(k)$ is strictly decreasing in~$k$ and strictly positive for all~$k = 0,\dots,n$, and since the logarithm is strictly monotonic and continuous, it is clear that~$F^\circ_{n,\half}$ is strictly decreasing and continuous in~$x$ for all~$x \in [\frac{n}{2},n]$.

	\noindent (b) By the convention that~$0^0 = 1$ in the log-linear interpolation, it follows that~$F^\circ_{n,\half}(n) = F_{n,\half}(n) = \frac{1}{2^n}$.
	Now we show that~$F^\circ_{n,\half}(\frac{n}{2}) \geq \half$. To do so, we use the symmetry of the binomial distribution with parameter half. We observe that
	\begin{equation}
	F_{n,\half}(k) = \sum_{i=k}^{n} \binom{n}{i} \frac{1}{2^n} = \sum_{i=k}^{n} \binom{n}{n-i} \frac{1}{2^n} = \sum_{i=0}^{n-k} \binom{n}{i} \frac{1}{2^n} = 1 - F_{n,\half}(n-k+1).
	\end{equation}
	So, for even~$n$, we find that
	\begin{align}
	2 F^\circ_{n,\half}(\frac{n}{2}) = 2 F_{n,\half}(\frac{n}{2}) = F_{n,\half}(\frac{n}{2}) + 1 - F_{n,\half}(\frac{n}{2}+1) \geq 1
	\end{align}
	since~$F_{n,\half}(\frac{n}{2}+1) > F_{n,\half}(\frac{n}{2})$. And for odd~$n$ is odd, then we find [by property (a)]
	\begin{align}
	2 F^\circ_{n,\half}(\frac{n}{2}) &\geq 2 F^\circ_{n,\half}(\frac{n+1}{2}) = F_{n,\half}(\frac{n+1}{2}) + 1 - F_{n,\half}(n - \frac{n+1}{2} + 1) = 1.
	\end{align}
	Hence in either case~$F^\circ_{n,\half}(\frac{n}{2}) \geq \half$. By property (a) and~$F^\circ_{n,\half}(n) = \frac{1}{2^n}$ statement (b) follows.
\end{proof}
With this lemma in hand, we can state and proof the theorem. The main ingredient in the proof is again \cref{lemma:bentkus} applied to a suitably chosen martingale.

\begin{reptheorem}{thm:CI}
	Let~$W$ be a Hermitian operator of the form of \cref{eq:B6} [not necessarily a witness in the sense of \cref{eq:witness_definition_appendix}].
	Suppose that the experiment is modeled by the Model \Cref{assumption:randomness,assumption:measurements,assumption:sequential} in \cref{app:model}.
	Let~$\hat{W}_n$ denote the average witness estimate as defined in \cref{eq:witness_estimate}.
	Fix the significance level~$\alpha\in[0,1]$.
	If~$\alpha < \frac e{2^n}$, define~$\varepsilon = \ds$, otherwise define~$\varepsilon\in[\gamma,\gamma+\ds]$ as the unique solution to
	\begin{equation}
	\alpha = e F^\circ_{n,\half}\Bigl(\frac{n}{2}(1 + \frac{\varepsilon - \gamma}{\ds})\Bigr). \label{eq:vareps}
	\end{equation}
	Here~$\gamma$ and~$F^\circ_{n,\beta}$ are defined in \cref{eq:app gamma,eq:Fc}, respectively (with~$\beta=\half$ here and~$e \approx 2.72$).
	Then,
	\begin{enumerate}[label=(\alph*)]
		\item~$\Pr[\lvert\braket{W}_n  -  \hat{W}_n\rvert \leq \ds] = 1$;
		\item~$\Pr[\braket{W}_n  -  \hat{W}_n \leq \varepsilon] \geq 1 - \alpha$ and  $\Pr[\braket{W}_n  -  \hat{W}_n \geq -\varepsilon] \geq 1-\alpha$;
		\item~$\Pr[\lvert\braket{W}_n  -  \hat{W}_n\rvert \leq \varepsilon] \geq 1 - 2\alpha$.
	\end{enumerate}
	That is, if~$\hat{w}_n$ is the average witness estimate after~$n$ rounds, then~$\Ic(\hat{w}_n) := [\hat{w}_n - \varepsilon, \hat{w}_n + \varepsilon]$ is a~$(1-2\alpha)$ two-sided confidence interval and~$\Jc(\hat{w}_n) := [\hat{w}_n -\ds , \hat{w}_n + \epsilon]$ is a~$(1 - \alpha)$ one-sided confidence interval for the average witness value~$\braket{W}_n$ as defined in \cref{eq:witness_def_app}.
\end{reptheorem}
\begin{proof}
	(a) We show that~$-\ds \leq \braket{W}_n - \hat{W}_n \leq \ds$ holds almost surely by
	\begin{align}
	\braket{W}_n  -  \hat{W}_n
	&=  \frac{1}{n} \sum_{i=1}^{n} \Bigl(\Tr[\rho_i W] - c + S_i \Bigl) \\
	&=  \frac{1}{n} \sum_{i=1}^{n} \Bigl( S_i - \sum_x \sum_{\mathbf{a}} p_x \Tr\Bigl[\bigotimes_{j=1}^m \Pi_{a_j}^{(j),x} \rho_i  \Bigr] s(x, \mathbf{a})  \Bigr)  \\
	&\leq \frac{1}{n} \sum_{i=1}^{n} \Bigl( \smax - \sum_x \sum_{\mathbf{a}} p_x \Tr\Bigl[\bigotimes_{j=1}^m \Pi_{a_j}^{(j),x} \rho_i  \Bigr] \smin  \Bigr) = \ds,
	\end{align}
	since $p_x, \Tr\Bigl[\bigotimes_{j=1}^m \Pi_{a_j}^{(j),x} \rho_i  \Bigr] \geq 0$ and sum to one. Similarly~$\braket{W}_n - \hat{W}_n \geq -\ds$,
	so that
	$
	\Pr[\lvert\braket{W}_n  -  \hat{W}_n\rvert \leq \ds] = 1.
	$

	(b) For~$\alpha < \frac e{2^n}$, both statements in~(b) follow immediately from~(a), since then~$\varepsilon = \ds$.
	So from now on, assume that~$\alpha \in [\frac e{2^n}, 1]$.
	First we show that~$\varepsilon$ is well-defined.
	This follows from \cref{lemma:Fc-props}, which states that for all $y = \frac{\alpha}{e} \in [\frac{1}{2^n}, \frac1e] \subset  [\frac{1}{2^n}, \frac{1}{2}]$, there exists a unique $x = \frac{n}{2}(1+ \frac{\varepsilon - \gamma}{\ds}) \in [\frac{n}{2}, n]$ such that~$F^\circ_{n,\half}(\frac{n}{2}(1+ \frac{\varepsilon - \gamma}{\ds})) = \frac{\alpha}{e}$. Hence for all $\alpha \in [\frac e{2^n}, 1]$ there is a unique $\varepsilon \in [\gamma, \gamma + \ds]$ such that \cref{eq:vareps} holds.

	Now, we construct a suitable martingale sequence~$Z_i$ so that we can apply \cref{lemma:bentkus} to~$Z_i$ and~$-Z_i$ (to get both statements). For~$i=0,\dots,n$ we define
	\begin{equation}
	Z_i := \half\sum_{l=1}^{i} \frac{S_l - \Ebb[S_l | H_l, \rho_l]}{\ds}
	\end{equation}
	This is a martingale sequence with respect to the sequence~$(H_1, \rho_1), \dots, (H_n, \rho_n)$, since
	\begin{align}
	\Ebb[Z_i | H_1,\rho_1,\dots,H_i, \rho_i] &= \frac{1}{2\ds} \sum_{l=1}^{i} \Ebb\Bigl[ S_l - \Ebb[S_l | H_l, \rho_l] \Big| H_1,\rho_1,\dots,H_i, \rho_i\Bigr] \\
  &= \frac{1}{2\ds} \sum_{l=1}^{i} \left( \Ebb\Bigl[ S_l  \Big| H_1,\rho_1,\dots,H_i, \rho_i\Bigr] - \Ebb[S_l | H_l, \rho_l] \right) \label{eq:E13} \\
	&= \frac{1}{2\ds} \left( \Ebb\Bigl[ S_i  \Big| H_1,\rho_1,\dots,H_i, \rho_i\Bigr] - \Ebb[S_i | H_i, \rho_i] \right) \\
  &\qquad + \frac{1}{2\ds} \sum_{l=1}^{i-1} \left( \Ebb\Bigl[ S_l  \Big| H_1,\rho_1,\dots,H_i, \rho_i\Bigr] - \Ebb[S_l | H_l, \rho_l] \right) \\
	&=  \frac{1}{2\ds} \Bigl(\Ebb[S_i |  H_i, \rho_i]  - \Ebb[S_i | H_i, \rho_i] \Bigr)  + \frac{1}{2\ds} \sum_{l=1}^{i-1} \left( S_l - \Ebb[S_l | H_l, \rho_l] \right) \label{eq:D10} \\
	&= 0 + Z_{i-1}.
	\end{align}
	In \cref{eq:E13} we used that $\Ebb[\Ebb[S_l | H_l, \rho_l] | H_1,\rho_1,\dots,H_i, \rho_i] = \Ebb[S_l | H_l, \rho_l]$ for all $l=1,...,i$.
	\Cref{eq:D10} holds since~$\Ebb[S_i | H_1,\rho_1,\dots,H_i,\rho_i] = \Ebb[S_i |  H_i, \rho_i]$ and~$\Ebb[S_l | H_1,\rho_1,\dots,H_i,\rho_i] = S_l$ for all~$l=1,\dots,i-1$ by \cref{assumption:inclusion,assumption:history}.
	Moreover,~$Z_i$ is bounded difference, where the difference is bounded by
	\begin{equation}
	| Z_i - Z_{i-1} | = \half \frac{| S_i  - \Ebb[S_i | H_i, \rho_i] |}{\ds} \leq \half \frac{\ds}{\ds} = \half,
	\end{equation}
	since~$\smin \leq S_i \leq \smax$ for all~$i=1,\dots,n$. Hence \cref{lemma:bentkus} applies with~$\beta = \half$ (to both~$Z_i$ and~$-Z_i$), which yields
	\begin{align}
	\Pr[ Z_n \geq x - \frac{n}{2}] \leq e F^\circ_{n,\half}(x) \qquad \mbox{and} \qquad
	\Pr[ -Z_n \geq x - \frac{n}{2}] \leq e F^\circ_{n,\half}(x). \label{eq:bound_Zn}
	\end{align}
	Next, we invoke \cref{cor:two-sided} to relate~$\braket{W}_n  -  \hat{W}_n$ to~$Z_n$. We find that
	\begin{align}
	\Bigl\lvert \braket{W}_n  -  \hat{W}_n - \frac{2\ds}{n} Z_n \Bigr\rvert &=  \Bigl\lvert \Bigl( \frac{1}{n} \sum_{i=1}^{n}  \Tr[W \rho_i] \Bigr) - \Bigl(c - \frac{1}{n} \sum_{i=1}^n S_i \Bigr) - \Bigl( \frac{1}{n} \sum_{i=1}^{n}  S_i - \Ebb[S_i | H_i, \rho_i] \Bigr)  \Bigr\rvert \\
	&=  \Bigl\lvert \frac{1}{n} \sum_{i=1}^{n} \Tr[W \rho_i] - (c -  \Ebb[S_i | H_i, \rho_i])  \Bigr\rvert \\
	&\leq \frac{1}{n} \sum_{i=1}^{n} \lvert \Tr[W \rho_i] - (c -  \Ebb[S_i | H_i, \rho_i]) \rvert \leq \gamma \label{eq:use_gamma}
	\end{align}
	almost surely. In \cref{eq:use_gamma}, we applied \cref{cor:two-sided} for each~$i=1,\dots,n$. Hence by using \cref{eq:use_gamma,eq:bound_Zn}, we find that
	\begin{align}
	\Pr[ \braket{W}_n  -  \hat{W}_n \geq \varepsilon ] &\leq \Pr[ \gamma + \frac{2\ds}{n} Z_n \geq \varepsilon ] \\
	&= \Pr [Z_n \geq \frac{n(\varepsilon-\gamma)}{2\ds}] \\
	&\leq e F^\circ_{n,\half}\Bigl(\frac{n}{2}(1 + \frac{\varepsilon - \gamma}{\ds})\Bigr) = \alpha, \label{eq:E28}
	\end{align}
	by evaluating \cref{eq:bound_Zn} at the appropriate value of~$x$ and using the implicit definition of~$\varepsilon$ in \cref{eq:vareps}. Similarly,
	\begin{align}
	\Pr[\hat{W}_n - \braket{W}_n \geq \varepsilon]
	&\leq \Pr[ \gamma - \frac{2\ds}{n} Z_n \geq \varepsilon ] \\
	&= \Pr [-Z_n \geq \frac{n(\varepsilon-\gamma)}{2\ds}] \\
	&\leq e F^\circ_{n,\half}\Bigl(\frac{n}{2}(1 + \frac{\varepsilon - \gamma}{\ds})\Bigr) = \alpha. \label{eq:other_side}
	\end{align}
	\Cref{eq:E28,eq:other_side} imply the one-sided intervals
	\begin{equation}
	\Pr[\braket{W}_n - \hat{W}_n \leq \varepsilon] \geq 1-\alpha,
	\qquad \mbox{and} \qquad
	\Pr[\braket{W}_n - \hat{W}_n \geq -\varepsilon] \geq 1-\alpha,
	\end{equation}
	respectively, as claimed.

	(c) Combining \cref{eq:E28,eq:other_side} with the union bound yields
	\begin{equation}
	\Pr[\lvert\braket{W}_n  -  \hat{W}_n\rvert \geq \varepsilon] \leq \Pr[ \braket{W}_n  -  \hat{W}_n \geq \varepsilon ] + \Pr[ \hat{W}_n - \braket{W}_n    \geq \varepsilon ] \leq 2\alpha.
	\end{equation}
	This implies that
	\begin{equation}
	\Pr[ \lvert\braket{W}_n  -  \hat{W}_n\rvert \leq \varepsilon ] \geq 1 - 2\alpha,
	\end{equation}
	as claimed.
\end{proof}

\end{document}